\documentclass[11pt,oneside,letterpaper]{article}
%%%
%%%%%%%%%%%%%%%%%%%%%%%%%%%%%%%%%%%%%%%%%%%%%%%%%%%%%%%%%%%%%%%%%%%%%%%%%%%%%%%%%%%%%%%%%%%%%%%%%%%%%%%%%%%%%%%%%%%%%%%%%%%%%%%%%%%%%%%%%%%%%%%%%%%%%%%%%%%%%%%%%%%%%%%%%%%%%%%%%%%%%%%%%%%%%%%%%%%%%%%%%%%%%%%%%%%%%%%%%%%%%%%%%%%%%%%%%%%%%%%%%%%%%%%%%%%%
\usepackage{eurosym}
\usepackage{graphicx,amsmath,amsfonts}
\usepackage{setspace}
\usepackage{amssymb}
\usepackage{dsfont}
\usepackage{amsmath}
\usepackage{amsfonts}
\usepackage{ifthen}
\usepackage{mathtools}
\usepackage{graphicx}
\usepackage{enumerate}
\usepackage{color}
\usepackage{framed}
\usepackage[margin=1in]{geometry}
\usepackage{natbib}
\usepackage[dvipsnames]{xcolor}
\usepackage[colorlinks=true, urlcolor=Mahogany, linkcolor=Mahogany, citecolor=Mahogany]{hyperref}
\usepackage{fp}
\usepackage[capitalize]{cleveref}
\usepackage{amsthm}
\usepackage{caption}
\usepackage{subcaption}
\usepackage{palatino}
\usepackage{cuted,balance}
\usepackage{amssymb}
\usepackage{soul}
\usepackage{graphicx,amsmath,amsfonts}
\usepackage{tikz}
\usetikzlibrary{positioning}

\setcounter{MaxMatrixCols}{10}
%TCIDATA{OutputFilter=LATEX.DLL}
%TCIDATA{Version=5.50.0.2960}
%TCIDATA{Codepage=65001}
%TCIDATA{<META NAME="SaveForMode" CONTENT="1">}
%TCIDATA{BibliographyScheme=BibTeX}
%TCIDATA{LastRevised=Monday, January 16, 2023 17:00:49}
%TCIDATA{<META NAME="GraphicsSave" CONTENT="32">}

\DeclareMathOperator*{\argmax}{arg\,max}
\bibliographystyle{abbrvnat}

\newtheorem{assumption}{Assumption}

\newtheorem{corollary}{Corollary}
\newtheorem{definition}{Definition}
\newtheorem{proposition}{Proposition}
\newtheorem{theorem}{Theorem}
\newtheorem{lemma}{Lemma}
\newtheorem{claim}{Claim}

\newtheorem{example}{Example}
\newtheorem{fact}{Fact}

\allowdisplaybreaks

\usepackage{bm}
\usepackage{bbm}
\let\originaleqref\eqref
\renewcommand{\eqref}{\originaleqref}
\vskip 2pc \onehalfspace
\begin{document}

\title{The Limits of Price Discrimination Under Privacy Constraints}
\author{Alireza Fallah \thanks{University of California, Berkeley. afallah@berkeley.edu} \and 
Michael I. Jordan \thanks{University of California, Berkeley. jordan@cs.berkeley.edu} \and Ali Makhdoumi \thanks{Fuqua School of Business, Duke University. ali.makhdoumi@duke.edu} \and Azarakhsh Malekian \thanks{Rotman School of Management, University of Toronto. azarakhsh.malekian@rotman.utoronto.ca}}
\date{June 2024}
\maketitle

\sloppy

\begin{abstract}
We consider a producer's problem of selling a product to a continuum of privacy-conscious consumers, where the producer can implement third-degree price discrimination, offering different prices to different market segments. {Our privacy mechanism provides a degree of protection by probabilistically masking each market segment.  We establish that the resultant set of all consumer-producer utilities forms a convex polygon, characterized explicitly as a linear mapping of a certain high-dimensional convex polytope into $\mathbb{R}^2$.} This characterization enables us to investigate the impact of the privacy mechanism on both producer and consumer utilities. In particular, we establish that the privacy constraint always hurts the producer by reducing both the maximum and minimum utility achievable. From the consumer's perspective, although the privacy mechanism ensures an increase in the minimum utility compared to the non-private scenario, interestingly, it may reduce the maximum utility. Finally, we demonstrate that increasing the privacy level does not necessarily intensify these effects. For instance, the maximum utility for the producer or the minimum utility for the consumer may exhibit nonmonotonic behavior in response to an increase of the privacy level.
\end{abstract}

\section{Introduction}
Price discrimination has long been a topic of considerable interest in economics and computer science. At its core, price discrimination involves a seller's ability to charge different prices to different consumers for the same product or service. This strategy, underpinned by the seller's understanding of consumer preferences, can lead to maximized profits and a more efficient allocation of resources. While economically rational from a business perspective, price discrimination raises complex questions regarding fairness, market power, and consumer welfare.

Simultaneously, consumer privacy has become a growing concern in the digital age. With the increasing ability of companies to collect, analyze, and utilize vast amounts of personal data, the protection of consumer privacy has become a major issue for consumers, companies, and regulators. Privacy concerns are not just limited to the safeguarding of personal information but also encompass how this information is used in market practices, including pricing strategies. 

Thus, the intersection of price discrimination and consumer privacy is a timely and intriguing topic. One might initially assume that these two concepts are at odds. Price discrimination, by its very nature, relies on detailed information about consumers to tailor prices effectively. In contrast, an emphasis on consumer privacy could limit the ability of sellers to implement price discrimination strategies. However, consumers can benefit from price discrimination, and thus, this intersection warrants a more nuanced examination. Understanding how privacy constraints affect the ability of firms to engage in price discrimination helps in understanding existing market dynamics and also aids in shaping policies that protect consumer interests.

%\subsection{Main results}
Our study of the intersection of price discrimination and privacy builds on prior work of \cite{bergemann2015limits}. Let us first describe their model and then explain how things change if we consider consumers' privacy concerns. Consider a monopolist who is selling a product to a continuum of consumers. If the producer knows the values of consumers, then they could engage in first-degree price discrimination to maximize their utility, while the utility of consumers would be reduced to zero. On the other hand, if the producer has no information besides a prior distribution about the value of consumers, then they can only enact uniform pricing. This leads to two pairs of producer utilities and consumer utilities. \cite{bergemann2015limits} delineate the set of all possible pairs by using different \emph{segmentations}. Let us clarify this approach with an example. Imagine a producer aiming to set a product's price for its customers across the United States. In this context, the \textit{aggregated market} is defined as the entire US market, meaning the distribution of consumers’ valuations nationwide. Each zip code within the US represents its own market, which is basically the distribution of the consumers’ values in that zip code.  Segmentation, in this case, refers to the division of the aggregated US market into these specific markets. Generally speaking, a segmentation involves breaking down the aggregated market into smaller markets, under the condition that these smaller markets must add up to the aggregated market. \cite{bergemann2015limits}  establish that the set of all possible pairs of consumer and producer utilities is a triangle as depicted in Figure \ref{fig:S_bergemann}. The point $Q$ corresponds to the first-degree price discrimination, and the line $TR$ represents all possible consumer utilities when the producer utility is at its minimum, equating to the utility derived from the uniform pricing.

%%%%%%%%
\begin{figure}
  \centering
  \includegraphics[width=.5\linewidth]{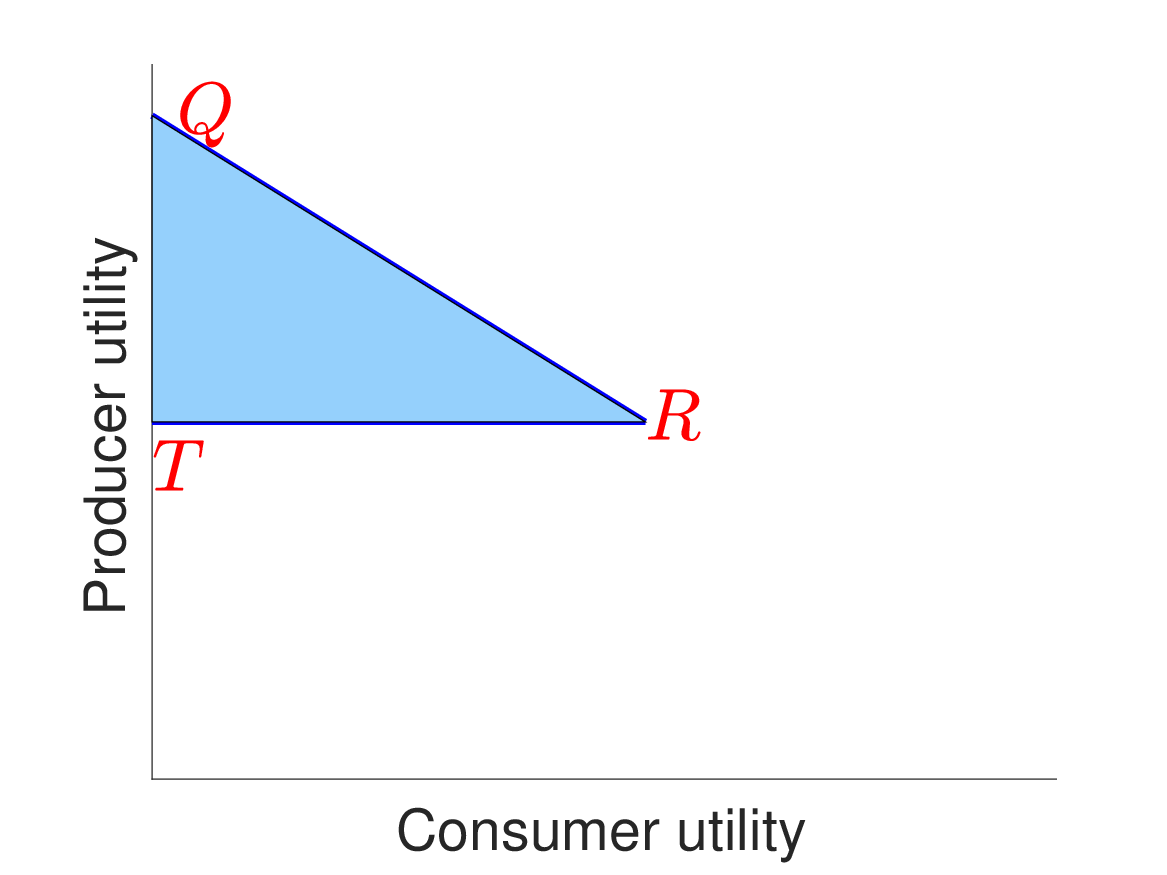}
  \caption{The surplus triangle of \cite{bergemann2015limits}}
  \label{fig:S_bergemann}
\end{figure}
%%%%%%%

For a given market, knowing the underlying distribution of consumers' values reveals information about the consumers in that market. Take, for instance, the consumers residing in a certain zip code. Knowing the distribution of the consumers in that zip code induces privacy costs for a multitude of reasons, including potential profiling, harmful targeting, potential manipulation, and unfair business practices. { In this paper, we aim to explore how limiting information disclosure about the market affects the potential combinations of consumer and seller utilities. We conceptualize privacy leakage as the extent to which revealing a message about the value distribution within a market segment reveals the actual distribution of values of consumers in that segment. This measure quantifies the change in our understanding of a segment’s distribution following the receipt of a specific message.} More precisely, we require that the distribution of consumer values in each market must go through a \emph{privacy mechanism} that masks the true market and outputs a uniformly sampled market with some probability $\beta$ and outputs the true market with the complementary probability $1-\beta$. Our privacy mechanism is motivated by a market-level \emph{randomized response mechanism}---where randomized response is a well-known technique in the privacy literature, commonly utilized for the privatization of discrete-valued data \citep[see, e.g., ][]{warner1965randomized, greenberg1969unrelated}. { Additionally, as we will highlight in the next section when presenting our model, our privacy mechanism is methodology used in practice in technology companies and government agencies when aggregate statistics such as histograms are published in a 
manner that provides privacy guarantees~\cite{desfontaines2021list}. These methods ensure that a function reported over users' data does not significantly depend on any individual user's data, thereby safeguarding personal privacy \cite{dwork2014algorithmic}. }

{ In this paper, we operate within a Bayesian setting where we assume that the producer holds a prior for every market and updates it after observing the output of the privacy mechanism. To avoid any unaccounted privacy loss, it is important to choose the least informative prior. Intuitively, under the principle of indifference, a reasonable choice of prior belief for the producer would be the uniform distribution over the entire space of market distributions. Indeed, the question of selecting an \textit{uninformative prior} has been the subject of extensive literature in  Bayesian analysis, where \textit{reference priors} are a leading framework \cite{bernardo1979reference}. As we will demonstrate, the uniform distribution is the reference prior in our setting.
Given this Bayesian framework}, we ask the following question:

\begin{quote}
\textit{Given the privacy constraint, what set of consumer and producer utility pairs can be attained through all possible market segmentations?}
\end{quote}
%{\color{red} We can add an example, stating that here we will better explain our mechanism through an example for the case of $\beta=0$ and $\beta=1$ when there are only two possible values. We should add a sentence here clarifying that the even knowing the underlying distribution may violate the privacy constraint. When explaining the example, we should clarify what we mean by the uniform prior. We should also clearly explain the welfare implications here.}
{ Notice that by increasing $\beta$, we strengthen the privacy guarantees. The case of $\beta=1$ corresponds to a fully private setting in which the producer has no information about each segment or the aggregated market. It is worth emphasizing that in our setting, unlike in \cite{bergemann2015limits}, the producer does not know the aggregated market. This assumption is crucial when studying privacy loss, as having additional knowledge about the aggregated market would constitute unaccounted information leakage. Consider, for example, a setup with two possible values where each consumer’s value is either low or high. If the aggregated market indicates that everyone's value is high with a probability of one, then the producer would know every segment's market, even if $\beta=1$ (which is supposed to deliver a full privacy guarantee). Lastly, the case of $\beta=0$ reduces our setting to the non-private case, where the producer can see all segments without any perturbation, and hence can deduce the aggregated market as well.}

Both our analysis and our results are different in significant ways from those of the non-private case. {In particular, since the aggregated market is no longer known to the producer, there is no guarantee that the producer's surplus is at least as high as profits
under the uniform monopoly price (i.e., at or above the line $TR$ in Figure \ref{fig:S_bergemann}).}
Moreover, our results allow us to resolve the question of whether the privacy constraint always benefits the consumer. In more detail, our main results and insights are as follows.

First, in the setting of two consumer values, we are able to fully characterize the space of all consumer and producer utilities. Moreover, we are able to identify nuances in this setting that we later explore in a more general setting. In particular, we establish that the set of all consumer and producer utilities is a triangle, similar to the non-private case, but the triangle takes different shapes depending on the consumer values, the aggregate market distribution, and the privacy parameter. Figure \ref{fig:S_private_K_2} illustrates our characterization for two cases that display qualitative differences compared to the setting without privacy constraints. In this figure, the set of possible producer and consumer utilities under the privacy mechanism is depicted by the blue triangle $ABC$. The triangle $TQR$, marked by red dashed lines, represents the non-private case for the corresponding set of values and the aggregated market. We see that the privacy mechanism impacts the limits of price discrimination through a \emph{direct effect} and an \emph{indirect effect}. Let us describe these effects and explain how they shape the set of all possible consumer and producer utilities.

The direct effect arises because the producer observes a random market instead of the true market, with probability $\beta$, as a consequence of the privacy mechanism. This means that with probability $\beta$, the consumer and producer utilities are independent of the segment's markets. The direct effect implies a shift in the space of consumer and producer utilities. Considering Figure \ref{fig:S_private_K_2}, the direct effect implies that the minimum consumer utility, as opposed to the non-private case, is no longer zero; instead, there is a non-zero lower bound on the consumer utility (line AB versus line QT). It also implies that the sum of consumer and producer utilities is smaller than that of the non-private case (line BC versus line QR).

To understand the indirect effect, notice that with probability $1-\beta$, the producer observes the true market. However, the producer does not know whether or not this observed market is the truth and, therefore, adopts a different pricing strategy compared to the non-private case. The indirect effect refers to this change in the producer pricing strategy. Notably, the producer's pricing strategy is optimal from their perspective, taking into account the privacy mechanism, but it does not correspond to the optimal pricing strategy in the non-private case. To better illustrate the impact of this indirect effect, let us revisit the non-private case (as depicted by the triangle $TQR$ in Figure \ref{fig:S_private_K_2}). Imagine a segmentation for which the producer is indifferent between several prices for certain segments. In such a case, although different price choices do not alter the producer's utility, they can lead to different consumer utility. For example, every point along the line $TR$ represents the same level of utility for the producer. However, selecting various prices can minimize or maximize the consumer utilities (corresponding to points $T$ and $R$, respectively). In the private setting, however, various pricing options that make the producer indifferent ex-ante (i.e., yielding the same expected utility when considering the privacy mechanism) might actually result in different ex-post producer utilities. Consequently, opting for different prices alters both the consumer and producer utilities. This is the reason why, with privacy constraints, the line $AC$ is no longer parallel to the x-axis. This difference is more pronounced as we consider $K>2$ values (see Figure \ref{fig:S_private_K}) where the lower limit of the set of all possible consumer and producer utilities, in sharp contrast to the non-private case, is not a straight line.

The above effects and the corresponding set of all possible consumer and producer utilities are visualized in Figure \ref{fig:S_private_K_2}: the triangle $ABD$ represents the non-private triangle $TQR$ after undergoing the scaling and shifting attributable to the direct impact of privacy. The discrepancy between triangles $ABC$ and $ABD$ thus precisely indicates the indirect effect that arises from the change in the producer's pricing policy due to the privacy mechanism.

\begin{figure}
\centering
\begin{subfigure}{.5\textwidth}
  \centering
  \includegraphics[width=1.1\linewidth]{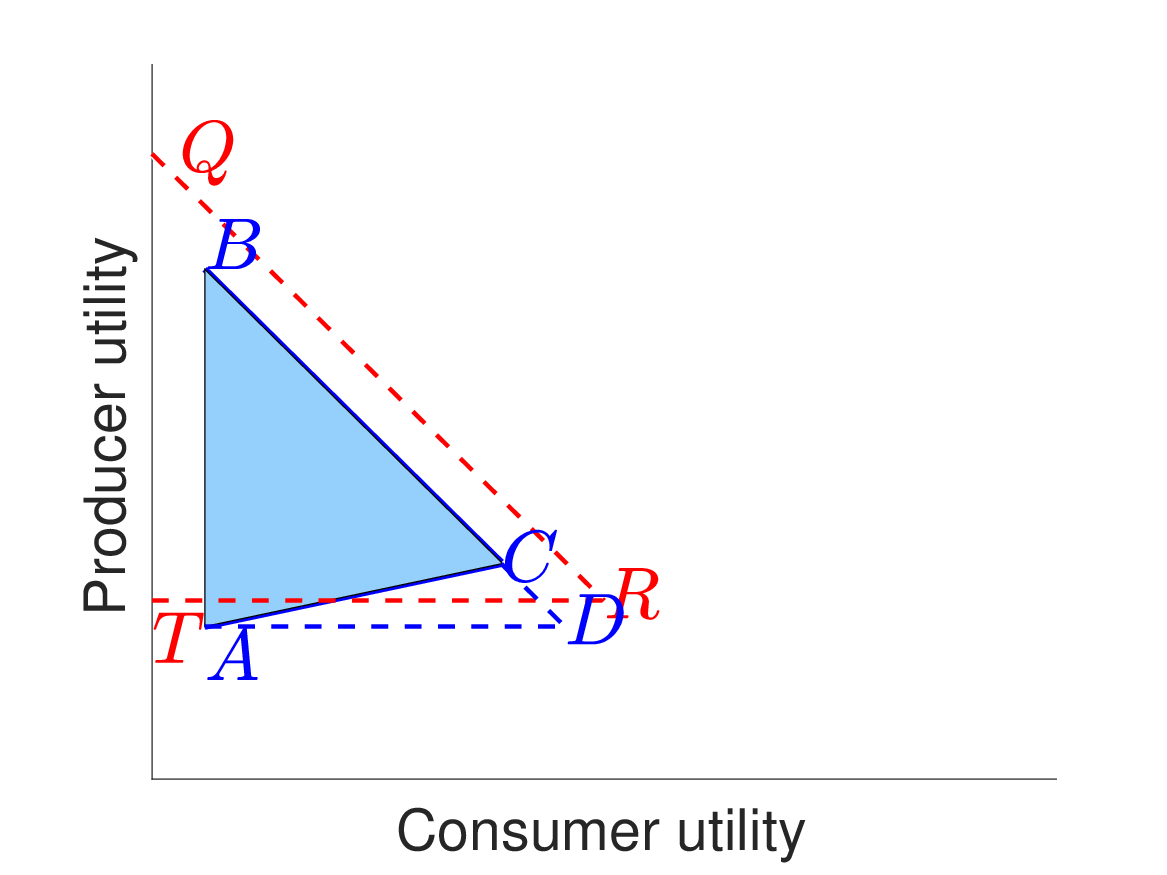}
  %\caption{}
  %\label{fig:S_Thm1_1}
\end{subfigure}%%%%%%%
\begin{subfigure}{.5\textwidth}
  \centering
  \includegraphics[width=1.1\linewidth]{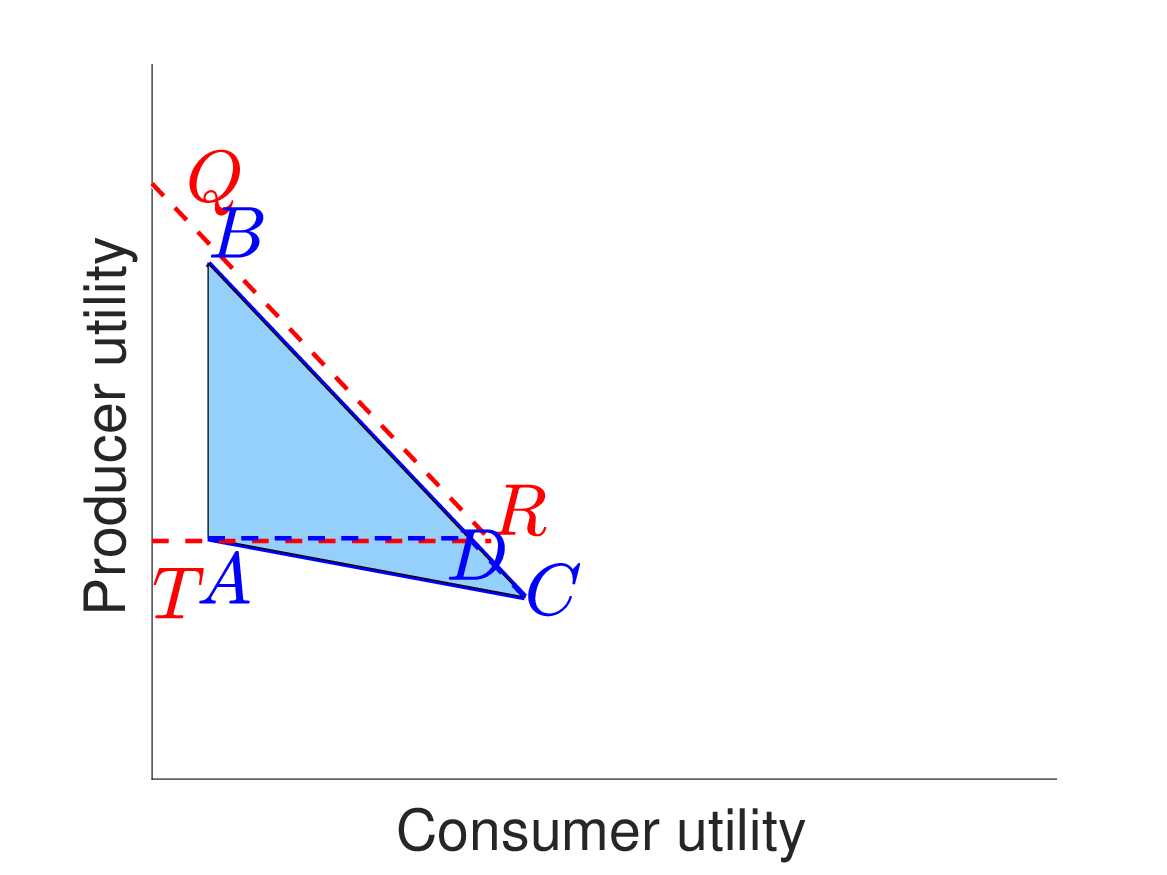}
  %\caption{Here, }
  %\label{fig:S_Thm1_2.2}
\end{subfigure}%%%%%%%
\caption{Illustration of our characterization of the set of all consumer and producer utilities under a privacy mechanism (the triangle $ABC$) with a comparison to the setting without privacy (the triangle $TQR$). Both panels correspond to a setting with two consumer values: high and low. The left panel corresponds to a setting where high and low consumer values are far apart. The right panel corresponds to a setting where high and low consumer values are close, and the privacy parameter is small.}
\label{fig:S_private_K_2}
\end{figure}

Second, our characterization yields the following insights. 

\emph{Insight 1:} Imposing the privacy constraint hurts the producer by decreasing both its minimum and maximum utility across all segmentations. This is intuitive because after adopting the privacy mechanism, the producer knows less about the consumer's values and, therefore, cannot extract as much surplus as in the non-private case.

\emph{Insight 2:} Imposing the privacy constraint helps the consumers by increasing their minimum utility. However, interestingly, it can hurt consumers by decreasing their maximum utility (compare, for example, the consumer utility in points $C$ and $R$ in the two panels of Figure \ref{fig:S_private_K_2}). This first phenomenon aligns with the direct impact of the privacy mechanism, as previously elaborated. Because the producer does not always observe the true market, they may sometimes choose a suboptimal price, resulting in a positive expected utility for consumers. The second phenomenon is more nuanced. We observe that the introduction of a privacy mechanism can influence the producer's market pricing strategy, leading to more conservative or riskier decisions depending on the consumers' values. This effect arises because the privacy mechanism reduces the informativeness of the market observed by the producer. On one hand, in cases where there is a significant gap between high and low consumer values, the producer finds the risk of setting higher prices justified. This shift towards riskier pricing strategies is the indirect impact of privacy we discussed earlier. Under such circumstances, if the actual market includes many low-value consumers, selecting higher prices could detrimentally affect consumer utility. On the other hand, when consumer values are relatively similar, the producer might opt for a lower, safer pricing strategy, avoiding ``unnecessary" risks. Now, if the true market is composed largely of high-value consumers, this conservative approach can lead to increased utility for consumers compared to the non-private case, thereby increasing maximum consumer utility.

\emph{Insight 3:} As we discussed above, imposing a privacy constraint decreases both the minimum and maximum producer utility while it increases the minimum consumer utility. One may conjecture that these changes amplify as the privacy parameter increases. However, interestingly, we prove that an increase in the privacy parameter $\beta$ does not necessarily decrease the producer's maximum utility and, similarly, does not necessarily increase the minimum consumer utility. Put differently, while imposing the privacy constraint always reduces the maximum producer utility, more privacy does not necessarily lead to a more reduction in this utility. Similarly, although the privacy mechanism ensures a non-zero minimum utility for the consumer, this minimum does not always increase with a higher privacy parameter. The intuition behind these nuanced non-monotonicities is similar to Insight 2, discussed above.

Third, we establish that our main insights carry over to the general setting with $K>2$ consumer values. Notably, the set of all consumer and producer utilities pairs is no longer a triangle, in sharp contrast to that of the non-private case. Instead, it is a convex polygon (see Figure \ref{fig:S_private_K} illustration for one example and how it differs from the non-private case). Similar to the case of $K=2$, we identify that the privacy mechanism introduces both direct and indirect effects on the set of possible consumer and producer utilities. The direct impact implies a combination of scaling and shifting, transforming the triangle $TQR$ into triangle $ABD$ as depicted in Figure \ref{fig:S_private_K}. The indirect effect of privacy, however, changes the shape of the utility set from a triangle to a convex polygon. More precisely, we illustrate that this set can be represented as a linear mapping of a convex polytope in $\mathbb{R}^{K^2}$ into a 2-dimensional space. This representation enables us to extend all the insights above from the case with $K=2$ to the general case with $K\geq 2$. Our analysis also provides the construction of a segmentation to achieve any feasible point for consumer and producer utilities.

\begin{figure}
  \centering
  \includegraphics[width=.5\linewidth]{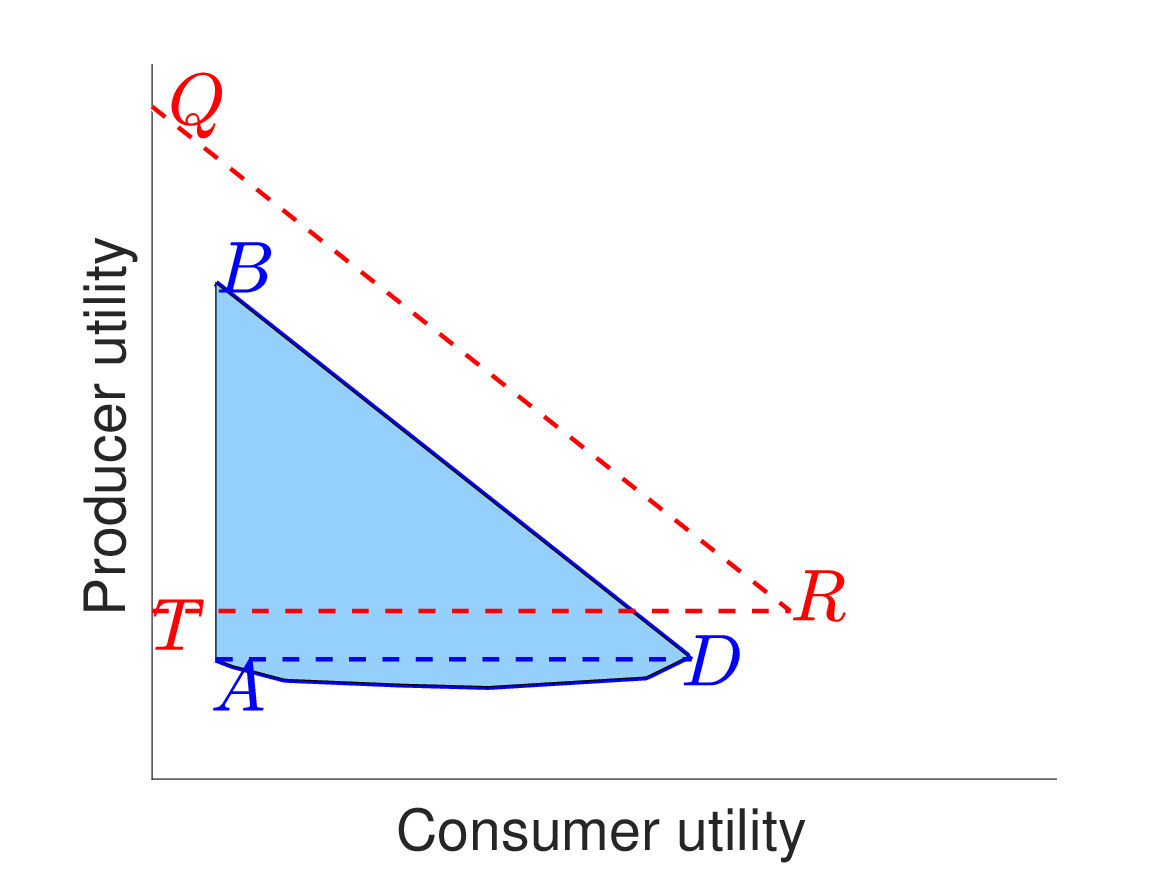}
  \caption{Illustration of the set of possible consumer and producer utilities (the solid blue area) and its comparison with the non-private case (the triangle $TQR$) for an example with $K=5$ values.}
  \label{fig:S_private_K}
\end{figure}%%%%%%%

Furthermore, our characterization allows us to identify several key properties of the polygon that delineate the limits of price discrimination under privacy constraints. One that we highlight here is regarding the impact of privacy on first-degree price discrimination. In the non-private scenario, point $Q$ in Figure \ref{fig:S_private_K} represents perfect discrimination utilities, where each segment contains consumers of identical value. Now, let us consider the same segmentation and see how the privacy mechanism impacts the producer's utility (i.e., point B in Figure \ref{fig:S_private_K}). First, notice that because of the direct impact of the privacy mechanism (that masks segments with probability $\beta$), the producer's utility decreases to another point. The question arises: does the indirect effect of privacy further decrease the producer's utility? In other words, the question is whether the privacy mechanism can influence the producer's pricing strategy to such an extent that, even when they observe a market comprising consumers with similar values, they set a different price for that market. Interestingly, we show that this scenario is indeed possible. We specifically identify a threshold for the privacy parameter $\beta$. Once this threshold is surpassed, it prompts the producer to avoid using certain values for market pricing, as other values yield higher expected utility. Below this threshold, however, the indirect effect of privacy does not further decrease the producer's utility.

From a technical point of view, to characterize the set of all possible consumer and producer utilities under privacy constraints, we need novel techniques and analysis compared to the existing ones in the literature. Notably, the analysis in \cite{bergemann2015limits} relies heavily on the premise that producer utility is bounded below by optimal uniform pricing. This assumption, however, does not hold under privacy constraints since the aggregated market remains undisclosed to the producer. Furthermore, their argument is built on characterizing extremal markets—markets wherein the producer is indifferent between choosing any value within the support of the distribution of the consumer values. In contrast, our argument employs a \emph{merging technique}, effectively simplifying the problem to an examination of segmentations where, for each value, no more than one segment is priced at that value. This approach not only makes our analysis applicable to the private case but also builds a framework that easily extends to the general setting with multiple consumer values.

\subsection{Related work}
Our paper relates to the literature on price discrimination. Earlier works on this topic include  \cite{robinson1969economics}, \cite{schmalensee1981output}, \cite{varian1985price}, \cite{aguirre2010monopoly},  \cite{cowan2016welfare}. More recent papers on this topic include 
\cite{roesler2017buyer} who characterize the buyer-optimal signaling, 
\cite{bergemann2015limits} who prove that it is possible to achieve any total surplus and its division between consumers and the producer through segmentation, \cite{haghpapanah2019consumer} who characterize the consumer optimal segmentation, and
\cite{elliott2021market} who study how a platform's information design can enable the division of surplus between the consumer and the seller.

The closest paper to ours is \cite{bergemann2015limits}, which characterizes the set of all possible consumer and producer utilities achievable by segmentations. We build on the setting of this paper by asking what happens if we impose privacy constraints. As discussed above, this fundamentally changes the results and also requires different analyses. This work has motivated several follow-up works that consider either extensions or variations of it. For instance, a similar problem in multiproduct settings has been studied in \cite{haghpanah2021pure}, \cite{haghpanah2022limits}, and \cite{haghpanah2023pareto}.  Consumer profiling via information design has been studied in \cite{fainmesser2023consumer}. Robust price discrimination has been studied in \cite{arieli2024robust}. A unified approach to second and third degree price discrimination has been studied in \cite{bergemann2024unified}. Extension to randomized auctions (possibly over a menu of large size) has been studied in \cite{ko2022optimal}, and price discrimination with fairness considerations has been studied in \cite{banerjee2024fair}. To the best of our knowledge, the limits of price discrimination under privacy constraints have not been studied in the literature.\footnote{Segmentation can be viewed as a Bayesian persuasion problem as introduced by \cite{kamenica2011bayesian}. We refer to \cite{bergemann2015limits} for a detailed description of this connection.}

Our paper also relates to the literature on privacy in markets and platforms. \cite{acquisti2016economics} survey the empirical and theoretical works that study the economic value and consequences of protecting and disclosing personal information on consumers and producers. More recently, \cite{posner2018radical} explore the consequences of consumers sharing data with online platforms, and \cite{jones2020nonrivalry} study the consequences of data non-rivalry (i.e., consumer data can be used by many producers simultaneously) for the consumers and the producers. \cite{acemoglu2019too},  \cite{bergemann2020economics}, and \cite{ichihashi2020online} study the privacy consequences of data externality, whereby a
user's data reveals information about others, \cite{acemoglu2023good} study the platform's optimal architecture that achieves the Pareto frontier of user privacy level and platform estimation error, and \cite{acemoglu2023model} develop an experimentation game to study the harms of the platform's information advantage in product offering. Finally, privacy and personal data in financial markets have been studied in \cite{farboodi2023data}.

More closely related to ours are the works at the intersection of privacy and price discrimination. In this regard,  \cite{ali2020voluntary} study the implications of giving consumers control over their data and the ability to choose how producers access their data. They investigate whether such regulations improve consumer utility and find that consumer control can guarantee gains for every consumer type relative to both perfect price discrimination and no personalized pricing. \cite{hidir2021privacy} characterize the consumer optimal market segmentation compatible with the consumer’s incentives to reveal their preferences voluntarily. They further investigate the implication of their results for consumer privacy and price discrimination in online retail. We depart from this literature by characterizing the set of all possible consumer and producer utilities achievable by market segmentations under privacy constraints.\footnote{Our paper is also related to the literature on differential privacy introduced in \cite{dwork2006our} and \cite{dwork2006calibrating} (see \cite{dwork2014algorithmic} for a survey). In particular, our privacy mechanism can be viewed as a variation of the \emph{randomized response} that is used to guarantee differential privacy for discrete value data points (see, e.g., \cite{erlingsson2014rappor}). }

The rest of the paper proceeds as follows. In Section \ref{sec:Model}, we present our model and introduce the problem formulation. In Section \ref{sec:SpecialK=2}, we focus on a setting with $K=2$ consumer values and characterize the set of all possible consumer and producer utilities that can be achieved by segmentation. We then discuss the insights we obtain from our analysis and, in particular, whether imposing privacy constraints helps or hurts the producer and the consumer.  In Section \ref{Sec:GeneralK}, we show how our main results and insights continue to hold for the general case of $K>2$ consumer values. Section \ref{Sec:Conclusion} concludes while the Appendix presents the omitted proofs from the text. 

%%%
\section{Model}\label{sec:Model}
We build our model upon the one proposed by \cite{bergemann2015limits}. In particular, we consider a producer's problem of pricing a product for a continuum of consumers represented by the interval $\Omega := [0,1]$. We normalize the cost of production to zero and assume that each consumer's value for the product belongs to the set $\mathcal{V} := \{v_1, v_2, \cdots, v_K\}$ for some $K \in \mathbb{N}$, where $v_K > \cdots > v_2 > v_1 > 0$. The value of a consumer is denoted by a random variable $V : \Omega \to \mathcal{V}$. We assume that a consumer purchases the product if the offered price is lower than or equal to their value.

\paragraph{Markets:} A \textit{market} $x$ is a distribution over $\mathcal{V}$, and hence, the set of all markets is the simplex over $\mathcal{V}$, given by
\begin{equation} \label{eqn:all_markets}
\mathcal{X}:= \Delta(\mathcal{V}) = \left \{x:\mathcal{V} \to \mathbb{R}_{\geq 0} ~\big \vert ~ \sum_{k=1}^{K} x(v_k) = 1 \right \}.   
\end{equation}
We denote the aggregated market, which corresponds to the distribution of $V$, by $x^*$.

\paragraph{Segmentation:} A \textit{segmentation} is a partitioning of the aggregated market $x^*$ into different markets. We illustrate a segmentation using a distribution $\sigma$ over the space of all markets, $\mathcal{X} = \Delta(\mathcal{V})$, as defined in equation \eqref{eqn:all_markets}. For any $x \in \mathcal{X}$ within the support of $\sigma$, we have a segment in the partitioning of the aggregated market which its corresponding market is $x$ and its population size is $\sigma(x)$. The set of all segmentations is given by\footnote{As our proofs show, we can limit our attention to partitions comprising finitely many segments, i.e., distributions $\sigma$ with finite support, without loss of generality.}
\begin{equation} \label{eqn:all_segmentations}
\Sigma := \left \{ \sigma \in \Delta(\mathcal{X}) ~\big \vert ~
\sum_{x \in \text{supp}(\sigma)} \sigma(x).x = x^*, |\text{supp}(\sigma)| < \infty
\right \}.
\end{equation}

\paragraph{Privacy mechanism:} A \textit{privacy mechanism} $\mathcal{M}$ is a mapping from the set of all markets to itself, i.e., $\mathcal{M}:\mathcal{X} \to \mathcal{X}$. More specifically, let $x \in \text{supp}(\sigma)$ for some segmentation $\sigma \in \Sigma$. To take the privacy considerations into account, we assume the producer \textit{observes} $\mathcal{M}(x)$ (instead of the true $x$), which is a private version of the market $x$. In particular, we focus on the following class of mechanisms, parameterized by $\beta \in [0,1]$:
\begin{equation} \label{eqn:privacy_mechanism}
\mathcal{M}_\beta (x) := 
\begin{cases}
x & \text{ with probability } 1-\beta \\
\text{Unif}(\mathcal{X}) & \text{ with probability } \beta,	
\end{cases}
\end{equation}
where $\text{Unif}(\mathcal{X})$ represents drawing a market from $\mathcal{X}$ uniformly at random. One can interpret $\mathcal{M}_\beta(\cdot)$ as a mechanism that returns the true market with probability $1-\beta$ but masks it with the complimentary probability $\beta$ (and outputs a uniform noise instead).  Notice that for $\beta=0$, our problem becomes identical to that of \cite{bergemann2015limits}, which does not consider privacy. Moreover, as $\beta$ increases, one can learn less about the distribution of consumer values in a market. 

Our privacy mechanism can be linked to the randomized response mechanism, albeit applied at the market level. The randomized response is a well-known technique in the privacy literature, commonly utilized for making a variable with discrete values private \citep{warner1965randomized, greenberg1969unrelated}. This mechanism maintains the true value of the variable with a certain probability while substituting it with a uniformly chosen random variable from the variable's finite range with the complementary probability. Our approach is similar, with the distinction being that the market space is infinite. Moreover, applying definitions from the privacy literature, we can quantify the \emph{privacy-leakage} associated with the mechanism $\mathcal{M}_\beta(\cdot)$. Let us formalize this connection next.

\begin{definition}\label{def:privacy:leakage}[\cite{rassouli2019optimal}]
   The privacy-leakage about random variable $X$ by revealing $Y$ is
   \begin{align*}
       \mathbb{E}_{Y} \left[\mathrm{TV}\left( p_{X|Y}(\cdot | Y), p_{X}(\cdot)\right) \right],
   \end{align*}
   where $p_X(\cdot)$ and $p_{X|Y}(\cdot | Y)$ are the probability density function and the conditional probability density function of $X$, respectively, and $\mathrm{TV}(\cdot, \cdot)$ denotes the total variation distance between two random variables.
\end{definition}
\begin{proposition}\label{Pro:privacy:leakage}
   The privacy-leakage (defined in Definition \ref{def:privacy:leakage}) of our masking mechanism $\mathcal{M}_\beta(\cdot)$ is equal to $1-\beta$.
\end{proposition}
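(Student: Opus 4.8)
The plan is to carry out an explicit Bayesian posterior computation and then evaluate the total-variation distance in the definition of privacy-leakage directly. Throughout I take $X$ to be distributed according to the reference (uniform) prior $p_X = \mathrm{Unif}(\mathcal{X})$, which I write as $u$; this is the prior the model commits to, and the proposition is a statement about the leakage of $\mathcal{M}_\beta$ under it. First I would record the conditional law of the observation $Y = \mathcal{M}_\beta(X)$ implied by \eqref{eqn:privacy_mechanism}: given $X = x$, the message $Y$ is the mixture $p_{Y|X}(\cdot \mid x) = (1-\beta)\,\delta_x + \beta\, u$, where $\delta_x$ is the point mass at $x$. This yields the joint law on $\mathcal{X}\times\mathcal{X}$,
\[
\mu(dx,dy) = (1-\beta)\,u(dx)\,\delta_x(dy) + \beta\,u(dx)\,u(dy),
\]
whose first term lives on the diagonal $\{x=y\}$ and whose second term is the independent product $u\otimes u$.

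Next I would compute the marginal of $Y$ and the posterior of $X$ given $Y$. Integrating out $x$ in each term shows that both the diagonal part and the product part push forward to $u$, so the marginal $p_Y = u$ is again uniform; in particular observing $Y$ gives no information about which branch of the mechanism was taken, and the two branches retain their prior weights $1-\beta$ and $\beta$. I would then claim that the posterior is
\[
p_{X\mid Y}(\cdot \mid y) = (1-\beta)\,\delta_y + \beta\, u,
\]
and verify it by checking the disintegration identity $\mu(dx,dy) = p_Y(dy)\,p_{X\mid Y}(dx\mid y)$ against a test function: the product terms match immediately, and for the diagonal terms both $u(dy)\,\delta_y(dx)$ and $u(dx)\,\delta_x(dy)$ integrate any $f(x,y)$ to $\int f(x,x)\,u(dx)$, so they coincide as measures on $\mathcal{X}\times\mathcal{X}$. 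Intuitively this just says that with posterior probability $1-\beta$ the message was the truth, pinning $X$ to $y$, and with probability $\beta$ it was noise, leaving $X$ at its prior $u$.

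Finally I would evaluate the leakage. Subtracting the prior from the posterior gives the exact cancellation $p_{X\mid Y}(\cdot\mid y) - p_X = (1-\beta)(\delta_y - u)$, so that $\mathrm{TV}\big(p_{X\mid Y}(\cdot\mid y),\, p_X\big) = (1-\beta)\,\mathrm{TV}(\delta_y, u)$ for every $y$. Since $\delta_y$ is atomic while $u$ is atomless, the set $\{y\}$ separates them and $\mathrm{TV}(\delta_y,u)=1$; hence the conditional leakage equals $1-\beta$ identically in $y$, and taking $\mathbb{E}_Y$ leaves $1-\beta$, as claimed. The one place demanding care, and the main obstacle, is the singular structure of the posterior: because the prior on $X$ is atomless yet the ``truth'' branch forces $X=Y$, the posterior is a genuine mixture of a point mass and a continuous part, and the disintegration step above (rather than a naive density-ratio Bayes rule) is what makes the point-mass weight $1-\beta$ rigorous. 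I would also remark that the same expectation computation yields leakage $1-\beta$ for any atomless prior, since the $y$-dependent posterior weights integrate against the $Y$-marginal to cancel exactly.
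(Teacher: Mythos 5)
Your proof is correct and takes essentially the same approach as the paper's: both identify the posterior as the mixture $p_{X\mid Y}(\cdot\mid y) = (1-\beta)\,\delta_y + \beta\,\mathrm{Unif}(\mathcal{X})$ and evaluate its total-variation distance to the uniform prior, using the singularity of the point mass against the atomless uniform measure to obtain $1-\beta$ for every realization of $Y$. The only difference is one of rigor, not of route: you derive the posterior carefully via the joint-law disintegration, whereas the paper simply asserts its form.
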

This privacy-leakage of our masking mechanism is maximized when $\beta=0$ and is reduced to zero when $\beta=1$, where the mechanism returns pure noise (see \cite{rassouli2019optimal} for further discussion on the relationship between this measure and other definitions of privacy loss).

{ Our privacy mechanism privatizes the distribution of values, i.e., the market, and is therefore closely related to the literature on differentially private histograms, with applications ranging from limiting inferences about the prevalence of a disease in a group of people to protecting password frequency lists (see for instance \cite{xu2013differentially, suresh2019differentially}), and with fielded applications in companies and agencies such as Apple, Google, and the United States Census Bureau \cite{AppleDP, abowd2018us}. Briefly, a randomized algorithm, as a function of users' data, is considered differentially private if the distribution of the algorithm's output does not change significantly by altering the data of one user. The motivation behind this definition is that if the output is not overly sensitive to a user's data, it implies that we cannot learn much about a user by observing the output \citep{dwork2006our, dwork2006calibrating}. In \cref{sec:dp_histogram}, we further discuss differential privacy and its connection to the privacy mechanism in \eqref{eqn:privacy_mechanism}.

}

\paragraph{Utility functions:} If we set the product's price for a market $x$ equal to some $p \in [0, \infty)$, the producer's utility and the consumers' utility are given by
\begin{align} \label{eqn:utilities}
\begin{split}
\mathcal{U}_p(p,x) &:= p \sum_{k=1}^{K} x(v_k) \mathbbm{1}(p \leq v_k),\\
\mathcal{U}_c(p,x) &:= \sum_{k=1}^{K} x(v_k) (v_k - p) \mathbbm{1}(p \leq v_k).
\end{split}
\end{align}
% {\color{red}Here we need to add the belief information more formally and explaining exactly that the producer does the bayesian updating of the belief before explaining the detailed update.}
{
\paragraph{Producer's belief update:} As we discussed in the introduction, the producer does not know the aggregated market or any of the segment markets. Instead, they hold a prior regarding each segment's market and then update it using Bayes' theorem after observing the market through the privacy mechanism. We denote the prior by $\pi(\cdot)$ and the posterior after observing $\hat{x} = \mathcal{M}_\beta(x)$ by $\pi_\beta(\cdot | \hat{x})$. Note that both distributions are defined over the space of all markets $\mathcal{X}$.

Here, we need to make a modeling assumption about the choice of prior. In particular, we want the prior to carry no information about the underlying market, as otherwise, it would mean a privacy leakage that is not accounted for. Consider, for instance, a setting with $K=2$ values where all consumers have the higher value ($v_2$), i.e., $x^*(v_2)=1$. If the prior assigns a high probability to the markets where the probability of the high value is close to one, i.e., $x$ such that $x(v_2) \approx 1$, then even with $\beta = 1$, the posterior would still place a high probability on markets with $x(v_2)$ close to one. This implies that the seller almost knows that most of the market comprises high-value consumers. In other words, even $\beta=1$ fails to provide strong privacy protection.
In the remainder of this paper we make the following assumption concerning the prior distribution $\pi(\cdot)$.
\begin{assumption}
The prior $\pi(\cdot)$ is equal to $\text{Unif}(\mathcal{X})$ where $\text{Unif}(\mathcal{X})$ denotes the uniform distribution over the space of all markets $\mathcal{X}$. In other words, $\pi(x) = \tfrac{1}{\text{Vol}(\mathcal{X})}$ for any market $x \in \mathcal{X}$.
\end{assumption}
This assumption ensures that we select a least informative prior, which can be formalized variationally as the prior that maximizes the expected Kullback-Leibler (KL) divergence of posterior from the prior \citep{bernardo1979reference}. 
%%%%
\begin{definition}
The reference prior is the prior that maximizes the KL divergence in expectation: 
\begin{equation} \label{eqn:argmax_reference_prior}
\argmax_{\pi(\cdot)} \int_{\mathcal{X}} \mu_\beta(\hat{x}) \left( \int_{\mathcal{X}} \pi_\beta(x|\hat{x}) \log(\frac{\pi_\beta(x|\hat{x})}{\pi(x)}) ~dx \right) d\hat{x},
\end{equation}
where $\mu_\beta(\cdot)$ denotes the distribution of the observed market $\hat{x}$
\end{definition}
%%%%
The reference prior in our setting is the uniform distribution over the space of markets $\mathcal{X}$. 
%%%%
\begin{proposition}\label{propsotion:reference_prior}
$\text{Unif}(\mathcal{X})$ is the solution to \eqref{eqn:argmax_reference_prior}.   
\end{proposition}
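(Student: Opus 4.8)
The plan is to recognize the objective in \eqref{eqn:argmax_reference_prior} as the mutual information $I(X;\hat{X})$ between the true market $X\sim\pi$ and the observed market $\hat{X}=\mathcal{M}_\beta(X)$, and then to reduce the maximization over $\pi$ to a maximum-entropy problem on the compact set $\mathcal{X}$. First I would write out the law of the observed market: since the channel returns $x$ with probability $1-\beta$ and an independent uniform draw with probability $\beta$, the marginal density of $\hat{x}$ is
\[
\mu_\beta(\hat{x}) = (1-\beta)\,\pi(\hat{x}) + \beta\, u(\hat{x}),\qquad u(\hat{x}):=\tfrac{1}{\text{Vol}(\mathcal{X})},
\]
i.e. $\mu_\beta$ is a $\beta$-mixture of the prior and the uniform law. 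By Bayes' rule the posterior is $\pi_\beta(\cdot\mid\hat{x})=w(\hat{x})\,\delta_{\hat{x}}+\bigl(1-w(\hat{x})\bigr)\,\pi$ with $w(\hat{x})=(1-\beta)\pi(\hat{x})/\mu_\beta(\hat{x})$, reflecting that conditional on the noise having been applied the observation is uninformative and the posterior reverts to the prior.

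Next I would decompose the objective. Writing the inner integral as the Kullback--Leibler divergence $D_{\mathrm{KL}}\bigl(\pi_\beta(\cdot\mid\hat{x})\,\|\,\pi\bigr)$ and integrating against $\mu_\beta$ gives exactly $I(X;\hat{X})$; using $I(X;\hat{X})=H(\hat{X})-H(\hat{X}\mid X)$ together with the identity $H(\hat{X})=-D_{\mathrm{KL}}(\mu_\beta\|u)+\log\text{Vol}(\mathcal{X})$, the objective equals
\[
\log\text{Vol}(\mathcal{X}) - D_{\mathrm{KL}}(\mu_\beta\,\|\,u) - H(\hat{X}\mid X).
\]
The key structural observation is that the channel is symmetric: for every $x$ the conditional law of $\hat{X}$ is the same mixture $(1-\beta)\delta_x+\beta u$ of a point mass and the uniform law, so $H(\hat{X}\mid X)$ does not depend on $\pi$. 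Hence the only prior-dependent term is $-D_{\mathrm{KL}}(\mu_\beta\,\|\,u)$, and maximizing the objective is equivalent to minimizing $D_{\mathrm{KL}}(\mu_\beta\,\|\,u)$. By the Gibbs inequality this divergence is nonnegative and vanishes if and only if $\mu_\beta=u$; for $\beta<1$ the relation $(1-\beta)\pi+\beta u=u$ forces $\pi=u$, so $\text{Unif}(\mathcal{X})$ is the unique maximizer, while for $\beta=1$ the observation is independent of $X$, the objective is identically zero, and $\text{Unif}(\mathcal{X})$ is trivially a solution.

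I expect the main obstacle to be the point mass that the pass-through event ($\hat{X}=X$ with probability $1-\beta$) places on the diagonal: the joint law of $(X,\hat{X})$ is not absolutely continuous with respect to $P_X\otimes P_{\hat{X}}$, so the literal mutual information and the conditional differential entropy $H(\hat{X}\mid X)$ are infinite, and the posterior carries an atom at $\hat{x}$ that is singular with respect to the continuous prior. The careful version of the argument must therefore isolate this divergent contribution and verify that it is prior-independent, so that the comparison across priors is governed entirely by the finite term $-D_{\mathrm{KL}}(\mu_\beta\|u)$; equivalently, one regularizes (for instance replacing $\delta_x$ by a narrow kernel and passing to the limit, or working directly with the prior-dependent quantity $H(\mu_\beta)$) and checks that the maximizer is stable and equals $\text{Unif}(\mathcal{X})$. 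Once the divergent diagonal term has been quarantined, the remaining maximum-entropy step is routine.
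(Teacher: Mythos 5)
Your proposal is correct and takes essentially the same route as the paper: both recognize the objective as the mutual information $I(X;\hat{X})$, decompose it into a conditional-entropy term $H(\hat{X}\mid X)$ (prior-independent by the symmetry of the channel) plus the entropy of the marginal $\mu_\beta = (1-\beta)\pi + \beta\,\mathrm{Unif}(\mathcal{X})$, and conclude by maximizing that entropy---your minimization of $D_{\mathrm{KL}}(\mu_\beta \,\|\, \mathrm{Unif}(\mathcal{X}))$ is the identical step in disguise. The only difference is that you explicitly flag the measure-theoretic subtlety (the atom at $\hat{x}=x$ makes the literal mutual information infinite) and propose to quarantine it, whereas the paper handles this purely formally by treating $\nu_\beta(\hat{x}\mid x)$ as a density and asserting that the (divergent) term $\int \nu_\beta(\hat{x}\mid x)\log\nu_\beta(\hat{x}\mid x)\,d\hat{x}$ is independent of $x$.
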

In fact, if we want to choose a prior that conveys no information about the underlying market, the prior distribution is an intuitive choice, as all different markets are equally likely under such a prior. Therefore, the producer starts with $\text{Unif}(\mathcal{X})$ as their prior belief, and after observing $\hat{x} = \mathcal{M}_\beta(x)$, they form the posterior. By applying Bayes' theorem, it can be demonstrated that this posterior distribution has a mass of size $1-\beta$ over $\hat{x}$, and the remaining $\beta$ mass is uniformly distributed over $\mathcal{X}$.
}
\paragraph{Optimal pricing rules:} 
%As we discussed above, the producer does not observe the market $x$ and observes a private version $\hat{x}$. Next, we define the optimal pricing of a market given an observation $\hat{x}$ through $\mathcal{M}_\beta(\cdot)$. The producer holds a uniform prior over $\mathcal{X}$ and then observes $\hat{x} = \mathcal{M}_\beta(x)$. We denote the posterior distribution by $\pi_\beta(\cdot | \hat{x})$. It is straightforward to verify that this posterior distribution has a mass of size $1-\beta$ over $\hat{x}$, and the remaining $\beta$ mass is uniformly distributed over $\mathcal{X}$.

The producer's expected revenue from a market conditional on observing $\hat{x}$ and under some price $p \in \mathbb{R}_{\geq 0}$ is given by
\begin{equation} \label{eqn:expected_utility}
\mathcal{U}_p (p | \hat{x}) := 
\mathbb{E}_{~x \sim \pi_\beta(x | \hat{x} )} \left [ \mathcal{U}_p(p,x) \right ].	
\end{equation}
A price $p$ is called \emph{optimal} for a market given observation $\hat{x}$ if its corresponding expected revenue is at least as great as that of any other price:
\begin{equation} \label{eqn:optimal_pricing}
\mathcal{U}_p (p | \hat{x}) \geq \mathcal{U}_p (p' | \hat{x})	
\text{ for any } p' \geq 0.
\end{equation}

It is evident that the optimal price always belongs to the set $\mathcal{V}$; therefore, we will limit our search for the optimal prices to this set. 

A pricing rule $\phi:\mathcal{X} \to \Delta(\mathcal{V})$ is a function where, for any observation $\hat{x}$, $\phi(\hat{x})$ specifies a distribution over prices. Specifically, a pricing rule $\phi$ is considered optimal if, for each $\hat{x}$, the support of $\phi(\hat{x})$ consists solely of prices that are optimal given the observation $\hat{x}$.

\paragraph{Formulating our goal:} We now formalize our main question. 
Given an optimal pricing rule $\phi(\cdot)$, we denote the expected utility of the producer and consumer from a market $x \in \mathcal{X}$ by $U_p^\phi(x)$ and $U_c^\phi(x)$, respectively. This expectation accounts for the randomness arising from the producer observing the private version $\mathcal{M}_\beta(x)$ instead of $x$ and applying the (potentially random) optimal pricing rule $\phi(\cdot)$ based on this observation. More formally, we have
\begin{align}
U_p^\phi(x) := \mathbb{E}_{\hat{x} \sim \mathcal{M}_\beta(x)} \left [ 
\mathbb{E}_{p \sim \phi(\hat{x})} [~\mathcal{U}_p(p,x)]\right ] \text{ and } \quad 
U_c^\phi(x) := \mathbb{E}_{\hat{x} \sim \mathcal{M}_\beta(x)} \left [ 
\mathbb{E}_{p \sim \phi(\hat{x})} [~\mathcal{U}_c(p,x)]\right ].
\end{align}
{It is worth highlighting that the above utilities are computed with respect to the true market $x$ (and not as an expectation with respect to the prior $\text{Unif}(\mathcal{X})$). In other words, there is a fixed market $x$, and the producer has a uniform prior $\text{Unif}(\mathcal{X})$ on it. This prior is updated to the posterior $\pi_\beta(x | \hat{x})$ after observing $\hat{x} \sim \mathcal{M}_\beta(x)$. Based on this posterior, the producer selects the optimal price $p \sim \phi(\hat{x})$. Now, the functions $U_p^\phi(x)$ and $U_c^\phi(x)$ calculate the utilities for the producer and consumer of the original market $x$ when priced at $p$. }

Formally, our goal is to characterize the set of all possible pairs of expected utilities for both the producer and consumer across all potential segmentations and optimal pricing rules. In other words, we would like to characterize the following set:
\begin{equation} \label{eqn:all_pairs}
\mathcal{S} := \left \{ 
\Big[ \sum_{x \in \text{supp}(\sigma)} \sigma(x) U_c^\phi(x), \sum_{x \in \text{supp}(\sigma)} \sigma(x) U_p^\phi(x)
\Big]^\top ~ \Big \vert ~ \sigma \in \Sigma \text{ and } \phi \text{ is an optimal pricing rule}
\right \}.    
\end{equation}
%{\color{red} Here we should state that we calculate everything with respect to the true market and it is ex ante. Should clarify that here we are facing two expectations.}
%%%%%%%%%%%%%%%%%%%%%%%%%%%%%%%%
%%%%%%%%%%%%%%%%%%%%%%%%%%%%%%%%
%{\color{red}Discussion of Modeling Assumptions:Here we should discuss uniform distribution, the measures considered expost exante, }
\section{Price discrimination limits under privacy: the case of $K=2$}\label{sec:SpecialK=2}

We begin by examining the simpler scenario where the consumer's valuation is limited to two values, i.e., $K=2$,  to gain insights into the limits of price discrimination under privacy constraints. In this scenario, the set of all markets, $\mathcal{X}$, is, in fact, the set of Bernoulli distributions, which can be parameterized by the probability of the high value, $v_2$, denoted by $\alpha \in [0,1]$. Throughout this section, we use $x$ and $\alpha$ interchangeably to represent the market. In particular, $\alpha^*$ denotes the probability of the high value for the aggregated market $x^*$. We also define $\eta$ as the ratio of the lower value to the higher value, i.e., 
\begin{equation}
\eta := \frac{v_1}{v_2}\le 1.    
\end{equation}
We first characterize the optimal price conditioning on the observed market $\hat{x}$. 
%%%%%%%%%%
\begin{lemma} \label{lemma:pricing}
Suppose the producer observes $\hat{x} = (1- \hat{\alpha}, \hat{\alpha})$, where $\hat{\alpha}$ denotes the observed probability of the high value $v_2$. Define the threshold $t^*$ as:
\begin{equation}
t^* := \frac{\eta - \beta/2}{1-\beta}.
\end{equation}
Then, the producer should set the price to $v_1$ (i.e., it is the optimal price) if and only if $\hat{\alpha} \leq t^*$, and $v_2$ is the optimal price otherwise.
\end{lemma}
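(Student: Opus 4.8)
The plan is to decide directly between the two candidate prices $v_1$ and $v_2$, using the fact (noted just before the statement) that an optimal price always lies in $\mathcal{V} = \{v_1, v_2\}$. So it suffices to compute the producer's expected revenue $\mathcal{U}_p(p \mid \hat{x})$ from \eqref{eqn:expected_utility} at $p = v_1$ and $p = v_2$ and compare.

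First I would evaluate the per-market revenues. Writing a market as $x = (1-\alpha, \alpha)$ with $\alpha = x(v_2)$, the definition in \eqref{eqn:utilities} gives $\mathcal{U}_p(v_1, x) = v_1\bigl(x(v_1) + x(v_2)\bigr) = v_1$ for \emph{every} market (both values weakly exceed $v_1$, so everyone buys), whereas $\mathcal{U}_p(v_2, x) = v_2\, \alpha$ since only high-value consumers purchase at $v_2$. Thus pricing at $v_1$ yields a deterministic revenue $v_1$ independent of the market, while pricing at $v_2$ yields a revenue that is linear in $\alpha$.

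Next I would take the expectation over the posterior $\pi_\beta(\cdot \mid \hat{x})$. Using the structure of the posterior established in the model section — a point mass of weight $1-\beta$ at $\hat{x}$ plus weight $\beta$ distributed as $\text{Unif}(\mathcal{X})$ — the expected revenue at $v_1$ stays $v_1$, while the expected revenue at $v_2$ equals $v_2\,\mathbb{E}_{\pi_\beta(\cdot|\hat{x})}[\alpha]$, where $\mathbb{E}_{\pi_\beta(\cdot|\hat{x})}[\alpha] = (1-\beta)\hat{\alpha} + \beta\, \mathbb{E}_{\text{Unif}(\mathcal{X})}[\alpha]$. The one computation meriting care is $\mathbb{E}_{\text{Unif}(\mathcal{X})}[\alpha]$: for $K=2$ the simplex $\mathcal{X}$ is the segment $\{(1-\alpha, \alpha) : \alpha \in [0,1]\}$ and the uniform law on it is just $\alpha \sim \text{Unif}[0,1]$, so this mean is $\tfrac{1}{2}$.

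Finally, $v_1$ is optimal precisely when $v_1 \geq v_2\bigl[(1-\beta)\hat{\alpha} + \tfrac{\beta}{2}\bigr]$; dividing by $v_2$ and solving for $\hat{\alpha}$ (valid for $\beta < 1$) gives $\hat{\alpha} \leq (\eta - \beta/2)/(1-\beta) = t^*$, with $v_2$ strictly optimal when the inequality reverses. I do not anticipate a genuine obstacle, as the argument reduces to comparing two affine functions of $\hat{\alpha}$; the only subtlety is correctly identifying the mean $\tfrac{1}{2}$ of the uniform component of the posterior, which is exactly what produces the additive $-\beta/2$ shift (and the $1/(1-\beta)$ scaling) of $t^*$ relative to the non-private threshold $\hat{\alpha} \leq \eta$.
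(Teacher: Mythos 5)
Your proposal is correct and follows essentially the same route as the paper's proof: compute $\mathcal{U}_p(v_1 \mid \hat{x}) = v_1$, compute $\mathcal{U}_p(v_2 \mid \hat{x}) = v_2\bigl((1-\beta)\hat{\alpha} + \beta/2\bigr)$ via the posterior's mixture structure (with the uniform component contributing mean $\tfrac{1}{2}$), and compare the two to obtain the threshold $t^*$. The only difference is that you spell out the reduction to the two candidate prices and the computation of $\mathbb{E}_{\text{Unif}(\mathcal{X})}[\alpha]$ slightly more explicitly than the paper does.
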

% When $\eta>1/2$, $t^*$ becomes an increasing function of $\beta$. This is because when the difference between the two types is not substantial, higher privacy encourages the producer to prioritize increasing the likelihood of allocation over optimizing per item gain. Conversely, when $\eta<1/2$, increasing $\beta$ reduces the chance of offering the product at a lower price.
%%%%%%%%%%%%%%%%
%%%%%%%%%%%%%%%%
Note that when $2\eta \le \beta$, this lemma implies that the optimal price becomes $v_2$, regardless of the observation $\hat{x}$. Similarly, when $\beta \geq 2 (1-\eta)$, the optimal price is $v_1$. Note that when the producer sets the price to the high value $v_2$, only consumers with the high value will purchase the product, resulting in their expected utility being zero due to the absence of consumer surplus. Conversely, when the producer opts for the low value $v_1$ as the price, all consumers—regardless of their valuation—will purchase the product. This pricing strategy allows consumers with the high value to achieve a positive expected surplus, as they gain the difference between their valuation and the lower price. Therefore, if $\beta\ge 2 \eta$, we have 
\begin{align*}
    \mathcal{S}= \left\{\left[0,v_2  \alpha^*\right]^T\right\}
\end{align*}
and if $\beta\ge 2 (1-\eta)$, then we have 
\begin{align*}
    \mathcal{S}= \left\{\left[\left(v_2-v_1\right) \alpha^*,v_1\right]^T\right\}.
\end{align*}
We next focus on the more interesting scenario in which the following assumption holds.
%%%
\begin{assumption}\label{assump:beta:intermediate}
We suppose $\beta \leq \min\{2\eta, 2(1-\eta)\}$. 
\end{assumption}
%%%
As established in Lemma \ref{lemma:pricing}, in this case, the optimal pricing rule sets the price to $v_1$ for observations $\hat{\alpha} < t^*$ and $v_2$ for $\hat{\alpha} > t^*$. The following proposition  analyzes the effects of privacy mechanism on the set $\mathcal{S}$, distinguishing between the direct and the indirect effects of privacy. 
%%%%%%%%%%%%%%%%%%%%%%%%%%%%%%
\begin{proposition}\label{proposition:S_to_S'}
Suppose Assumption \ref{assump:beta:intermediate} holds. Then, $\mathcal{S}$ is given by
\begin{equation}
\mathcal{S} = \left \{ 
\beta \Big[ t^* \alpha^* (v_2-v_1), ~ t^* v_1 + (1-t^*) \alpha^* v_2 \Big]^\top + (1-\beta) s'
~ \Big \vert ~ s' \in \mathcal{S}'
\right \},    
\end{equation}    
where $\mathcal{S}'$ is defined as
\begin{align}
\mathcal{S'} & := 
\biggl\{ \Big[ \gamma_1 \alpha_1 (v_2-v_1), \gamma_1 v_1 + \gamma_2 \alpha_2 v_2 \Big]^\top ~ \Big \vert \label{eqn:prop_merge} \\
&
\qquad \alpha_1 \in [0,\min\{t^*, \alpha^*\}], \alpha_2 \in [\max\{t^*, \alpha^*\},1], \gamma_1 \text{ and } \gamma_2 \in [0,1], \gamma_1 + \gamma_2 = 1, \gamma_1 \alpha_1 + \gamma_2 \alpha_2 = \alpha^*
\biggr\}. \nonumber
\end{align}
\end{proposition}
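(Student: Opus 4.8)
The plan is to compute the per-segment utilities $U_p^\phi(x)$ and $U_c^\phi(x)$ for a single market $x=(1-\alpha,\alpha)$ explicitly, split each into a masking part (carrying the factor $\beta$) and a true-observation part (carrying $1-\beta$), and then aggregate against the segmentation $\sigma$. From \eqref{eqn:utilities} the two relevant payoffs are: pricing at $v_1$ yields producer payoff $v_1$ and consumer payoff $\alpha(v_2-v_1)$ (everyone buys), while pricing at $v_2$ yields producer payoff $\alpha v_2$ and consumer payoff $0$. By Lemma \ref{lemma:pricing}, any optimal $\phi$ is forced to price at $v_1$ when the observed $\hat\alpha<t^*$ and at $v_2$ when $\hat\alpha>t^*$, with indifference only on the single value $\hat\alpha=t^*$. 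Conditioning on the mechanism \eqref{eqn:privacy_mechanism}: with probability $\beta$ the observation is $\mathrm{Unif}(\mathcal{X})$, which for $K=2$ is the uniform law on $\alpha\in[0,1]$, so $\phi$ prices at $v_1$ with probability exactly $t^*$ and at $v_2$ with probability $1-t^*$, independently of the true $\alpha$; with probability $1-\beta$ the observation is the true market, so the price depends on whether $\alpha\le t^*$ or $\alpha>t^*$.

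This yields the decomposition
\[
U_p^\phi(x) = \beta\bigl[t^* v_1 + (1-t^*)\alpha v_2\bigr] + (1-\beta)\,g_p(\alpha),\qquad
U_c^\phi(x) = \beta\, t^*\alpha(v_2-v_1) + (1-\beta)\,g_c(\alpha),
\]
where $(g_c(\alpha),g_p(\alpha))=(\alpha(v_2-v_1),\,v_1)$ if $\alpha$ is priced at $v_1$ and $(0,\,\alpha v_2)$ if priced at $v_2$. Summing the masking terms against $\sigma$ and using $\sum_{x}\sigma(x)=1$ together with $\sum_{x}\sigma(x)\alpha=\alpha^*$ from \eqref{eqn:all_segmentations}, the entire $\beta$-part collapses to the fixed vector $\beta[\,t^*\alpha^*(v_2-v_1),\ t^*v_1+(1-t^*)\alpha^* v_2\,]^\top$, independent of $\sigma$. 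Thus $\mathcal{S}$ equals this vector plus $(1-\beta)$ times the set of achievable indirect pairs $\sum_{x}\sigma(x)\bigl(g_c(\alpha),g_p(\alpha)\bigr)$, and it remains to prove that this indirect set is exactly $\mathcal{S}'$.

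For the inclusion into $\mathcal{S}'$ I would invoke the \emph{merging} argument. Given $\sigma$ and optimal $\phi$, partition the (segment, price) pairs into the group priced at $v_1$ (total mass $\gamma_1$, mass-weighted mean $\bar\alpha_1$) and the group priced at $v_2$ (mass $\gamma_2$, mean $\bar\alpha_2$), splitting any threshold segment with $\alpha=t^*$ according to its randomization. Since $g_p$ is affine in $\alpha$ on each group and $g_c$ is affine or identically zero, the aggregate indirect pair depends only on $(\gamma_1,\bar\alpha_1,\gamma_2,\bar\alpha_2)$ and equals $[\,\gamma_1\bar\alpha_1(v_2-v_1),\ \gamma_1 v_1+\gamma_2\bar\alpha_2 v_2\,]^\top$. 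Every $v_1$-priced segment has $\alpha\le t^*$ and every $v_2$-priced segment has $\alpha\ge t^*$, so $\bar\alpha_1\le t^*\le\bar\alpha_2$; combined with $\gamma_1\bar\alpha_1+\gamma_2\bar\alpha_2=\alpha^*$ and $\gamma_1+\gamma_2=1$, the fact that a weighted average lies between its arguments gives $\bar\alpha_1\le\alpha^*\le\bar\alpha_2$. Hence $(\gamma_1,\bar\alpha_1,\gamma_2,\bar\alpha_2)$ satisfies every constraint in \eqref{eqn:prop_merge}. The reverse inclusion is immediate: for any feasible $(\alpha_1,\alpha_2,\gamma_1,\gamma_2)$ the two-point segmentation placing mass $\gamma_1$ on market $\alpha_1$ and mass $\gamma_2$ on market $\alpha_2$ lies in $\Sigma$ (its defining conditions are precisely $\gamma_1+\gamma_2=1$ and $\gamma_1\alpha_1+\gamma_2\alpha_2=\alpha^*$), and since $\alpha_1\le t^*\le\alpha_2$ the optimal rule prices them at $v_1$ and $v_2$ respectively, reproducing the prescribed indirect pair.

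The main obstacle is twofold. First, justifying that the $\beta$-masking contribution is a single fixed vector relies on the masked observation being independent of the true segment with pricing probabilities $(t^*,1-t^*)$; this is special to the $K=2$ geometry, where $\mathrm{Unif}(\mathcal{X})$ is literally the uniform law on $[0,1]$ and the mass below $t^*$ is exactly $t^*$ (for $K>2$ this would become a simplex-volume computation). Second, the merging step requires the bookkeeping that the aggregated means $\bar\alpha_1,\bar\alpha_2$ remain on the correct side of $t^*$ and bracket $\alpha^*$, including the handling of threshold segments at $\alpha=t^*$ and the degenerate cases $\gamma_1=0$ or $\gamma_2=0$. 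Everything beyond these points is routine linear aggregation.
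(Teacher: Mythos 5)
Your proposal is correct and follows essentially the same route as the paper's proof: the same per-segment utility computation (your masking/true-observation split of $U_p^\phi$ and $U_c^\phi$ reproduces the paper's $f_1,f_2,g_1,g_2$ exactly), the same merging of all $v_1$-priced and $v_2$-priced segments via linearity with the same handling of threshold segments at $\hat\alpha=t^*$, and the same two-point-segmentation converse. The only difference is one of ordering---you extract the constant $\beta$-shift before merging, whereas the paper merges first (its Lemma on merging) and then separates the shift at the end using $\gamma_1+\gamma_2=1$ and $\gamma_1\alpha_1+\gamma_2\alpha_2=\alpha^*$.
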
 
%%%%%%%%%%%%%%%%%%%%%%%%%%%%%%
This result shows that the set $\mathcal{S}$ undergoes three distinct changes as $\beta$ varies. First, the set $\mathcal{S}'$ is scaled by a factor $1-\beta$. Second, its location in the plane shifts due to the vector $\beta \Big[ t^* \alpha^* (v_2-v_1), ~ t^* v_1 + (1-t^*) \alpha^* v_2 \Big]^\top$ added to it. These two changes are because the producer observes a completely random market with probability $\beta$ and sets the price based on this random market, resulting in a constant expected utility that is independent of the true market. Lastly, the shape of $\mathcal{S}$ evolves due to the corresponding changes in the shape of $\mathcal{S}'$. 
Notice that the shape of $\mathcal{S}'$ is influenced by $t^*$, which represents the producer's threshold for the proportion of high-value customers in order to choose the higher price. In fact, substituting $t^*$ in the definition of $\mathcal{S}'$ with $\eta$ (the optimal threshold in the non-private case) aligns $\mathcal{S}'$ precisely with the set $\mathcal{S}$ in the absence of privacy constraints. This last change can be interpreted as the indirect impact of the privacy mechanism, which is rooted in the producer adjusting its pricing strategy due to the noisy observation. 

Also, notice that Proposition \ref{proposition:S_to_S'} suggests that it suffices to focus only on segmentations with two markets $(1-\alpha_1, \alpha_1)$ and $(1-\alpha_2, \alpha_2)$ with probabilities $\gamma_1$ and $\gamma_2$, respectively. Moreover, the first segment's fraction of high values, i.e., $\alpha_1$, is less than or equal to $t^*$, and the other's (i.e., $\alpha_2)$ is greater than or equal to $t^*$. 

In the appendix, we characterize the set $\mathcal{S}'$ and show that it takes the form of a triangle, with its shape changing as $\beta$ varies. That said, 
we next formally characterize the set $\mathcal{S}$, taking into account all these three factors. 
%%%%%%%%%%%%%%%%%%%%%%%%%%%%
%%%%%%%%%%%%%%%%%%%%%%%%%%%%

\begin{theorem} \label{theorem:S_K=2}
Suppose Assumption \ref{assump:beta:intermediate} holds. Then, the set $\mathcal{S}$ is the triangle $ABC$, given by     
\begin{align} \label{eqn:ABC}
\begin{split}
A &= \beta \bm{c} + (1-\beta) A' \text{ with } A':= \biggl[0, ~\alpha^*v_2 + v_1 \frac{(t^*-\alpha^*)_+}{t^*} \biggr]^\top, \\
B &= \beta \bm{c} + (1-\beta) B' \text{ with } B':= \biggl [0,~\alpha^*v_2 + (1-\alpha^*)v_1 \biggr ]^\top, \\
C &= \beta \bm{c} + (1-\beta) C' \text{ with } C':=\biggl [\alpha^*(v_2-v_1) - (v_2-v_1)\frac{(\alpha^*-t^*)_+}{1-t^*}, ~v_1 + (v_2-v_1)\frac{(\alpha^*-t^*)_+}{1-t^*} \biggr]^\top,
\end{split}
\end{align}
where 
\begin{equation}
\bm{c} =  \Big[ t^* \alpha^* (v_2-v_1), ~ t^* v_1 + (1-t^*) \alpha^* v_2 \Big]^\top.   
\end{equation}
\end{theorem}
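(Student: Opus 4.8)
The plan is to lean entirely on Proposition \ref{proposition:S_to_S'}, which already writes $\mathcal{S}$ as the affine image of $\mathcal{S}'$ under the map $s' \mapsto \beta\bm{c} + (1-\beta)s'$. Since any affine map sends the convex hull of a set to the convex hull of its image, the image of the triangle with vertices $A',B',C'$ is precisely the triangle with vertices $\beta\bm{c}+(1-\beta)A' = A$, $\beta\bm{c}+(1-\beta)B' = B$, and $\beta\bm{c}+(1-\beta)C' = C$. Hence the whole theorem reduces to a single claim: $\mathcal{S}'$, as defined in \eqref{eqn:prop_merge}, equals the triangle $A'B'C'$. I would devote all the work to that claim.

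To study $\mathcal{S}'$ I would introduce the scalar $w := \gamma_1\alpha_1$, so the consumer coordinate is simply $U_c = w(v_2-v_1)$. Using the balance constraint $\gamma_1\alpha_1+\gamma_2\alpha_2 = \alpha^*$ to substitute $\gamma_2\alpha_2 = \alpha^*-w$, the producer coordinate becomes $U_p = \gamma_1 v_1 + (\alpha^*-w)v_2$. The key structural observation is that $(w,\gamma_1) \mapsto (U_c,U_p) = \big(w(v_2-v_1),\, \gamma_1 v_1 + (\alpha^*-w)v_2\big)$ is an affine \emph{bijection} of $\mathbb{R}^2$, since its linear part has determinant $(v_2-v_1)v_1 \neq 0$. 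Therefore $\mathcal{S}'$ is the affine image of the feasible set of $(w,\gamma_1)$, and it suffices to prove that this feasible set is a triangle whose three vertices are carried to $A',B',C'$.

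Next I would translate the constraints defining $\mathcal{S}'$ into affine inequalities in $(w,\gamma_1)$. Writing $\gamma_2 = 1-\gamma_1$, $\alpha_1 = w/\gamma_1$, and $\alpha_2 = (\alpha^*-w)/(1-\gamma_1)$, the box constraints $\alpha_1 \in [0,\min\{t^*,\alpha^*\}]$ and $\alpha_2 \in [\max\{t^*,\alpha^*\},1]$ become $\gamma_1 \geq w/\min\{t^*,\alpha^*\}$, $\gamma_1 \geq 1-(\alpha^*-w)/\max\{t^*,\alpha^*\}$, and $\gamma_1 \leq 1-\alpha^*+w$, together with $w \geq 0$ and $0 \leq \gamma_1 \leq 1$. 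The delicate step—the one I expect to be the main obstacle—is simplifying this system by determining which inequalities are redundant. This forces the split into the two cases $\alpha^* \geq t^*$ and $\alpha^* < t^*$ and a comparison of the slopes of the competing lower-bound lines (for instance $w/\alpha^*$ against $1-(\alpha^*-w)/t^*$, whose steeper slope is decided by the sign of $\alpha^*-t^*$). In each case exactly one lower-bound line and one upper-bound line survive; both are affine in $w$, both are active over the whole admissible range of $w$, and they meet at the largest feasible $w$ (namely $w = t^*(1-\alpha^*)/(1-t^*)$ when $\alpha^* \geq t^*$ and $w = \alpha^*$ when $\alpha^* < t^*$). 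Together with the edge $w=0$, this shows the feasible set is a triangle.

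Finally I would read off the three vertices of that triangle in the $(w,\gamma_1)$ plane—the two endpoints of the $w=0$ edge and the single intersection point at the maximal $w$—and push them through the affine map, verifying they reproduce $A'$, $B'$, and $C'$ exactly as stated; the $(\cdot)_+$ appearing in the formulas for $A'$ and $C'$ is precisely the device that merges the two cases $\alpha^* \gtrless t^*$ into one expression. Applying the transformation from Proposition \ref{proposition:S_to_S'} then gives that $\mathcal{S}$ is the triangle $ABC$. The only remaining bookkeeping is checking that the degenerate configurations $\gamma_1 \in \{0,1\}$ (where the substitution $\alpha_1 = w/\gamma_1$ is vacuous) lie in the closure and do not add extra points, which is routine.
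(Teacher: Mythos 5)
Your proposal is correct, and it reaches the triangle $A'B'C'$ by a route that differs from the paper's in its technical core. Both arguments start from the same reduction (Proposition~\ref{proposition:S_to_S'}) and both split on $\alpha^* \gtrless t^*$, but the paper characterizes $\mathcal{S}'$ via two case-specific \emph{nonlinear} changes of variables (Lemmas~\ref{lemma:S'_alpha_geq_t} and \ref{lemma:S'_alpha_leq_t}): in Case (I) it sets $z_1 = \gamma_1(1-\alpha_1)$, $z_2 = \alpha_1/(1-\alpha_1)$, and in Case (II) $w_1 = \gamma_2\alpha_2$, $w_2 = 1/\alpha_2$, so that in each case the constraints decouple into a product of intervals (a rectangle) and the utilities become bilinear; the claim that the bilinear image of that rectangle is a triangle is then left as a short verification. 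You instead use a single affine parameterization $(w,\gamma_1) = (\gamma_1\alpha_1, \gamma_1)$ valid in both cases, under which the utility map has invertible linear part (determinant $(v_2-v_1)v_1$) and the feasible region is a planar polytope; the case analysis migrates into deciding which of the competing affine constraints bind, after which the region is visibly a triangle and affine bijections preserve triangles. What your route buys is that no bilinear-image argument is needed at all, the two cases are unified at the level of the map (the $(\cdot)_+$ in $A'$ and $C'$ emerges naturally), and the vertex computations are pure linear algebra; what the paper's route buys is that the constraint set is a box with transparent coordinate ranges, so no redundancy analysis among inequalities is required. The two are in fact closely linked: the paper's pair $(z_1 z_2, z_1)$ equals your $(w, \gamma_1 - w)$, so the paper's ``straightforward to verify'' step is implicitly the same affine-image observation you make explicit. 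Your verified vertex values (maximal $w = t^*(1-\alpha^*)/(1-t^*)$ when $\alpha^* \geq t^*$ and $w = \alpha^*$ when $\alpha^* < t^*$) match $C'$ in both regimes, and your treatment of the degenerate configurations $\gamma_1 \in \{0,1\}$ is the right bookkeeping: at $\gamma_1 = 0$ the balance constraint forces $\alpha_2 = \alpha^*$ (feasible only when $\alpha^* \geq t^*$), and at $\gamma_1 = 1$ it forces $\alpha_1 = \alpha^*$ (feasible only when $\alpha^* \leq t^*$), exactly as your polytope predicts.
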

To better understand this characterization, recall that, as we discussed earlier, the effect of the privacy mechanism can be decomposed into two components: a direct one and an indirect one. The direct effect is due to the market being masked with probability $\beta$ and the producer choosing the price based on pure noise observation. This results in a constant term in utilities that is independent of the true market, captured by the term $\beta \bm{c}$ in Equation $\eqref{eqn:ABC}$.

The indirect effect of privacy arises from changes in the producer’s pricing strategy and is represented by the difference between triangle $A’B’C’$ and the non-private case $TQR$ in Figure $\ref{fig:S_bergemann}$. Specifically, the lines $A’B’$ and $B’C’$ are analogous to the corresponding lines $QT$ and $QR$ in the non-private case, where the first represents the condition that consumer utility is non-negative, and the second shows that the sum of consumer and producer utilities is less than the maximum possible surplus. However, line $A’C’$ is where the indirect effect of privacy is observed. In the non-private case, the line $TR$ is parallel to the x-axis, since the producer utility is lower bounded by the utility obtained from uniform pricing. In the presence of a privacy constraint, this line is no longer parallel to the x-axis. Moreover, this indirect effect of privacy becomes more pronounced when considering $K>2$ values, as this lower boundary of the set $\mathcal{S}$ is no longer a straight line but becomes piece-wise linear.

\subsection{Illustration of the set $\mathcal{S}$}

As equation \eqref{eqn:ABC} in \cref{theorem:S_K=2} suggests, there are two main regimes in characterizing the triangle $\mathcal{S}$: $t^* \geq \alpha^*$ and $t^* \leq \alpha^*$, where $t^*$ itself depends on $\eta$ and $\beta$. We next discuss how the set $\mathcal{S}$ evolves with varying $\beta$, for the two cases of $\alpha^* \geq \eta$ and $\alpha^* \leq \eta$.

Figure \ref{fig:S_Rmk1} depicts $\mathcal{S}$ for the case $\alpha^* \geq \eta$. In particular, Figure \ref{fig:S_Rmk1_4} illustrates the non-private case where the triangle $ABC$ coincides with the triangle $TQR$ that we introduced in the Introduction (see Figure \ref{fig:S_bergemann}). In the presence of privacy mechanism, triangle $TQR$ maps to triangle $\tilde{T}B\tilde{R}$ after going through the shifting and scaling operators, meaning that the difference between triangles $ABC$ and $\tilde{T}B\tilde{R}$ is purely attributed to the indirect effect of privacy.

Figure \ref{fig:S_Rmk1_1} illustrates the case $\eta \leq 1/2$ which implies $\alpha^* \geq \eta \geq t^*$. In this case,  as $\beta$ decreases from $2\eta$ to $0$, $t^*$ increases from $0$ to $\eta$ which implies $C$ moving down from $B$ to $\tilde{R}$.

Figures \ref{fig:S_Rmk1_2} and \ref{fig:S_Rmk1_3} correspond to the case $\alpha^* \geq \eta \geq 1/2$. Notice that $\eta \geq 1/2$ implies $t^* \geq \eta$, and hence in this case, we could have either $\alpha^* \geq t^*$ or $\alpha^* \leq t^*$, depending on the value of $\beta$. Let $\tilde{\beta}$ be the value of $\beta$ for which $t^* = \alpha^*$, i.e.,
\begin{equation}
\frac{\eta - \tilde{\beta}/2}{1-\tilde{\beta}} = \alpha^*.
\end{equation}
Figure \ref{fig:S_Rmk1_2} illustrates the case $\beta \geq \tilde{\beta}$ which implies $t^* \geq \alpha^* \geq \eta$. As $\beta$ decreases from $2(1-\eta)$ to $\tilde{\beta}$, $t^*$ decreases from $1$ to $\alpha^*$ which implies $A$ moving from $B$ down towards $\tilde{T}$.
Finally, Figure \ref{fig:S_Rmk1_4} illustrates the case $\beta \leq \tilde{\beta}$ which implies $\alpha^* \geq t^* \geq \eta$. In this case, as $\beta$ decreases from $\tilde{\beta}$ to $0$, $t^*$ decreases from $\alpha^*$ to $\eta$ which implies $C$ moving up towards $\tilde{R}$. 
%%%%%%%%%%%%%%%%%%%%%%%%%%
%%%%%%%%%%%%%%%%%%%%%%%%%%
\begin{figure}
\centering
\begin{subfigure}{.5\textwidth}
  \centering
  \begin{tikzpicture}
    % Draw axes
    \draw[->] (0,0) -- (5,0) node[above left] {Consumer utility}; % x-axis
    \draw[->] (0,0) -- (0,5) node[right] {Producer utility}; % y-axis
    
    % Define points
    \coordinate (B) at (0.2,3.4); 
    \coordinate (A) at (0.2,1.4);
    \coordinate (C) at (1.2,2.4); % Replace (3,3) with the actual coordinates for the third vertex
    \coordinate (D) at (2.2,1.4);

    % Draw triangle
    \draw[fill=gray!30] (A) -- (B) -- (C) -- cycle; % Shaded triangle

    % Draw dotted lines with slope -1
    \draw[dotted] (A) -- (D);
    \draw[dotted] (B) -- (D);
    %\draw[dotted] (B) -- (Bx);
    %\draw[dotted] (B) -- (By);

    % Label points
    \node[below right] at (0,1.4) {$A$ or $\tilde{T}$};
    \node[above] at (B) {$B$};
    \node[right] at (C) {$C$};
    \node[right] at (D) {$\tilde{R}$};
\end{tikzpicture}
  \caption{$\eta \leq 1/2$}
  \label{fig:S_Rmk1_1}
\end{subfigure}%%%%%%%
\begin{subfigure}{.5\textwidth}
  \centering
  \begin{tikzpicture}
    % Draw axes
    \draw[->] (0,0) -- (5,0) node[above left] {Consumer utility}; % x-axis
    \draw[->] (0,0) -- (0,5) node[right] {Producer utility}; % y-axis
    
    % Define points
    \coordinate (B) at (0.4,3); 
    \coordinate (A) at (0.4,2.3);
    \coordinate (AA) at (0.4,1.4);
    \coordinate (C) at (2.7,0.7); 
    \coordinate (D) at (2,1.4);
    \coordinate (DD) at (0.4, 0.7);

    % Draw triangle
    \draw[fill=gray!30] (A) -- (B) -- (C) -- cycle; % Shaded triangle

    % Draw dotted lines with slope -1
    \draw[dotted] (A) -- (AA);
    %\draw[dotted] (C) -- (AA);
    \draw[dotted] (D) -- (AA);
    %\draw[dotted] (DD) -- (AA);
   % \draw[dotted] (DD) -- (C);

    % Label points
    \node[right] at (A) {$A$};
    \node[left] at (AA) {$\tilde{T}$};
    \node[above] at (B) {$B$};
    \node[right] at (C) {$C$};
    \node[right] at (D) {$\tilde{R}$};
    %\node[below] at (DD) {$D'$};
\end{tikzpicture}
  \caption{$\eta \geq 1/2$ and $\beta \geq \tilde{\beta}$}
  \label{fig:S_Rmk1_2}
\end{subfigure}%%%%%%%
\newline \vspace{2mm}
\begin{subfigure}{.5\textwidth}
  \centering
  \begin{tikzpicture}
    % Draw axes
    \draw[->] (0,0) -- (5,0) node[above left] {Consumer utility}; % x-axis
    \draw[->] (0,0) -- (0,5) node[right] {Producer utility}; % y-axis
    
    % Define points
    \coordinate (B) at (0.2,3.4); 
    \coordinate (A) at (0.2,1.4);
    \coordinate (C) at (2.7,0.9); % Replace (3,3) with the actual coordinates for the third vertex
    \coordinate (D) at (2.2,1.4);

    % Draw triangle
    \draw[fill=gray!30] (A) -- (B) -- (C) -- cycle; % Shaded triangle

    % Draw dotted lines with slope -1
    \draw[dotted] (A) -- (D);
    \draw[dotted] (B) -- (D);
    %\draw[dotted] (B) -- (Bx);
    %\draw[dotted] (B) -- (By);

    % Label points
    \node[below right] at (0,1.3) {$A$ or $\tilde{T}$};
    \node[above] at (B) {$B$};
    \node[right] at (C) {$C$};
    \node[right] at (D) {$\tilde{R}$};
\end{tikzpicture}
  \caption{$\eta \geq 1/2$ and $\beta \leq \tilde{\beta}$}
  \label{fig:S_Rmk1_3}
\end{subfigure}%%%%%%%
\begin{subfigure}{.5\textwidth}
  \centering
  \begin{tikzpicture}
    % Draw axes
    \draw[->] (0,0) -- (5,0) node[above left] {Consumer utility}; % x-axis
    \draw[->] (0,0) -- (0,5) node[right] {Producer utility}; % y-axis
    
    % Define points
    \coordinate (B) at (0,4.3); 
    \coordinate (A) at (0,2);
    \coordinate (D) at (3,2);

    % Draw triangle
    \draw[fill=gray!30] (A) -- (B) -- (D) -- cycle; % Shaded triangle

    % Draw dotted lines with slope -1
    \draw[dotted] (A) -- (D);
    \draw[dotted] (B) -- (D);
    %\draw[dotted] (B) -- (Bx);
    %\draw[dotted] (B) -- (By);

    % Label points
    \node[below right] at (A) {$A$ or $T$};
    \node[right] at (B) {$B$ or $Q$};
    \node[right] at (D) {$C$ or $R$};
\end{tikzpicture}
  \caption{Non-private case $\beta=0$}
  \label{fig:S_Rmk1_4}
\end{subfigure}%%%%%%%
\caption{Illustration of $\mathcal{S}$ for the case $\alpha^* \geq \eta$.}
\label{fig:S_Rmk1}
\end{figure}
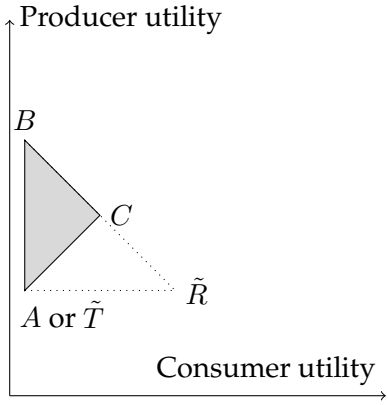
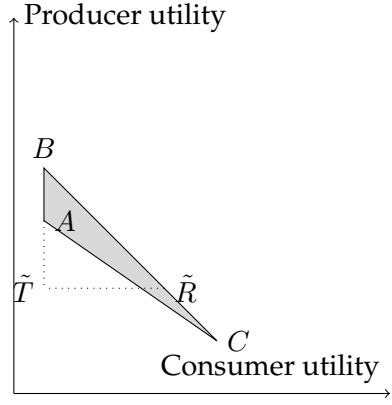
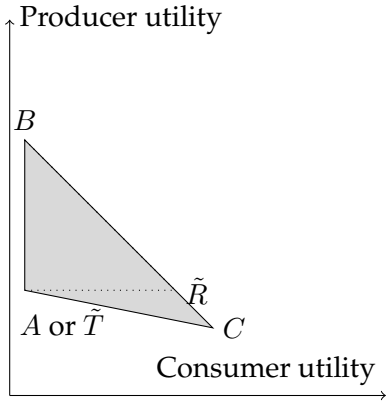
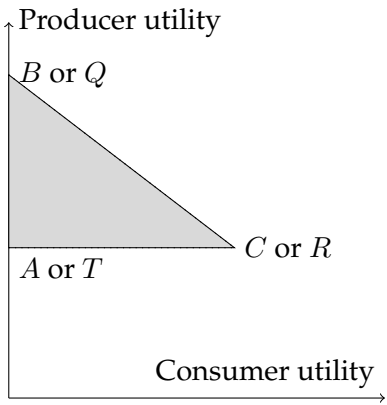
%%%%%%%%%%%%%%%%%%%%%%%%%%%%%%%%
%%%%%%%%%%%%%%%%%%%%%%%%%%%%%%%%

Similarly, Figure \ref{fig:S_Rmk2} illustrates the triangle $ABC$ for the case $\alpha^* \leq \eta$ and for different values of $\eta$ and $\beta$. The primary distinction in this scenario is that, in the non-private case, the optimal uniform price is $v_1$ instead of $v_2$. 

In Figure \ref{fig:S_Rmk2_1}, as $\beta$ decreases from $2\eta$ to $\tilde{\beta}$, $t^*$ increases from $0$ to $\alpha^*$ which implies $C$ moving down towards $\tilde{R}$. In Figure \ref{fig:S_Rmk2_2}, as $\beta$ decreases from $\tilde{\beta}$ to $0$, $t^*$ increases from $\alpha^*$ to $\eta$ which implies $A$ moving up towards $\tilde{T}$. In Figure \ref{fig:S_Rmk2_3}, as $\beta$ decreases from $2(1-\eta)$ to $0$, $t^*$ decreases from $1$ to $\eta$ and hence $A$ moves down from $B$ to $\tilde{T}$.
%%%%%%%%%%%%%%%%%%%%%%%%%%
%%%%%%%%%%%%%%%%%%%%%%%%%%
\begin{figure}
\centering
\begin{subfigure}{.5\textwidth}
  \centering
  \begin{tikzpicture}
    % Draw axes
    \draw[->] (0,0) -- (5,0) node[above left] {Consumer utility}; % x-axis
    \draw[->] (0,0) -- (0,5) node[right] {Producer utility}; % y-axis
    
    % Define points
    \coordinate (B) at (0.4,3); 
    \coordinate (A) at (0.4, 0.7);
    %\coordinate (AA) at (0.4,1.4);
    \coordinate (C) at (1.6,1.8); 
    \coordinate (D) at (2.7,0.7);
    \coordinate (DD) at (0.4, 1.4);
    \coordinate (CC) at (2, 1.4); 

    % Draw triangle
    \draw[fill=gray!30] (A) -- (B) -- (C) -- cycle; % Shaded triangle

    % Draw dotted lines with slope -1
    %\draw[dotted] (A) -- (D);
    \draw[dotted] (C) -- (CC);
    %\draw[dotted] (D) -- (AA);
    \draw[dotted] (DD) -- (CC);
    %\draw[dotted] (DD) -- (C);

    % Label points
    \node[below] at (A) {$A$};
    %\node[left] at (AA) {$A'$};
    \node[above] at (B) {$B$};
    \node[right] at (C) {$C$};
    %\node[right] at (D) {D};
    \node[above right] at (DD) {$\tilde{T}$};
    \node[right] at (CC) {$\tilde{R}$};
\end{tikzpicture}
  \caption{$\eta \leq 1/2$ and $\beta \geq \tilde{\beta}$}
  \label{fig:S_Rmk2_1}
\end{subfigure}%%%%%%%
\begin{subfigure}{.5\textwidth}
  \centering
  \begin{tikzpicture}
    % Draw axes
    \draw[->] (0,0) -- (5,0) node[above left] {Consumer utility}; % x-axis
    \draw[->] (0,0) -- (0,5) node[right] {Producer utility}; % y-axis
    
    % Define points
    \coordinate (B) at (0.2,3.4); 
    \coordinate (A) at (0.2,0.8);
    \coordinate (C) at (2.2,1.4); % Replace (3,3) with the actual coordinates for the third vertex
    \coordinate (D) at (0.2,1.4);

    % Draw triangle
    \draw[fill=gray!30] (A) -- (B) -- (C) -- cycle; % Shaded triangle

    % Draw dotted lines with slope -1
    \draw[dotted] (A) -- (D);
    \draw[dotted] (C) -- (D);
    %\draw[dotted] (B) -- (Bx);
    %\draw[dotted] (B) -- (By);

    % Label points
    \node[below] at (A) {$A$};
    \node[above] at (B) {$B$};
    \node[right] at (C) {$C$ or $\tilde{R}$};
    \node[above right] at (D) {$\tilde{T}$};
\end{tikzpicture}
  \caption{$\eta \leq 1/2$ and $\beta \leq \tilde{\beta}$}
  \label{fig:S_Rmk2_2}
\end{subfigure}%%%%%%%
\newline \vspace{2mm}
\begin{subfigure}{.5\textwidth}
  \centering
  \begin{tikzpicture}
    % Draw axes
    \draw[->] (0,0) -- (5,0) node[above left] {Consumer utility}; % x-axis
    \draw[->] (0,0) -- (0,5) node[right] {Producer utility}; % y-axis
    
    % Define points
    \coordinate (B) at (0.2,3.4); 
    \coordinate (A) at (0.2,2.4);
    \coordinate (C) at (2.2,1.4); % Replace (3,3) with the actual coordinates for the third vertex
    \coordinate (D) at (0.2,1.4);

    % Draw triangle
    \draw[fill=gray!30] (A) -- (B) -- (C) -- cycle; % Shaded triangle

    % Draw dotted lines with slope -1
    \draw[dotted] (A) -- (D);
    \draw[dotted] (C) -- (D);
    %\draw[dotted] (B) -- (Bx);
    %\draw[dotted] (B) -- (By);

    % Label points
    \node[below] at (A) {$A$};
    \node[above] at (B) {$B$};
    \node[right] at (C) {C or $\tilde{R}$};
    \node[below right] at (D) {$\tilde{T}$};
\end{tikzpicture}
  \caption{$\eta \geq 1/2$}
  \label{fig:S_Rmk2_3}
\end{subfigure}%%%%%%%%
\begin{subfigure}{.5\textwidth}
  \centering
  \begin{tikzpicture}
    % Draw axes
    \draw[->] (0,0) -- (5,0) node[above left] {Consumer utility}; % x-axis
    \draw[->] (0,0) -- (0,5) node[right] {Producer utility}; % y-axis
    
    % Define points
    \coordinate (B) at (0,4.3); 
    \coordinate (A) at (0,2);
    \coordinate (D) at (3,2);

    % Draw triangle
    \draw[fill=gray!30] (A) -- (B) -- (D) -- cycle; % Shaded triangle

    % Draw dotted lines with slope -1
    \draw[dotted] (A) -- (D);
    \draw[dotted] (B) -- (D);
    %\draw[dotted] (B) -- (Bx);
    %\draw[dotted] (B) -- (By);

    % Label points
    \node[below right] at (A) {$A$ or $T$};
    \node[right] at (B) {$B$ or $Q$};
    \node[right] at (D) {$C$ or $R$};
\end{tikzpicture}
  \caption{Non-private case $\beta=0$}
  \label{fig:S_Rmk2_4}
\end{subfigure}%%%%%%%%
\caption{Illustration of $\mathcal{S}$ for the case $\alpha^* \leq \eta$.}
\label{fig:S_Rmk2}
\end{figure}
%%%%%%%%%%%%%%%%%%%%%%%%%%%%%%%%
%%%%%%%%%%%%%%%%%%%%%%%%%%%%%%%%

\subsection{Does imposing privacy hurt or help the producer and the consumers?} 
We next discuss that imposing privacy always hurts the producer but, interestingly, may hurt or help consumers. We do so by comparing the set $\mathcal{S}$ with the non-private case in Figures \ref{fig:S_Rmk1_4} and \ref{fig:S_Rmk2_4}, studied in \cite{bergemann2015limits}. 

\paragraph{Imposing privacy helps consumers to increase their minimum utility.} In the non-private case, the consumer's utility can be as low as zero. However, in the private case, a minimum utility of $\beta t^* \alpha^* (v_2 - v_1)$ is ensured for the consumer. 

\paragraph{Imposing privacy hurts the producer by decreasing its minimum utility.} In the non-private case, the line $TR$ represents the optimal uniform pricing, which constitutes the minimum utility for the producer. However, in the private case, the producer's utility falls below this value for two reasons. One reason is that the triangle $ABC$ always intersects with the line $\tilde{T}\tilde{R}$, and the producer's utility at line $\tilde{T}\tilde{R}$ is already (weakly) smaller than the utility corresponding to line $TR$. To see the latter, observe that the producer's utility corresponding to line $\tilde{T}\tilde{R}$ is given by
\begin{equation} \label{eqn:line_TR_shift}
\beta \Big(t^* v_1 + (1-t^*) \alpha^* v_2 \Big) + (1-\beta) \Big(\text{ producer utility at line } TR \Big).
\end{equation}
Notice that this term is a convex combination of two components: the first component is an average of $\alpha^* v_2$ and $v_1$, and the second component is the maximum of these two terms. Hence, the utility expressed in the equation above is less than the maximum of $\alpha^* v_2$ and $v_1$, i.e., the producer utility at line $TR$.

Moreover, as demonstrated by Figures \ref{fig:S_Rmk1} and \ref{fig:S_Rmk2}, when $\alpha^* \geq \eta \geq \frac{1}{2}$ or $\alpha^* \leq \eta \leq \frac{1}{2}$, certain segmentations may yield a producer's utility even lower than that at the line $\tilde{T}\tilde{R}$. In the next section, when we explore the case with $K$ values, we precisely characterize when this phenomenon occurs in the general case.

\paragraph{Imposing privacy hurts the producer by decreasing its maximum utility.} In the non-private case, the maximum utility corresponds to first-degree price discrimination (i.e., point $Q$) is given by $\alpha^* v_2 + (1-\alpha^*) v_1$. In the private case, the maximum producer utility aligns with point $B$ in all cases and is equal to
\begin{equation} \label{eqn:producer_max_utility_privacy}
\beta \Big(t^* v_1 + (1-t^*) \alpha^* v_2 \Big) + (1-\beta) \Big(\alpha^* v_2 + (1-\alpha^*) v_1 \Big),
\end{equation}
which is lower than in the non-private case.

\paragraph{Imposing privacy can hurt consumers by decreasing their maximum utility.} For consumers, the maximum utility could either increase or decrease. In the non-private case, the highest utility the consumer can achieve is at point $R$. This could approach zero as $\alpha^*$ nears one. However, as mentioned earlier, the consumers' utility in the private case has a minimum of $\beta t^* \alpha^* (v_2-v_1)$, which does not diminish to zero as $\alpha^*$ increases to one. Thus, the maximum possible utility for consumers could be higher in the private case. To illustrate a potential decrease, consider $\eta \leq 1/2 \leq \alpha^*$, where consumers' maximum utility corresponds to point $C$ in Figure \ref{fig:S_Rmk1_1}. As $\beta$ increases to $2\eta$, $C$ converges to $B$, and $B$ itself moves towards a point on the y-axis, leading to consumers' utility approaching zero.

\subsection{The indirect effect of privacy}
Next, we shift our focus to the shape of $\mathcal{S}$ compared to the triangle $\tilde{T} B \tilde{R}$. As we mentioned earlier, this comparison captures the indirect effect due to the change in the producer's pricing strategy and illustrates how various segmentations can lead to different utilities for the producer and consumer. For example, as discussed before Assumption \ref{assump:beta:intermediate}, when $\beta$ is sufficiently large, $\mathcal{S}$ reduces to a single point, indicating no advantage in price discrimination.

\paragraph{The case $\alpha^* \geq \eta$ :} First, when $\eta \leq 1/2$, indicating that the difference between the high and low values is significant, the set $\mathcal{S}$ loses the $AC\tilde{R}$ portion due to the indirect effect of privacy. In other words, the change in pricing strategy primarily eliminates segmentations beneficial to the consumer while retaining those favorable to the producer. To see why this occurs, notice that the consumer gains utility when their value for the product is high (i.e., $v_2$), but the market is priced low (i.e., $v_1$). However, with a small $\eta$, the threshold $t^*$ decreases, implying that the high value's gain is substantial enough for the producer to risk setting the market price at $v_2$. This shift adversely affects the consumers' utility.

Second, when $\eta \geq 1/2$, as discussed earlier, the threshold $t^*$ becomes larger than $\eta$. In other words, the difference between the high and low values is not significant, and hence, the producer is more inclined to use the lower price $v_1$. This effect intensifies when $\beta$ is very large, as $t^*$ then exceeds both $\alpha^*$ and $\eta$. It implies that even the aggregated market, priced at $v_2$ in the non-private case, is now priced at $v_1$ with a probability of $(1-\beta) + \beta t^*$ (illustrated by point $C$ in Figure \ref{fig:S_Rmk1_2}). Roughly speaking, in this scenario, a segment is more likely to be priced at $v_1$, unless it has a very high proportion of consumers with value $v_2$, which corresponds to point $B$ and nearby points where first-degree price discrimination occurs with two completely separate segments. This is why, in this case, $\mathcal{S}$ predominantly encompasses the area near the line $BC$ and loses out on the area around $A$, which corresponds to segments of a more mixed nature priced at $v_2$. 

Finally, when $\eta \geq 1/2$ but $\beta$ is small, the previously described situation mitigates, and hence the whole triangle $AB\tilde{R}$ is recovered. However, the producer still remains slightly inclined to price markets at $v_1$, which is why points like $C$ fall below the uniform pricing line $A\tilde{R}$. However, as $\beta$ decreases, $C$ moves towards $\tilde{R}$, indicating that Figure \ref{fig:S_Rmk1_3} gradually aligns with Figure \ref{fig:S_Rmk1_4}.

\paragraph{The case $\alpha^* \leq \eta$ :} With $\eta \geq 1/2$, as illustrated in Figure \ref{fig:S_Rmk2_3}, we observe that the triangle $AC\tilde{T}$ is eliminated due to the privacy mechanism. This leads to two intuitions: first, the privacy mechanism benefits the consumer by eliminating segmentations that yield low utility for them. Second, it does not significantly affect segmentations that distinctly separate high and low value consumers. The reason for these phenomena lies in the fact that $\eta \geq 1/2$ suggests a small difference between high and low values, prompting the producer to favor pricing the market at $v_1$ over $v_2$. This inclination minimally impacts segmentations that effectively distinguish between high and low value consumers (represented by point $B$ and its neighboring points). Additionally, incorrectly setting the price to $v_1$ increases the consumer's utility.

With $\eta \leq 1/2$, the producer is more inclined to set the market price at $v_2$. However, the assumption $\alpha^* \leq \eta$ indicates a scarcity of high-value customers, making $v_1$ the optimal uniform price. This is why in Figures \ref{fig:S_Rmk2_1} and \ref{fig:S_Rmk2_2}, we observe points below the uniform pricing line $\tilde{T}\tilde{R}$. This effect is particularly intensified when $\beta$ is high, i.e., Figure \ref{fig:S_Rmk2_1}, as the threshold for setting the price high (i.e., $t^*$) approaches zero. This adversely impacts consumers since consumers gain utility only when they have a high value and the marker is priced at the low value $v_1$.
%%%%%%%%%%%%%%%%%%%%%%%%%%%%%%%%
%%%%%%%%%%%%%%%%%%%%%%%%%%%%%%%%
\subsection{Varying the privacy parameter} \label{section:monotone_beta}
So far, we have shown that implementing the privacy mechanism reduces both the minimum and maximum possible utility of the producer, in contrast to the non-private case, while increasing the minimum utility for the consumer. A natural question arises: do these changes amplify as the privacy parameter increases? For instance, does an increase in the privacy parameter $\beta$ lead to a further decrease in the producer's maximum utility and an increase in the minimum consumer utility? Interestingly, this is not always the case! In other words, while the privacy constraint always reduces the maximum producer utility, more privacy does not necessarily equate to a more significant reduction in this utility. Similarly, although the privacy mechanism ensures a non-zero minimum utility for the consumer, this minimum does not always increase with a higher privacy parameter. We next formalize this observation. 
%%%%%%%%%%%%%%%%%%%%%%%%%%%%%%%%
\paragraph{The impact of increasing the privacy parameter on the producer utility:} 
Let us start with the maximum possible producer utility. Notice that the maximum producer utility corresponds with the point $B$ in Figures \ref{fig:S_Rmk1} and \ref{fig:S_Rmk2}. 
\begin{lemma} \label{lemma:max_producer_nonmonotone}
The maximum producer utility across all segmentations is a decreasing function of $\beta$ over the interval $[0, \min\{2\eta, 2(1-\eta)\}]$ if and only if $\alpha^*, \eta \geq 1/2$ or $\alpha^*, \eta \leq 1/2$.   
\end{lemma}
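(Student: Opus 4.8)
The plan is to work directly with the closed form for the maximum producer utility. By \cref{theorem:S_K=2}, the maximum producer utility over all segmentations is attained at the point $B$, whose producer coordinate is given by \eqref{eqn:producer_max_utility_privacy}; call it
\[
f(\beta) := \beta\bigl(t^* v_1 + (1-t^*)\alpha^* v_2\bigr) + (1-\beta)\bigl(\alpha^* v_2 + (1-\alpha^*) v_1\bigr),
\]
where $t^* = (\eta - \beta/2)/(1-\beta)$. Since multiplying all values by $v_2 > 0$ rescales $f$ without affecting its monotonicity, I would first factor out $v_2$ and use $v_1 = \eta v_2$, reducing to the dimensionless function $\tilde f(\beta) = \beta(t^*\eta + (1-t^*)\alpha^*) + (1-\beta)(\alpha^* + (1-\alpha^*)\eta)$. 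The entire claim then concerns the sign of $\tilde f'(\beta)$ on $[0, \min\{2\eta, 2(1-\eta)\}]$.

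The key computation is the derivative. Differentiating $t^*$ gives the clean identity $\frac{dt^*}{d\beta} = \frac{\eta - 1/2}{(1-\beta)^2}$, after which a short calculation (collapsing the numerator $\eta-\beta+\beta^2/2$) yields
\[
\tilde f'(\beta) = \frac{\eta-\alpha^*}{2}\left(1 + \frac{2\eta-1}{(1-\beta)^2}\right) - (1-\alpha^*)\eta.
\]
The crucial structural observation is that, viewed as a function of the variable $s := (1-\beta)^{-2}$, the derivative $\tilde f'$ is \emph{affine} in $s$. Because $s$ is strictly increasing in $\beta$, the map $\beta \mapsto s$ is a bijection from $[0, \min\{2\eta,2(1-\eta)\}]$ onto an interval $[1, s_{\max}]$, and an affine function is $\le 0$ on an interval precisely when it is $\le 0$ at both endpoints. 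Evaluating at the left endpoint $s = 1$ (i.e.\ $\beta = 0$) gives $\tilde f'(0) = \eta(\eta-1) \le 0$ unconditionally. Hence ``$\tilde f$ is decreasing on the whole interval'' is equivalent to the single condition $\tilde f'(\beta_{\max}) \le 0$ at the right endpoint $\beta_{\max} = \min\{2\eta, 2(1-\eta)\}$; moreover, if that endpoint value is positive then $\tilde f$ strictly increases near $\beta_{\max}$, which secures the ``only if'' direction.

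It then remains to evaluate this endpoint condition in the two regimes. When $\eta < 1/2$ we have $\beta_{\max} = 2\eta$, where $t^* = 0$, and a direct substitution yields $\tilde f'(2\eta) = \eta(1-\eta)(2\alpha^*-1)/(1-2\eta)$, whose sign is that of $2\alpha^*-1$; thus the condition holds iff $\alpha^* \le 1/2$, i.e.\ iff both $\alpha^*,\eta \le 1/2$. When $\eta > 1/2$ we have $\beta_{\max} = 2(1-\eta)$, where $t^* = 1$, and similarly $\tilde f'(2(1-\eta)) = \eta(1-\eta)(1-2\alpha^*)/(2\eta-1)$, whose sign is that of $1-2\alpha^*$; the condition holds iff $\alpha^* \ge 1/2$, i.e.\ iff both $\alpha^*,\eta \ge 1/2$. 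The boundary case $\eta = 1/2$ is handled automatically: there the coefficient $2\eta - 1$ vanishes, so $\tilde f'(\beta) \equiv \eta(\eta-1) < 0$ and $\tilde f$ is decreasing for every $\alpha^*$, consistent with the stated region degenerating to all of $[0,1]$ at $\eta = 1/2$. Combining the three cases gives exactly the claimed characterization.

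I expect the main obstacle to be twofold: carrying out the derivative cleanly (the identity for $dt^*/d\beta$ and the simplification of $\eta - \beta + \beta^2/2$ are what make the expression tractable), and recognizing the affine-in-$(1-\beta)^{-2}$ structure, which is the device that turns a ``for all $\beta$'' statement into a finite endpoint check and thereby exposes the clean $\alpha^* \lessgtr 1/2$ dichotomy. Care is also needed with the signs of the denominators $1-2\eta$ and $2\eta-1$ across the two regimes and with the degenerate endpoint at $\eta = 1/2$.
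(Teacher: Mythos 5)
Your proposal is correct and follows essentially the same route as the paper's proof: both differentiate the closed form \eqref{eqn:producer_max_utility_privacy}, exploit a monotone structure in the derivative to reduce the ``for all $\beta$'' condition to a check at the right endpoint $\beta_{\max}=\min\{2\eta,2(1-\eta)\}$, and find that the endpoint sign is governed by $(1-2\eta)(1-2\alpha^*)$. The only cosmetic difference is presentational: the paper factors the derivative as a positive multiple of $-h(\beta)$ with $h(\beta)=(1-\beta)^2+c$ decreasing and evaluates $h(\beta_{\max})$ in one unified formula (using $(1-\beta_{\max})^2=(1-2\eta)^2$ in both regimes), whereas you phrase the same observation as affinity in $s=(1-\beta)^{-2}$ and split the endpoint computation into the cases $\eta\lessgtr 1/2$.
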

We defer the proof to the appendix. This result suggests that when $\alpha^* \geq 1/2 \geq \eta$ or when $\alpha^* \leq 1/2 \leq \eta$, we could see an increase in the maximum producer utility by increasing the privacy parameter. 
Next, we provide an intuition for this result and explain why increasing the privacy parameter might actually lead to an increase in the producer's utility in these cases. 

The maximum utility for a producer aligns with first-degree price discrimination, where consumers with identical values are grouped into a single segment. With privacy implemented, each segment is subject to potential alteration with a certain probability, possibly resulting in suboptimal pricing by the producer. As $\beta$ increases, the likelihood of such occurrences also rises.

Consider the scenario where $\eta$ is close to one, i.e., the high and low values are relatively close. In this case, the cost of mistakenly pricing a segment of high-value customers at a lower price is not that severe. However, the suboptimal pricing of a segment of low-value consumers at a higher price can be considerably costly, as the revenue will be completely lost. This especially intensifies when the aggregated market is predominantly composed of low-value consumers, i.e., when $\alpha^*$ is small.

This scenario, where $\eta$ is large and $\alpha^*$ is small, is one of the two cases where the nonmonotonicity occurs. In this context, mistakingly offering a low price to high-value consumers is not detrimental. However, mistakingly offering a high price to low-value consumers can significantly reduce the producer's utility. Why might increasing privacy potentially enhance the producer's maximum utility in this scenario? Because higher $\beta$ induces a more conservative pricing strategy, making the producer more inclined to opt for the lower price. Specifically, for $\eta > 1/2$, the pricing threshold $t^*$ shifts from $\eta$ towards $1$ as $\beta$ increases, implying that producers are less likely to select the higher price as the privacy parameter grows. Consequently, the risk of mistakingly offering a low price to high-value consumers diminishes, potentially boosting the producer's maximum utility. 

The other instance of nonmonotonicity arises when $\eta$ is small and $\alpha^*$ is large. In this scenario, there is a substantial difference between high and low values. In such cases, mistakenly offering a low price to high-value consumers can be particularly detrimental, leading to a significant reduction in maximum producer utility as $\beta$ increases. Consequently, with the increase of $\beta$, and as the observed market becomes less informative, the producer becomes more inclined to take the risk of selecting the higher price. In such circumstances, and when the market is mainly composed of high-value customers, this riskier pricing strategy might lead to an increase in producer utility. 
%%%%%%%%%%%%%%%%%%%%%%%%%%%%%%%%
\paragraph{The impact of increasing the privacy parameter on the consumer utility:}
\begin{lemma} \label{lemma:min_consumer_nonmonotone}
The minimum consumer utility across all segmentations is an increasing function of $\beta$ over the interval the interval $[0, \min\{2\eta, 2(1-\eta)\}]$ if and only if
$\eta \geq 1/2$.     
\end{lemma}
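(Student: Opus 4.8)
The plan is to reduce the statement to a one-variable calculus problem by first extracting an explicit formula for the minimum consumer utility from Theorem \ref{theorem:S_K=2}. Since $\mathcal{S}$ is the triangle $ABC$ and the consumer utility is the horizontal coordinate, the minimum consumer utility across all segmentations is the smallest first coordinate over $ABC$. Both $A$ and $B$ have first coordinate $\beta t^* \alpha^*(v_2-v_1)$ (because $A'$ and $B'$ lie on the vertical axis), so the first step is to confirm that these vertices actually attain the minimum, i.e.\ that the first coordinate of $C$ is no smaller. This amounts to checking $C'_1 = (v_2-v_1)\big[\alpha^* - (\alpha^*-t^*)_+/(1-t^*)\big] \ge 0$, which follows from a short case split on $\alpha^* \le t^*$ versus $\alpha^* > t^*$ (in the latter case the bracket simplifies to $t^*(1-\alpha^*)/(1-t^*)\ge 0$). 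Having established this, the minimum consumer utility is
\[
g(\beta) = \alpha^*(v_2-v_1)\,\beta\, t^*(\beta), \qquad t^*(\beta)=\frac{\eta-\beta/2}{1-\beta},
\]
where the expression for $t^*$ comes from Lemma \ref{lemma:pricing}.

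Next I would differentiate $g$. Factoring out the positive constant $\alpha^*(v_2-v_1)$ and writing $h(\beta)=\beta t^*(\beta)=(\eta\beta-\beta^2/2)/(1-\beta)$, a direct quotient-rule computation gives
\[
h'(\beta)=\frac{\tfrac12\beta^2-\beta+\eta}{(1-\beta)^2}.
\]
Since the denominator is positive on $[0,1)$, the sign of $g'$ equals the sign of the quadratic $q(\beta):=\tfrac12\beta^2-\beta+\eta$. The key structural observation is that $q$ is an upward parabola with vertex at $\beta=1$, hence \emph{strictly decreasing} on the entire relevant interval $[0,\min\{2\eta,2(1-\eta)\}]\subseteq[0,1)$, with $q(0)=\eta>0$. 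Consequently $g$ is increasing on the interval if and only if $q\ge 0$ throughout, and by monotonicity of $q$ this holds if and only if $q$ is nonnegative at the right endpoint $\beta_{\max}=\min\{2\eta,2(1-\eta)\}$.

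The final step is to evaluate $q(\beta_{\max})$. Here I would split into $\eta\ge 1/2$ (so $\beta_{\max}=2(1-\eta)$) and $\eta\le 1/2$ (so $\beta_{\max}=2\eta$). The pleasant point is that in both cases the endpoint value collapses to the same expression,
\[
q\big(\beta_{\max}\big)=\eta(2\eta-1),
\]
which is $\ge 0$ exactly when $\eta\ge 1/2$. This yields the ``if'' direction: when $\eta\ge 1/2$ we have $q\ge 0$ on the interval, so $g$ is increasing (at the boundary $\eta=1/2$ one even has $q(\beta)=\tfrac12(\beta-1)^2\ge0$). For the ``only if'' direction, when $\eta<1/2$ we get $q(\beta_{\max})<0$; since $q(0)>0$ and $q$ is continuous and strictly decreasing, $q$ changes sign once inside the interval, so $g$ strictly decreases near $\beta_{\max}$ and hence fails to be increasing.

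I expect the only real (and minor) obstacle to be the bookkeeping rather than the analysis: confirming that $A$ and $B$ genuinely attain the minimum consumer utility and handling the $\max/\min$ arguments and the $(\cdot)_+$ truncation in $t^*$ and $C'$ carefully, together with verifying that the two endpoint evaluations coincide at $\eta(2\eta-1)$. The differentiation and the sign analysis of the quadratic are routine.
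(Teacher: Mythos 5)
Your proposal is correct and follows essentially the same route as the paper's proof: both reduce the problem to the sign of the derivative of $\beta t^*(\beta)$, observe that this derivative can change sign at most once (from positive to negative) on $[0,\min\{2\eta,2(1-\eta)\}]$, and check nonnegativity at the right endpoint, which in both cases collapses to the condition $\eta \geq 1/2$. Your extra verification that the vertices $A$ and $B$ (rather than $C$) attain the minimum consumer utility is a careful touch the paper leaves implicit, but it does not change the substance of the argument.
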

This result suggests that, for any $\eta < 1/2$, the minimum consumer utility exhibits nonmonotonic behavior as $\beta$ varies within the interval $[0, 2\eta]$. For producer utility, as we noted, an increase in $\beta$ when $\eta$ is small leads to a preference for the higher price. Now, it is important to note that the consumer utility is non-zero only when their value is higher than the chosen price, an occurrence that becomes less frequent if the producer is more willing to select the higher price. Consequently, an increase in the privacy parameter could lead to a decrease in the minimum consumer utility in such scenarios.
%%%%%%%%%%%%%%%%%%%%%%%%%%%%%%%%
%%%%%%%%%%%%%%%%%%%%%%%%%%%%%%%%
%%%%%%%%%%%%%%%%%%%%%%%%%%%%%%%%
\section{Price discrimination limits under privacy: the general case}\label{Sec:GeneralK}
Here, we turn our focus to the general case with $K>2$ values. As we establish, once we consider the general case, the fundamental difference between our characterization and that of \cite{bergemann2015limits} without privacy becomes more apparent. In particular, the set of possible consumer and producer utilities is no longer a triangle and instead is a more nuanced polytope that we will characterize. However, we establish that many of the insights that we derived in the special case of $K=2$ continue to hold in the general setting as well.

We make use of the following notation that enables driving the analogue of \cref{lemma:pricing} in the general case. For any $i \in [K]$, let $\mathcal{X}_i^\beta \subseteq \mathcal{X}$ be the set of (observed) markets for which $v_i$ is an optimal price, i.e.,
\begin{equation} \label{eqn:X_i_beta}
\mathcal{X}_i^\beta = \Big \{ \hat{x} ~ \Big \vert ~ \mathcal{U}_p(v_i|\hat{x}) \geq \mathcal{U}_p(v_j|\hat{x}) \text{ for any } j \in [K] \Big \},    
\end{equation}
where $\mathcal{U}_p(\cdot|\hat{x})$ is defined in \eqref{eqn:expected_utility}. Going back to the case $K=2$, $\mathcal{X}_1^\beta$ and $\mathcal{X}_2^\beta$ corresponds to $\alpha \in [0, t^*]$ and $\alpha \in [t^*, 1]$, respectively. The following result establishes the necessary and sufficient condition for $\mathcal{X}_k^\beta$ to be non-empty.
%%%%%%%%%%%%%
%%%%%%%%%%%%%
\begin{lemma} \label{lemma:bar_beta}
For any $i \in [K]$, the set $\mathcal{X}_i^\beta$ is non-empty if and only if $\beta \leq \bar{\beta}_i$, where $\bar{\beta}_i$ is the unique solution of the following equation \footnote{$[z]_{+}$ denotes $\max\{z,0\}$.}:
\begin{equation} \label{eqn:bar_beta}
\frac{\bar{\beta}_i}{1-\bar{\beta}_i} :=
\max_{j} \frac{K \Big(v_i- \mathbbm{1}(j<i) v_j \Big)}{\Big[(K+1-j)v_j - (K+1-i)v_i \Big]_{+}}.
\end{equation}
\end{lemma}
%%%%%%%%%%%%%
%%%%%%%%%%%%%
Note that $\beta \geq \bar{\beta}_i$ implies that price $v_i$ is not optimal for any observed market, thereby precluding the producer from choosing $v_i$. To better understand this, recall that the expected utility of the producer, given an observed market $\hat{x}$ and upon choosing price $v_i$, denoted as $\mathcal{U}_p(v_i|\hat{x})$, is given by
\begin{equation}
\mathcal{U}_p(v_i|\hat{x}) = (1-\beta) ~ \mathcal{U}_p(v_i, \hat{x}) + \beta ~ \mathbb{E}_{x \sim \text{Unif}(\mathcal{X})} \left[\mathcal{U}_p(v_i, x) \right].
\end{equation}
Essentially, the expected utility of the producer can be viewed as a convex combination of their utility in the fully observed market scenario (analogous to the non-private case) and their utility under complete uncertainty, where they rely on a uniform prior over markets. As $\beta$ increases, indicating a noisier observed market, the weight the producer places on their utility from a uniform prior grows. Consequently, they may opt not to select certain prices that would result in lower utility when the market is drawn from a uniform distribution.

Specifically, for the case when $K=2$, one can verify $\bar{\beta}_1 = 2(1-\eta)$ and $\bar{\beta}_2 = 2\eta$. This is the rationale behind imposing Assumption \ref{assump:beta:intermediate} in the previous section, which guarantees that in the scenario where $K=2$, both prices are viable options. If this were not the case, only one pricing option would remain, causing the set $\mathcal{S}$ to reduce to a single point.

Our next result extends Proposition \ref{proposition:S_to_S'} to this general case. 
\begin{proposition} \label{proposition:S_to_S'_K}
The set $\mathcal{S}$ can be represented as
\begin{equation} \label{eqn:S_to_S'_K}
S = \beta \bm{c} + (1-\beta) \mathcal{S}',
\end{equation}
where $\bm{c} \in \mathbb{R}^2$ is a constant vector, given by
\begin{equation} \label{eqn:S_shift_K}
\bm{c} = \left [ \sum_{k=1}^K \mathbb{P}(\mathcal{X}_k^\beta) ~\mathcal{U}_c(v_k, x^*), \sum_{k=1}^K \mathbb{P}(\mathcal{X}_k^\beta) ~\mathcal{U}_p(v_k, x^*)\right]^\top,
\end{equation}
where $\mathbb{P}(\cdot)$ denotes the uniform distribution over $\mathcal{X}$, and $\mathcal{U}_c(\cdot,\cdot)$ and $\mathcal{U}_p(\cdot,\cdot)$ are as defined in \eqref{eqn:utilities}. Moreover, the set $\mathcal{S}'$ is given by
\begin{align} \label{eqn:S'_K}
\mathcal{S}' = \left \{ \Big[ \sum_{k=1}^K \gamma_k ~\mathcal{U}_c(v_k, x_k), \sum_{k=1}^K \gamma_k ~\mathcal{U}_p(v_k, x_k)  \Big]^\top
~ \Big \vert ~ 
x_k \in \mathcal{X}_k^\beta, \gamma_k \in [0,1],  \sum_{k=1}^K \gamma_k x_k = x^*, \sum_{k=1}^K \gamma_k = 1
\right \}.
\end{align}
\end{proposition}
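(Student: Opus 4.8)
The plan is to exploit the two-point structure of $\mathcal{M}_\beta$ to split every utility into a market-independent ``direct'' piece and a market-dependent ``indirect'' piece, and then reduce the indirect piece to the merged form defining $\mathcal{S}'$. First I would fix a segmentation $\sigma \in \Sigma$ and an optimal pricing rule $\phi$, and condition on the mechanism's realization. Since $\mathcal{M}_\beta(x)$ equals $x$ with probability $1-\beta$ and an independent $\mathrm{Unif}(\mathcal{X})$ draw with probability $\beta$, I would write
\begin{equation*}
U_p^\phi(x) = (1-\beta)\,\mathbb{E}_{p\sim\phi(x)}\big[\mathcal{U}_p(p,x)\big] + \beta\,\mathbb{E}_{\hat{x}\sim\mathrm{Unif}(\mathcal{X})}\mathbb{E}_{p\sim\phi(\hat{x})}\big[\mathcal{U}_p(p,x)\big],
\end{equation*}
and the analogous identity for $U_c^\phi$. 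Aggregating against $\sigma$ and using that $\mathcal{U}_p(v_k,\cdot)$ and $\mathcal{U}_c(v_k,\cdot)$ are linear in the market together with the segmentation constraint $\sum_x \sigma(x)x = x^*$, the direct piece collapses: because $\phi$ is optimal, on a uniform observation it almost surely prices $\hat{x}$ at the value $v_k$ for which $\hat{x}\in\mathcal{X}_k^\beta$ (ties occur only on the measure-zero boundaries $\partial\mathcal{X}_k^\beta$), so the $\beta$-term equals $\beta\sum_k \mathbb{P}(\mathcal{X}_k^\beta)\,\mathcal{U}_p(v_k,x^*)$, exactly the second coordinate of $\bm{c}$ (and likewise for the consumer coordinate). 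This establishes the $\beta\bm{c}$ shift and, as a byproduct, shows it is independent of the chosen $\phi$.

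It then remains to show that the indirect piece, namely $\sum_x\sigma(x)\,\mathbb{E}_{p\sim\phi(x)}[\mathcal{U}_p(p,x)]$ and its consumer analogue, ranges exactly over $\mathcal{S}'$ as $(\sigma,\phi)$ vary. The structural fact I would establish first is that each $\mathcal{X}_k^\beta$ is convex: by the same conditioning, $\mathcal{U}_p(v_k\mid\hat{x}) = (1-\beta)\mathcal{U}_p(v_k,\hat{x}) + \beta\,\mathbb{E}_{x\sim\mathrm{Unif}(\mathcal{X})}[\mathcal{U}_p(v_k,x)]$ is affine in $\hat{x}$, so each defining inequality $\mathcal{U}_p(v_k\mid\hat{x})\ge \mathcal{U}_p(v_j\mid\hat{x})$ is a halfspace and $\mathcal{X}_k^\beta$ is the intersection of finitely many halfspaces with the simplex.

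For the inclusion $\mathcal{S}\subseteq \beta\bm{c}+(1-\beta)\mathcal{S}'$ I would apply the merging technique. Given $(\sigma,\phi)$, since $\mathrm{supp}(\sigma)$ is finite and each $\phi(x)$ is supported on the finitely many values in $\mathcal{V}$, I would refine into (segment, price) pairs, assigning weight $\sigma(x)\phi(x)(v_k)$ to $(x,v_k)$; optimality of $\phi$ forces $x\in\mathcal{X}_k^\beta$ whenever this weight is positive. Setting $\gamma_k=\sum_x\sigma(x)\phi(x)(v_k)$ and, when $\gamma_k>0$, $x_k=\gamma_k^{-1}\sum_x\sigma(x)\phi(x)(v_k)\,x$, convexity of $\mathcal{X}_k^\beta$ gives $x_k\in\mathcal{X}_k^\beta$, while a direct check yields $\sum_k\gamma_k=1$ and $\sum_k\gamma_k x_k=x^*$, and—by linearity of $\mathcal{U}_p(v_k,\cdot)$ and $\mathcal{U}_c(v_k,\cdot)$ in the market—$\sum_k\gamma_k\mathcal{U}_p(v_k,x_k)$ equals the indirect producer piece (and similarly for the consumer). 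Hence the indirect contribution lies in $\mathcal{S}'$. The reverse inclusion is the easy construction: given a feasible $(\{x_k\},\{\gamma_k\})$ for $\mathcal{S}'$, I would take the segmentation placing mass $\gamma_k$ on $x_k$ and any optimal rule that prices $x_k$ deterministically at $v_k$ (legitimate precisely because $x_k\in\mathcal{X}_k^\beta$), and run the computation above backwards to recover $\beta\bm{c}+(1-\beta)s'$.

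I expect the main obstacle to be the careful bookkeeping in the merging step: handling randomized pricing rules through the refinement into deterministic (segment, price) pairs, preserving both utility coordinates simultaneously via the \emph{same} linear average $x_k$, and disposing of degenerate indices $k$ for which $\mathcal{X}_k^\beta$ is empty—by \cref{lemma:bar_beta} such $k$ necessarily carry $\gamma_k=0$ and drop out of both sums, so the constraint $x_k\in\mathcal{X}_k^\beta$ is vacuous there. A secondary technical point is the measure-zero treatment of pricing ties on the boundaries $\partial\mathcal{X}_k^\beta$ under $\mathrm{Unif}(\mathcal{X})$, which must be recorded so that $\bm{c}$ is well defined independently of the particular optimal rule $\phi$.
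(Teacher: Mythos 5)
Your proposal is correct and follows essentially the same route as the paper's proof: conditioning on the realization of $\mathcal{M}_\beta$ to split utilities into the market-independent piece (which collapses to $\beta\bm{c}$ by linearity of $\mathcal{U}_p(v_k,\cdot)$, $\mathcal{U}_c(v_k,\cdot)$ and the constraint $\sum_x \sigma(x)x = x^*$) and the indirect piece, which is then identified with $\mathcal{S}'$ via the same refine-into-(segment, price)-pairs-and-merge argument. If anything, you are more explicit than the paper on points it glosses over---the convexity of $\mathcal{X}_k^\beta$ needed for the merged market $x_k$ to remain in $\mathcal{X}_k^\beta$, the normalization of $x_k$, and the reverse inclusion via deterministic (or, when two $x_k$ coincide, randomized) optimal pricing---so no gaps remain.
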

%%%%%%%%%%%%%%%%%%%%%%%%%%%%%%%%
%%%%%%%%%%%%%%%%%%%%%%%%%%%%%%%%
This result shows that, in the general case, and similar to the case when $K=2$, the set $\mathcal{S}$, which depicts the limits of price discrimination under privacy mechanisms, is influenced by three factors: a constant shift $\beta \bm{c}$, a scaling factor $1-\beta$, and the set $\mathcal{S}'$ that determines its shape. Consequently, the insights derived for $K=2$ based on this characterization extend to the general case. In particular, the shift $\beta \bm{c}$ indicates that, in contrast to the non-private case, the consumers' minimum utility is not zero. Moreover, the scaling factor $1-\beta$ suggests that market segmentation generally becomes less impactful as the privacy factor $\beta$ increases.

Our primary goal now is to specify the set $\mathcal{S}'$. As we discussed in the case of $K=2$, the set $\mathcal{S}'$ captures the indirect impact of privacy, as it illustrates the effects of the change in the producer's pricing strategy, given that they know the privacy mechanism is applied. More specifically, $\mathcal{S}'$ is the set of potential pairs of utilities for the consumer and producer when the producer accurately perceives the segmented markets (as in the non-private case) but adopts the pricing rule from the private case (associated with sets $\{\mathcal{X}_k^\beta\}_k$) rather than the optimal pricing rule for the non-private scenario. In fact, in the non-private case with $\beta=0$, this set aligns with $\mathcal{S}$. 

Proposition \ref{proposition:S_to_S'_K} also suggests that our analysis can be limited to segmentations where, for any value $v_k$, there is at most one segment whose optimal corresponding price is $v_k$. This effectively broadens the insights of Proposition \ref{proposition:S_to_S'} from the case of $K=2$ and simplifies the characterization of $\mathcal{S}'$.
%%%%%%%%%%%%%%%%%%%%%%%%%%%%%%%%
%%%%%%%%%%%%%%%%%%%%%%%%%%%%%%%%
As we stated in the previous section, the set $\mathcal{S}'$ is a triangle for $K=2$. In general, we next establish that it is a convex polygon.
\begin{theorem} \label{theorem:S'_polytope}
Let $\mathcal{P}$ denote a convex polytope in $\mathbb{R}^{K^2}$, represented by the variables $\{z(i,j)\}_{i,j=1}^K$ and the following equations:
\begin{subequations} \label{eqn:polytope}
\begin{align}
& z(i,1) \geq \cdots \geq z(i,K) \geq 0 \text{ for all } i \in [K], \label{eqn:polytope_a} \\
& z(i,i) v_i - z(i,j) v_j \geq \frac{\beta}{K(1-\beta)} z(i,1)\Big ( (K+1-j) v_j - (K+1-i) v_i \Big ) \text{ for all } i,j \in [K], \label{eqn:polytope_b} \\
& \sum_{i=1}^K z(i,j) = \sum_{k=j}^K x^*(v_k) \text{ for all } j \in [K]. \label{eqn:polytope_c}
\end{align}
\end{subequations}
Then, the set $\mathcal{S}'$, given in \eqref{eqn:S'_K}, can be cast as a linear transformation of $\mathcal{P}$ from $\mathbb{R}^{K^2}$ to $\mathbb{R}^2$, given by
\begin{equation} \label{eqn:utilities_linear_mapping}
\mathcal{S}'= \left \{ 
\sum_{i=1}^{K-1} \sum_{j=i+1}^{K} (v_{j}-v_{j-1})z(i,j), ~ \sum_{i=1}^K v_i z(i,i)
\right \}.
\end{equation}
\end{theorem}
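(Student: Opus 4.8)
The plan is to exhibit an explicit, invertible linear change of variables between the parametrization $(\gamma_k,x_k)_{k=1}^K$ appearing in \eqref{eqn:S'_K} and the coordinates $\{z(i,j)\}$ defining $\mathcal{P}$, and then to check that under this correspondence the defining constraints of the two sets match and the two functionals in \eqref{eqn:utilities_linear_mapping} reproduce exactly the consumer and producer utilities. The natural map is
\begin{equation*}
z(i,j) := \gamma_i \sum_{k=j}^K x_i(v_k),
\end{equation*}
i.e.\ $z(i,j)$ is the weight $\gamma_i$ times the upper-tail mass of the market $x_i$ at the threshold $v_j$; in particular $z(i,1)=\gamma_i$ and $z(i,K)=\gamma_i x_i(v_K)$.

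For the forward inclusion I would first verify the two functionals. Since $\mathcal{U}_p(v_i,x_i)=v_i\sum_{k\ge i}x_i(v_k)$, we get $\gamma_i\,\mathcal{U}_p(v_i,x_i)=v_i z(i,i)$, so summing over $i$ reproduces $\sum_i v_i z(i,i)$. For the consumer term I use the telescoping identity $v_k-v_i=\sum_{j=i+1}^k (v_j-v_{j-1})$ and swap the order of summation to obtain $\gamma_i\,\mathcal{U}_c(v_i,x_i)=\sum_{j=i+1}^K (v_j-v_{j-1})\,z(i,j)$, which sums to the first functional in \eqref{eqn:utilities_linear_mapping}. The constraints are then immediate: \eqref{eqn:polytope_a} is monotonicity and nonnegativity of tail masses; \eqref{eqn:polytope_c} follows from $\sum_i\gamma_i x_i=x^*$ by summing tails; and \eqref{eqn:polytope_b} encodes $x_i\in\mathcal{X}_i^\beta$. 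For the last point I write $\mathcal{U}_p(v_i|\hat{x})=(1-\beta)\mathcal{U}_p(v_i,\hat{x})+\beta\,\mathbb{E}_{x\sim\mathrm{Unif}(\mathcal{X})}[\mathcal{U}_p(v_i,x)]$, use the symmetry fact $\mathbb{E}_{\mathrm{Unif}(\mathcal{X})}[x(v_k)]=1/K$ to evaluate the second term as $(K+1-i)v_i/K$, and rearrange the optimality inequality $\mathcal{U}_p(v_i|x_i)\ge\mathcal{U}_p(v_j|x_i)$; multiplying through by $\gamma_i/(1-\beta)$ turns it into exactly \eqref{eqn:polytope_b}.

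The reverse inclusion is the delicate part and the main obstacle. Given $z\in\mathcal{P}$, I would set $\gamma_i:=z(i,1)$ and, when $\gamma_i>0$, reconstruct the market by $x_i(v_j):=(z(i,j)-z(i,j+1))/\gamma_i$ with the convention $z(i,K+1):=0$; monotonicity \eqref{eqn:polytope_a} guarantees $x_i(v_j)\ge0$ and the telescoping sum gives $\sum_j x_i(v_j)=z(i,1)/\gamma_i=1$, so $x_i$ is a genuine market, while \eqref{eqn:polytope_b} (read backwards, dividing by the positive $\gamma_i/(1-\beta)$) certifies $x_i\in\mathcal{X}_i^\beta$. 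Constraint \eqref{eqn:polytope_c} then gives both $\sum_i\gamma_i=\sum_i z(i,1)=1$ (taking $j=1$ and using $\sum_k x^*(v_k)=1$) and $\sum_i\gamma_i x_i=x^*$, so $(\gamma_k,x_k)$ is feasible for \eqref{eqn:S'_K} and maps back to the right utility pair. The subtlety is the degenerate block $\gamma_i=z(i,1)=0$: here \eqref{eqn:polytope_a} forces $z(i,j)=0$ for all $j$, so this block contributes nothing to either functional or to $x^*$, and we may take $x_i$ to be any element of $\mathcal{X}_i^\beta$, which is nonempty whenever $\beta\le\bar{\beta}_i$ by \cref{lemma:bar_beta}. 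Conversely, if $\mathcal{X}_i^\beta=\emptyset$, the homogeneity of the $i$-th row of \eqref{eqn:polytope_b} together with \eqref{eqn:polytope_a} forces $z(i,\cdot)=0$ (a nonzero row, after normalization by $z(i,1)$, would produce a market in $\mathcal{X}_i^\beta$), so no price-$v_i$ segment is demanded on either side and the correspondence stays consistent. Assembling the two inclusions shows $\mathcal{S}'$ is precisely the image of $\mathcal{P}$ under \eqref{eqn:utilities_linear_mapping}, and convexity of $\mathcal{S}'$ follows since $\mathcal{P}$ is a polytope and the map is linear.
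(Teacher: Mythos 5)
Your proposal is correct and follows essentially the same route as the paper: the identical change of variables $z(i,j)=\gamma_i\sum_{k\ge j}x_i(v_k)$, the same verification that tail-monotonicity, aggregation, and the optimality condition $x_i\in\mathcal{X}_i^\beta$ (via $\mathbb{E}_{\mathrm{Unif}(\mathcal{X})}[x(v_k)]=1/K$) yield \eqref{eqn:polytope_a}--\eqref{eqn:polytope_c}, and the same telescoping computation identifying the two linear functionals with the consumer and producer utilities. Your explicit treatment of the reverse inclusion---reconstructing $(\gamma_i,x_i)$ from $z$, and handling the degenerate rows $z(i,1)=0$ and empty $\mathcal{X}_i^\beta$---is a welcome tightening of a step the paper's proof leaves implicit.
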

%%%%%%%%%%%%%%%%%%%%%%%%%%%%%%%%
%%%%%%%%%%%%%%%%%%%%%%%%%%%%%%%%
\noindent \textbf{\textit{Proof sketch:}} To understand how this result is established, recall the definition of $\mathcal{S}'$, which involves $K$ markets $x_1, \cdots, x_K$, with each market $x_i \in \mathcal{X}_i^\beta$. This implies that $v_i$ is an optimal price for $x_i$. Now, let $y(i,\cdot)$ represent the complementary cumulative distribution function of market $x_i$, defined as:
\begin{equation*}
y(i,j) := \sum_{k=j}^K x_i(v_k).
\end{equation*}
It is evident that:
\begin{equation*}
1= y(i,1) \geq \cdots \geq y(i,K) \geq 0 \text{ for all } i \in [K].
\end{equation*}
Furthermore, we verify that $v_i$ being the optimal price for $x_i$ necessitates:
\begin{equation} \label{eqn:optimal_pricing_x_i}
y(i,i) v_i - y(i,j) v_j \geq \frac{\beta}{K(1-\beta)} \Big ( (K+1-j) v_j - (K+1-i) v_i \Big ) \text{ for all } i,j \in [K].
\end{equation}
Therefore, conditions \eqref{eqn:polytope_a} and \eqref{eqn:polytope_b} are satisfied if we substitute $z(\cdot, \cdot)$ with $y(\cdot, \cdot)$. Additionally, the aggregated market imposes the condition $\sum_{k=1}^K \gamma_k x_k = x^*$. Using $y(\cdot, \cdot)$, this condition can be represented as:
\begin{equation}
\sum_{i=1}^K \gamma_i y(i,j) = \sum_{k=j}^K x^*(v_k) \text{ for all } j \in [K].
\end{equation}
By defining $z(i,j) = \gamma_i y(i,j)$, we immediately see that \eqref{eqn:polytope_c} is valid. The other two conditions, \eqref{eqn:polytope_a} and \eqref{eqn:polytope_b}, are also satisfied since all $y(i,\cdot)$ terms are multiplied by the same variable $\gamma_i$. Therefore \eqref{eqn:polytope} provide a representation for the space of segmentations in the definition $\mathcal{S}'$, i.e., $(\gamma_k)_{k=1}^K$ and $(x_k)_{k=1}^K$, in form of a convex polytope. It remains to show that the linear mapping \eqref{eqn:utilities_linear_mapping} is equal to the consumer and producer utilities. We defer this part to the appendix. $\square$

We should highlight that the proof outlined above provides a construction for a segmentation to achieve any feasible point for the consumer and producer utilities.

Before proceeding, let us highlight that in the special of $\beta=0$, the polytope characterized in \cref{theorem:S'_polytope} simplifies to a triangle as characterized in \cite{bergemann2015limits}. However, in our setting, with privacy, it takes a more nuanced form. Let us present an example illustrating some potential shapes of $\mathcal{S}'$. This example will serve as a basis to introduce and motivate several results concerning $\mathcal{S}'$ and, consequently, the set $\mathcal{S}$.
%%%%%%%%%%%%%%%%%%%%%%%%%%%%%%%%
%%%%%%%%%%%%%%%%%%%%%%%%%%%%%%%%
\begin{example} \label{example:S'_K_5}
\textup{
Consider the case $K=5$ with values $[v_k]_{k=1}^5 = [0.8, 2, 3, 4.2, 5]$ and $\beta = 0.3$. Figure \ref{fig:S'_example} illustrates $\mathcal{S'}$ for three distinct aggregated market values $x^*$. Specifically, the aggregated markets associated with Figures \ref{fig:S'_example_1}, \ref{fig:S'_example_2}, and \ref{fig:S'_example_3} are $(0.2,0.1,0.4,0.2,0.1)$, $(0.2,0.3,0.2,0.2,0.1)$, and $(0.2,0.1,0.1,0.05,0.55)$, respectively. The dashed triangle $TQR$ in the figures represents the non-private case, in which, as previously stated, the set $\mathcal{S}'$ coincides with the set $\mathcal{S}$.
}
%%%%%%%%%%%%%%%%%%%%%%%%%%
%%%%%%%%%%%%%%%%%%%%%%%%%%
\begin{figure}
\centering
\begin{subfigure}{.33\textwidth}
  \centering
  \includegraphics[width=1.1\linewidth]{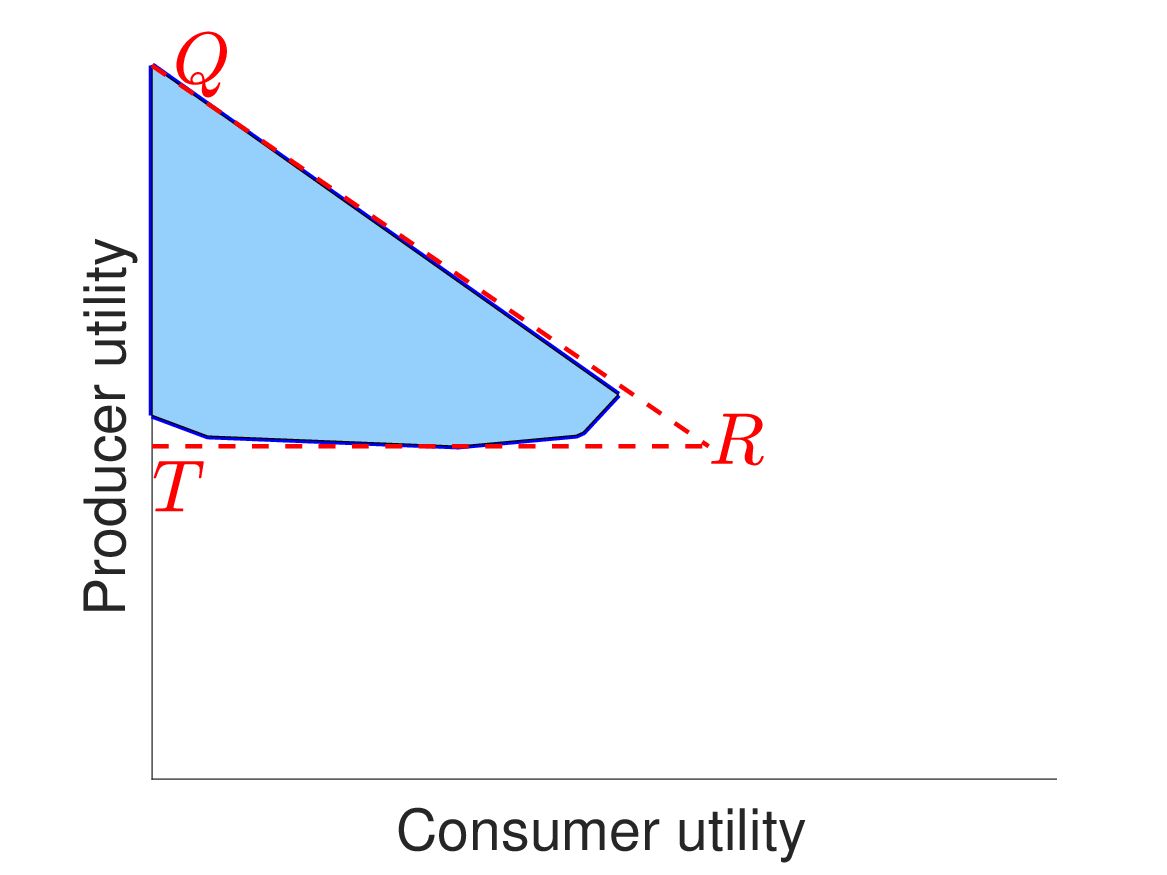}
  \caption{$x^*=(0.2,0.1,0.4,0.2,0.1)$}
  \label{fig:S'_example_1} 
\end{subfigure}%%%%%%%
\begin{subfigure}{.33\textwidth}
  \centering
  \includegraphics[width=1.1\linewidth]{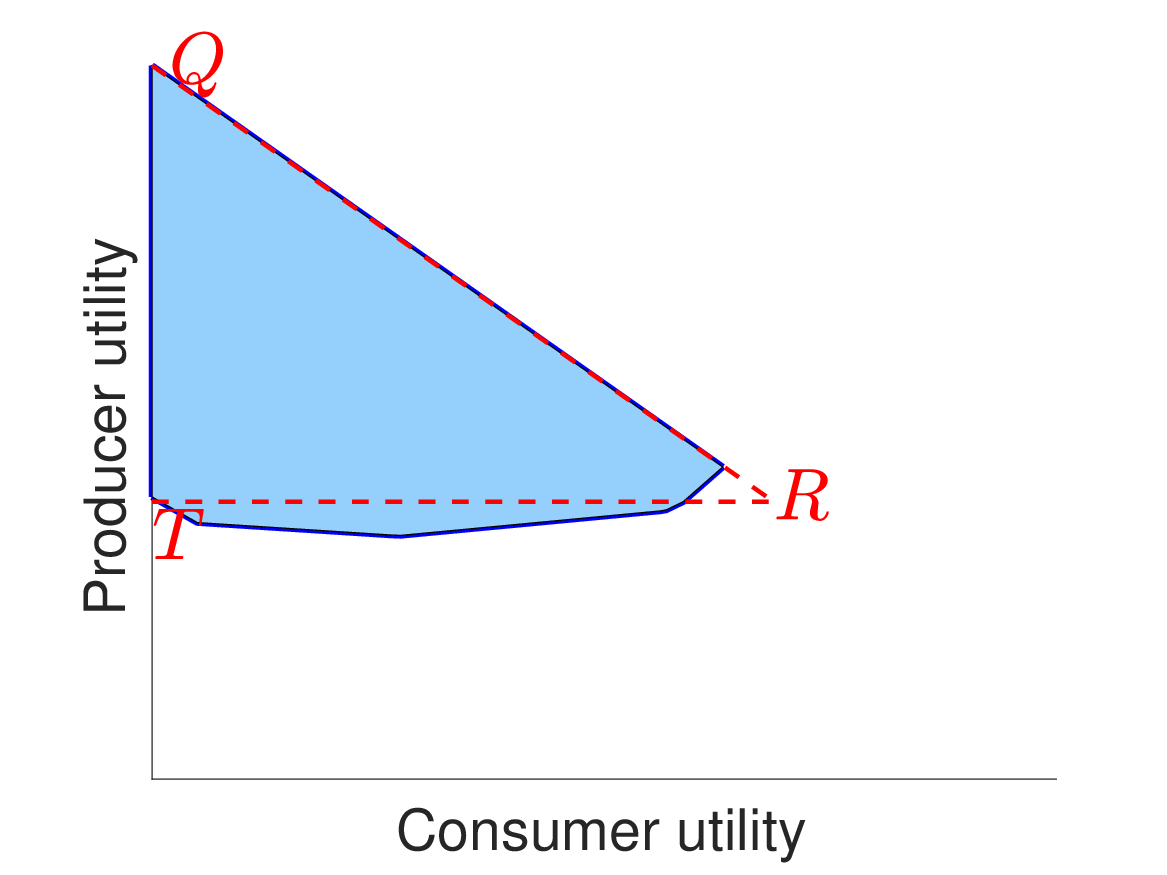}
  \caption{$x^* = (0.2,0.3,0.2,0.2,0.1)$}
  \label{fig:S'_example_2}
\end{subfigure}%%%%%%%
\begin{subfigure}{.33\textwidth}
  \centering
  \includegraphics[width=1.1\linewidth]{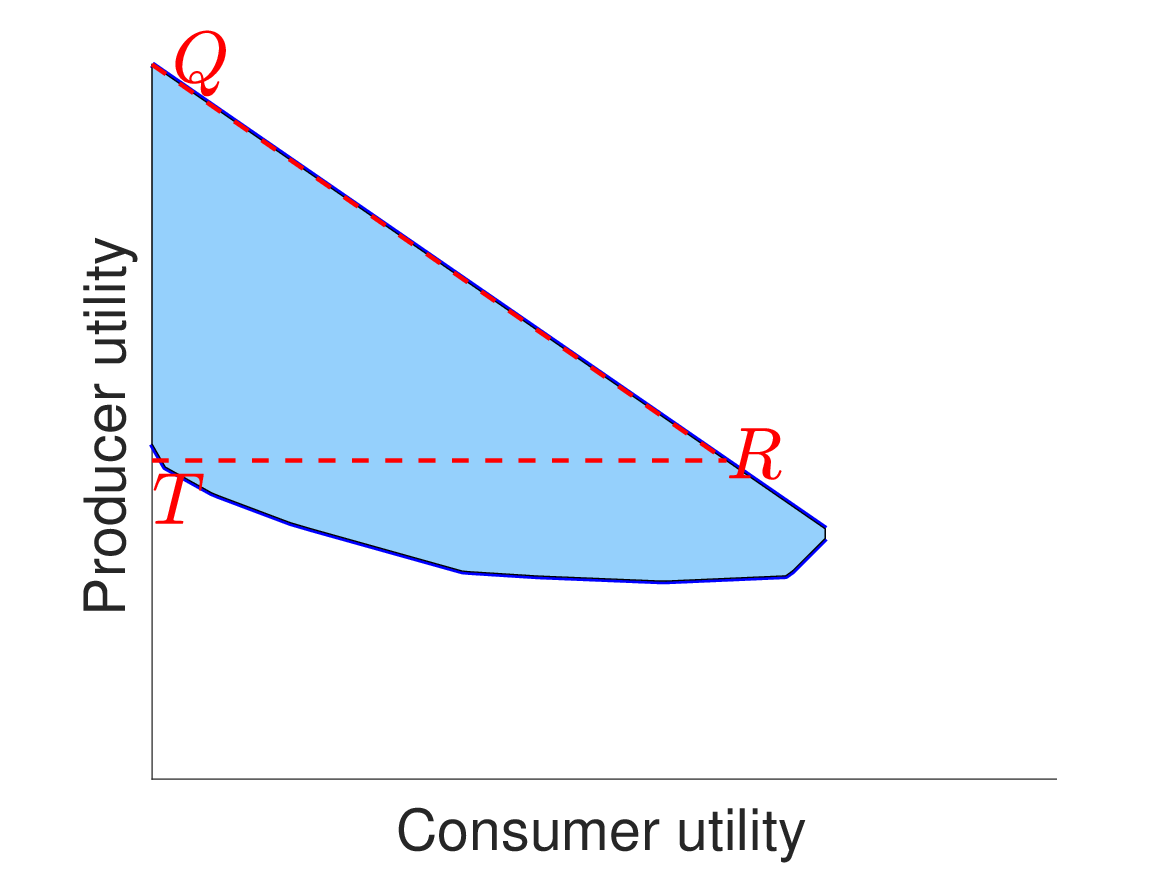}
  \caption{$x^*=(0.2,0.1,0.1,0.05,0.55)$}
  \label{fig:S'_example_3}
\end{subfigure}%%%%%%%
\caption{Illustration of $\mathcal{S'}$ with values $[v_k]_{k=1}^5 = [0.8, 2, 3, 4.2, 5]$, $\beta=0.3$, and three different aggregated markets.}
\label{fig:S'_example}
\end{figure}
%%%%%%%%%%%%%%%%%%%%%%%%%%%%%%%%
%%%%%%%%%%%%%%%%%%%%%%%%%%%%%%%%
\end{example}
%%%%%%%%%%%%%%%%%%%%%%%%%%%%%%%%
%%%%%%%%%%%%%%%%%%%%%%%%%%%%%%%%
Notice that the point $Q$ in Figure \ref{fig:S'_example} corresponds to the first-degree price discrimination and is given by
\begin{equation*}
\left(0, \sum_{k=1}^K x^*(v_k) v_k\right).    
\end{equation*}
%%%%%%%%% 
We next list a number of results regarding $\mathcal{S}$ and $\mathcal{S}'$. The first one is a straightforward consequence of \cref{theorem:S'_polytope}:
\begin{corollary}
The set $\mathcal{S}'$ forms a polygon and is situated to the right of the y-axis and to the left of the line passing through points $Q$ and $R$, which represents the total surplus equal to the maximum possible surplus, i.e., $\sum_{k=1}^K x^*(v_k) v_k$.
\end{corollary}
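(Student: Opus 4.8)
The plan is to read all three claims directly off the linear-image representation in \cref{theorem:S'_polytope}. Write a generic point of $\mathcal{S}'$ with coordinates $U_c = \sum_{i=1}^{K-1}\sum_{j=i+1}^{K} (v_j - v_{j-1}) z(i,j)$ and $U_p = \sum_{i=1}^K v_i z(i,i)$, where $\{z(i,j)\}$ ranges over the polytope $\mathcal{P}$ defined by \eqref{eqn:polytope_a}--\eqref{eqn:polytope_c}. The three assertions then translate into: $\mathcal{S}'$ is a polygon, $U_c \ge 0$, and $U_c + U_p \le \sum_k x^*(v_k) v_k$.

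For the polygon claim, I would first note that $\mathcal{P}$ is bounded: setting $j=1$ in \eqref{eqn:polytope_c} and summing over $i$ gives $\sum_i z(i,1) = \sum_k x^*(v_k) = 1$, while \eqref{eqn:polytope_a} forces $0 \le z(i,j) \le z(i,1)$, so every coordinate lies in $[0,1]$ and $\mathcal{P}$ is compact. Since \eqref{eqn:utilities_linear_mapping} is a linear map $\mathbb{R}^{K^2}\to\mathbb{R}^2$, its image $\mathcal{S}'$ is again a bounded convex polytope, i.e.\ a convex polygon (possibly degenerating to a segment or point). For the second claim, that $\mathcal{S}'$ lies to the right of the $y$-axis, I would observe it is immediate: in $U_c$ every factor $v_j - v_{j-1}$ is positive because the values are strictly increasing, and every $z(i,j) \ge 0$ by \eqref{eqn:polytope_a}, so $U_c \ge 0$.

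The only step requiring a short computation is the third, that $\mathcal{S}'$ lies to the left of the line $U_c + U_p = \sum_k x^*(v_k) v_k$ through $Q$ and $R$. Setting $v_0 := 0$, I would use the telescoping identities $v_i = \sum_{j=1}^i (v_j - v_{j-1})$ and $\sum_k x^*(v_k) v_k = \sum_{j=1}^K (v_j - v_{j-1}) \sum_{k\ge j} x^*(v_k)$, and then invoke \eqref{eqn:polytope_c} to rewrite the maximum surplus as $\sum_{j=1}^K (v_j - v_{j-1}) \sum_i z(i,j) = \sum_i \sum_{j=1}^K (v_j - v_{j-1}) z(i,j)$. Subtracting $U_c + U_p$ and grouping by $i$, the terms with $j > i$ cancel against $U_c$, leaving $\sum_i \big( \sum_{j=1}^i (v_j - v_{j-1}) z(i,j) - v_i z(i,i) \big)$. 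For each $i$, the monotonicity constraint \eqref{eqn:polytope_a} gives $z(i,j) \ge z(i,i)$ whenever $j \le i$, so $\sum_{j=1}^i (v_j - v_{j-1}) z(i,j) \ge z(i,i) \sum_{j=1}^i (v_j - v_{j-1}) = v_i z(i,i)$, making each summand nonnegative. Hence $U_c + U_p \le \sum_k x^*(v_k) v_k$, as claimed.

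The main (and mild) obstacle is the bookkeeping in this last step: one must reindex the sums so that the $j > i$ terms cancel exactly against the consumer coordinate and the $j \le i$ terms combine with the producer coordinate. Once the sums are organized this way, the inequality reduces entirely to the coordinatewise monotonicity $z(i,1) \ge \cdots \ge z(i,K)$ already encoded in \eqref{eqn:polytope_a}, so no use of the pricing-optimality constraints \eqref{eqn:polytope_b} is needed for this corollary.
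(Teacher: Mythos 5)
Your proof is correct and follows exactly the route the paper intends: the paper states this corollary as a straightforward consequence of \cref{theorem:S'_polytope} without spelling out the details, and your argument---boundedness of $\mathcal{P}$ via \eqref{eqn:polytope_a} and \eqref{eqn:polytope_c}, nonnegativity of the consumer coordinate, and the telescoping-plus-monotonicity bound giving $U_c + U_p \le \sum_k x^*(v_k) v_k$---is a correct fleshing-out of that claim. Your closing observation that the pricing-optimality constraints \eqref{eqn:polytope_b} play no role here is also accurate.
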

%%%%%%%%%
The figures in Figure~\ref{fig:S'_example} suggest that the set $\mathcal{S}'$ includes the point $Q$. This observation holds true when $\beta$ is sufficiently small, ensuring that all values within the support of $x^*$ remain viable as optimal prices. In such scenarios, a segmentation that groups all consumers with the same value into the same segment can attain point $Q$. However, if $\beta$ is so large that, for some $k$, the value $v_k$ is never selected by the producer while $x^*(v_k) > 0$, the producer cannot achieve the maximum utility as the consumers with value $v_k$ will never be priced at $v_k$. We formalize this observation in the following corollary:
\begin{corollary}
The set $\mathcal{S}'$ includes the point $Q$ if and only if
\begin{equation*}
\beta \leq \min_{k:x^*(v_k) > 0} \bar{\beta}_k.     
\end{equation*}
\end{corollary}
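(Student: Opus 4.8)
The plan is to prove the biconditional by establishing each direction separately, relying on the characterization of $\mathcal{S}'$ in \cref{theorem:S'_polytope} and the non-emptiness criterion from \cref{lemma:bar_beta}. Recall that point $Q = (0, \sum_k x^*(v_k) v_k)$ represents first-degree price discrimination, where consumer utility is zero and producer utility equals the total surplus. First I would observe that achieving $Q$ requires a segmentation in which every consumer with value $v_k$ is priced exactly at $v_k$ (so that consumer surplus vanishes and the producer extracts the full value from each consumer). In the language of $\mathcal{S}'$, this means using, for each $k$ with $x^*(v_k) > 0$, a degenerate market $x_k$ concentrated on $v_k$ (i.e., $x_k(v_k) = 1$) with weight $\gamma_k = x^*(v_k)$, and requiring that this market $x_k$ lie in $\mathcal{X}_k^\beta$ so that $v_k$ is indeed an optimal price for it.

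For the \emph{if} direction, suppose $\beta \leq \min_{k: x^*(v_k) > 0} \bar{\beta}_k$. The key step is to verify that for each $k$ in the support of $x^*$, the point mass at $v_k$ belongs to $\mathcal{X}_k^\beta$. By \cref{lemma:bar_beta}, $\beta \leq \bar{\beta}_k$ guarantees $\mathcal{X}_k^\beta$ is nonempty, but I need the stronger fact that the specific degenerate market concentrated on $v_k$ is in $\mathcal{X}_k^\beta$. I would check this directly against condition \eqref{eqn:polytope_b} (equivalently \eqref{eqn:optimal_pricing_x_i}): for the point mass at $v_k$, the complementary CDF satisfies $y(k,j) = 1$ for $j \leq k$ and $y(k,j) = 0$ for $j > k$, and substituting these into \eqref{eqn:optimal_pricing_x_i} should reduce precisely to the condition $\beta/(1-\beta) \leq \bar\beta_k/(1-\bar\beta_k)$ from \eqref{eqn:bar_beta}, since the point mass is the extremal market maximizing the left-hand slack. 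Once each such degenerate market is confirmed to lie in $\mathcal{X}_k^\beta$, assembling them with weights $\gamma_k = x^*(v_k)$ yields a valid point of $\mathcal{S}'$ whose image under the linear map \eqref{eqn:utilities_linear_mapping} is exactly $Q$ (consumer coordinate zero because $z(i,j)$ from a point mass at $v_i$ has $z(i,j)=0$ for $j>i$; producer coordinate $\sum_i v_i z(i,i) = \sum_i v_i x^*(v_i)$).

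For the \emph{only if} direction, I would argue by contraposition: suppose there exists some $k_0$ with $x^*(v_{k_0}) > 0$ but $\beta > \bar\beta_{k_0}$, so that $\mathcal{X}_{k_0}^\beta = \emptyset$ by \cref{lemma:bar_beta}, meaning $v_{k_0}$ is never an optimal price for any observed market. Then in any segmentation feasible for $\mathcal{S}'$, the consumers with value $v_{k_0}$ (who constitute a mass $x^*(v_{k_0}) > 0$ of the aggregated market) must reside in segments priced at values other than $v_{k_0}$. Any such consumer is either priced below $v_{k_0}$ (generating strictly positive consumer surplus) or priced above $v_{k_0}$ (in which case they do not purchase, so the producer fails to capture their value and total surplus falls short of the maximum). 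Either way, the resulting point cannot coincide with $Q$, which requires simultaneously zero consumer utility and maximal total surplus. The main obstacle I anticipate is the \emph{if} direction: making rigorous that the point mass at $v_k$ is the ``worst case'' market for satisfying membership in $\mathcal{X}_k^\beta$, i.e., that $\beta \le \bar\beta_k$ is exactly equivalent to the degenerate market lying in $\mathcal{X}_k^\beta$ rather than merely to $\mathcal{X}_k^\beta$ being nonempty. This requires checking that the binding constraint in \eqref{eqn:bar_beta} (the maximizing $j$) corresponds precisely to the inequality that the point mass renders tightest, which I would confirm by the monotonicity built into \eqref{eqn:polytope_a} together with the explicit form of the right-hand side of \eqref{eqn:optimal_pricing_x_i}.
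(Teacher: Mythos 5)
Your proposal is correct and takes essentially the same route as the paper's own (informal) justification: the ``if'' direction via the perfect segmentation that puts each point mass $\delta_{v_k}$ in its own segment priced at $v_k$, and the ``only if'' direction by observing that when $\mathcal{X}_{k_0}^\beta = \emptyset$ for some $k_0$ in the support, the mass $x^*(v_{k_0})>0$ of consumers must be priced either strictly below their value (yielding positive consumer surplus) or strictly above it (destroying surplus), so $Q$ is unreachable. The extra care you take in the ``if'' direction --- that one needs the point mass itself to lie in $\mathcal{X}_k^\beta$, not merely nonemptiness of that set --- is exactly the fact established inside the paper's proof of \cref{lemma:bar_beta}, where the degenerate market $\hat{x}(v_i)=1$ is shown to be the extremal witness, so your anticipated obstacle is already resolved there.
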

Returning to Example \ref{example:S'_K_5}, we calculate that $(\bar{\beta}_k)_{k=1}^5$ equals $(0.44, 0.91, 1, 0.91, 0.54)$. Therefore, with $\beta=0.5$, for instance, one of the values ($v_1$) becomes infeasible as optimal market prices. Figure \ref{fig:S'_example2} displays $\mathcal{S}'$ using the parameters from Example \ref{example:S'_K_5}, but with $\beta=0.5$ instead of $\beta=0.3$. As anticipated, the point $Q$ is no longer included in the set $\mathcal{S}'$.
%%%%%%%%%%%%%%%%%%%%%%%%%%
%%%%%%%%%%%%%%%%%%%%%%%%%%
\begin{figure}
\centering
\begin{subfigure}{.33\textwidth}
  \centering
  \includegraphics[width=1.1\linewidth]{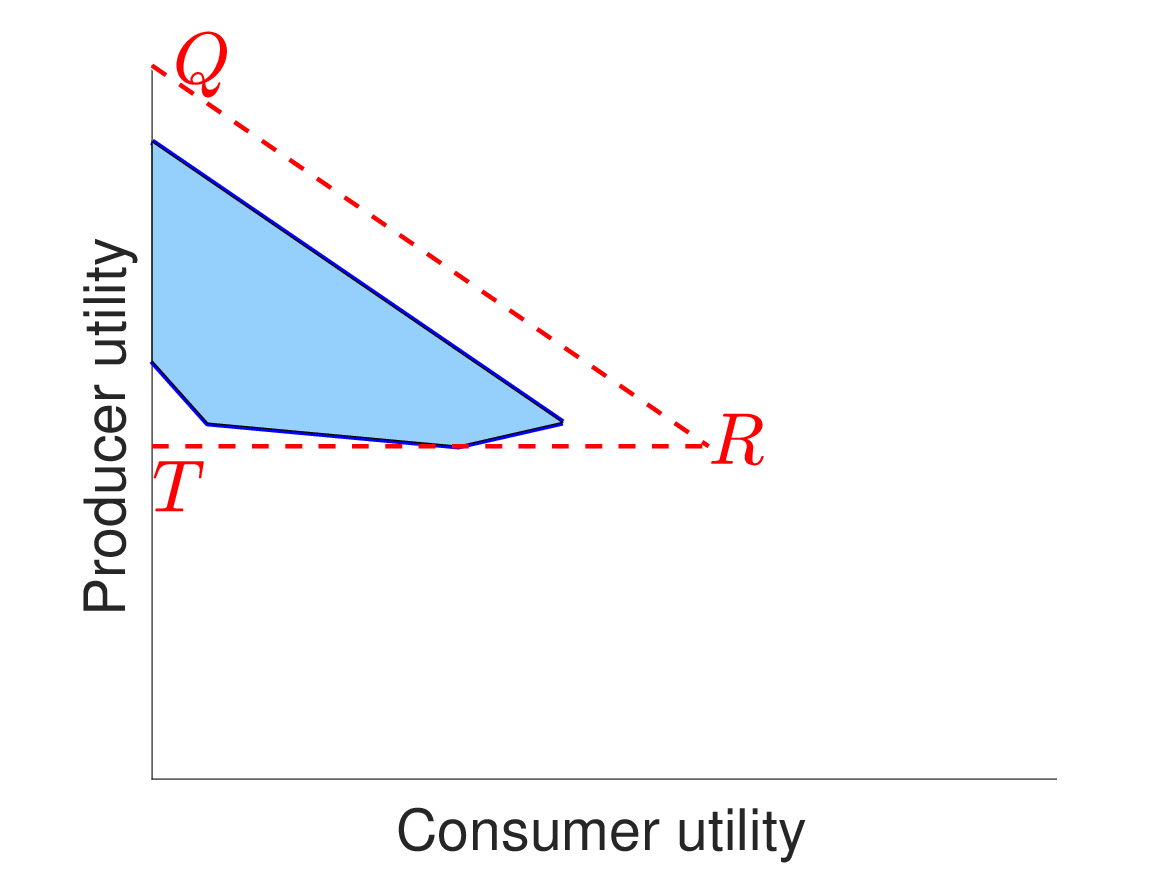}
  \caption{$x^*=(0.2,0.1,0.4,0.2,0.1)$}
  \label{fig:S'_example2_1} 
\end{subfigure}%%%%%%%
\begin{subfigure}{.33\textwidth}
  \centering
  \includegraphics[width=1.1\linewidth]{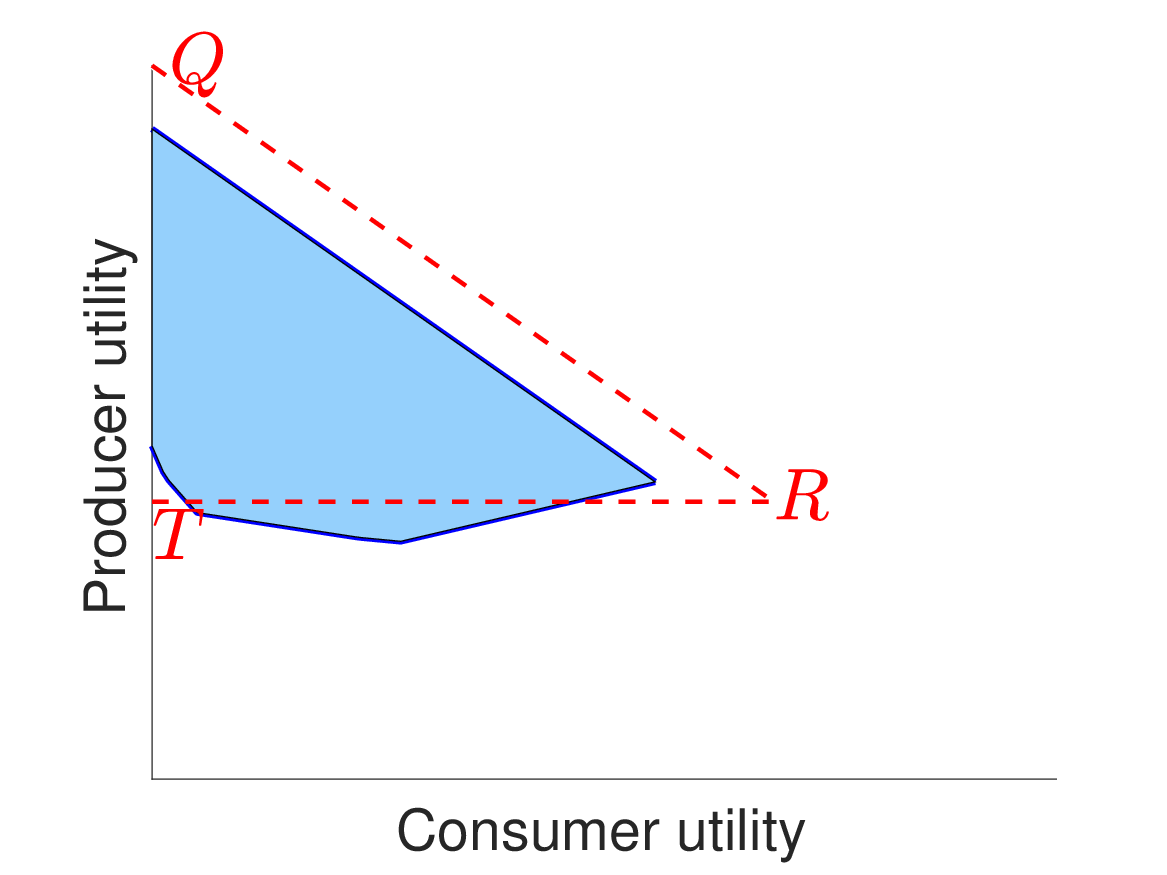}
  \caption{$x^* = (0.2,0.3,0.2,0.2,0.1)$}
  \label{fig:S'_example2_2}
\end{subfigure}%%%%%%%
\begin{subfigure}{.33\textwidth}
  \centering
  \includegraphics[width=1.1\linewidth]{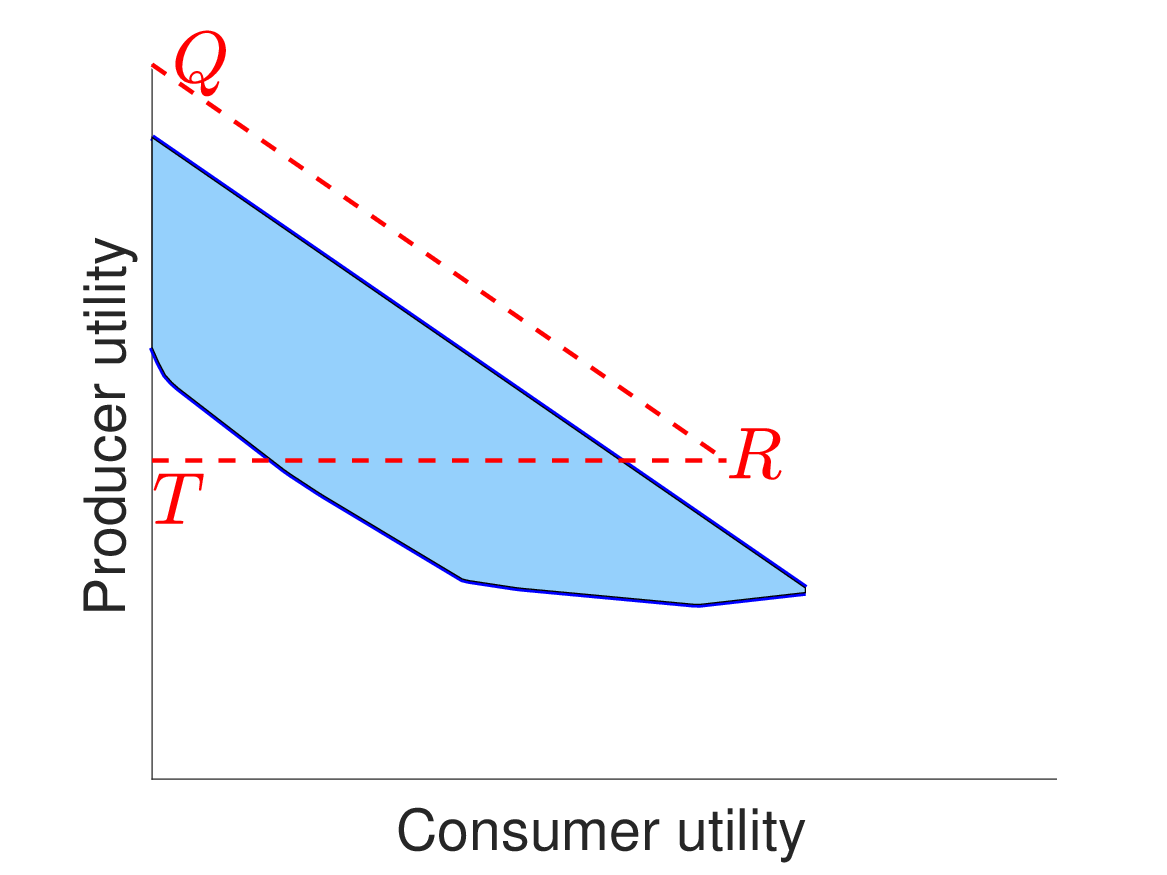}
  \caption{$x^*=(0.2,0.1,0.1,0.05,0.55)$}
  \label{fig:S'_example2_3}
\end{subfigure}%%%%%%%
\caption{Illustration of $\mathcal{S'}$ with values $[v_k]_{k=1}^5 = [0.8, 2, 3, 4.2, 5]$, $\beta=0.5$, and three different aggregated markets.}
\label{fig:S'_example2}
\end{figure}
%%%%%%%%%%%%%%%%%%%%%%%%%%%%%%%%
%%%%%%%%%%%%%%%%%%%%%%%%%%%%%%%%

Now, let us consider what these two facts imply for the set $\mathcal{S}$, i.e., the constraints on price discrimination under privacy considerations.
%%%%%%%%%%
\begin{fact}
The maximum possible utility of the producer from market segmentation is always weakly smaller when a privacy mechanism is applied compared to the non-private case.
\end{fact}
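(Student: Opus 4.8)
The plan is to show that the maximum producer utility under privacy is dominated by the maximum producer utility in the non-private case by combining the decomposition in \cref{proposition:S_to_S'_K} with the fact that the non-private maximum is attained at full segmentation (point $Q$). First I would recall that the producer's maximum utility over $\mathcal{S}$ equals $\max_{s \in \mathcal{S}} s_2$, the largest value of the second (producer) coordinate. By \eqref{eqn:S_to_S'_K}, every point $s \in \mathcal{S}$ has the form $s = \beta \bm{c} + (1-\beta) s'$ for some $s' \in \mathcal{S}'$, so the producer coordinate is $\beta \bm{c}_2 + (1-\beta) s'_2$, and maximizing over $\mathcal{S}$ reduces to maximizing the producer coordinate over $\mathcal{S}'$ and then applying the affine map.

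The key step is to bound each of the two pieces separately against the non-private benchmark. For the $\mathcal{S}'$ part, observe from \eqref{eqn:S'_K} that any point is a convex combination $\sum_k \gamma_k \mathcal{U}_p(v_k, x_k)$ with $\sum_k \gamma_k x_k = x^*$, where each $x_k$ is priced at $v_k$. Since pricing each segment optimally (in the non-private sense) can only raise the producer's revenue, and since the non-private maximum is exactly the first-degree value $\sum_k x^*(v_k) v_k$ (point $Q$), the producer coordinate of any $s' \in \mathcal{S}'$ is at most $\sum_k x^*(v_k) v_k$. For the $\bm{c}$ part, I would use \eqref{eqn:S_shift_K}: the constant shift's producer coordinate is $\sum_k \mathbb{P}(\mathcal{X}_k^\beta)\,\mathcal{U}_p(v_k, x^*)$, which is the producer's revenue on the aggregated market $x^*$ under the privacy-induced (hence non-optimal in the non-private sense) pricing rule, and this too is bounded by $\sum_k x^*(v_k) v_k$ since charging a single price $v_k$ to the whole market $x^*$ yields revenue $v_k \sum_{\ell \ge k} x^*(v_\ell) \le \sum_\ell x^*(v_\ell) v_\ell$. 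Taking the convex combination with weights $\beta$ and $1-\beta$, both bounded by the same quantity $\sum_k x^*(v_k) v_k$, yields that the maximum producer coordinate over $\mathcal{S}$ is at most the non-private maximum.

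I would then confirm that this bound is genuinely the correct non-private benchmark: in the non-private case ($\beta = 0$) the set $\mathcal{S}$ coincides with $\mathcal{S}'$ and the maximum producer utility is attained at $Q$, equal to $\sum_k x^*(v_k) v_k$, so the inequality above is precisely the claimed comparison. This establishes the fact. I would phrase the conclusion as: the maximum of the producer coordinate over $\mathcal{S}$ is bounded above by $\sum_{k=1}^K x^*(v_k) v_k$, the non-private first-degree value, with equality possible only when the shift and scaling leave the producer coordinate unchanged.

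The main obstacle is making rigorous the claim that the producer coordinate of the shift vector $\bm{c}$ does not exceed the first-degree value. Unlike the $\mathcal{S}'$ part, where the constraint $\sum_k \gamma_k x_k = x^*$ and optimal-pricing structure give the bound cleanly, the shift term averages the producer's revenue on the fixed aggregated market $x^*$ over the privacy pricing regions $\{\mathcal{X}_k^\beta\}$ with weights $\mathbb{P}(\mathcal{X}_k^\beta)$ summing to one; I must verify these weights indeed sum to one (the $\mathcal{X}_k^\beta$ tile $\mathcal{X}$ up to measure zero) and that each term $\mathcal{U}_p(v_k, x^*) = v_k \sum_{\ell \ge k} x^*(v_\ell)$ is dominated by $\sum_\ell x^*(v_\ell) v_\ell$, which follows since $v_k \mathbbm{1}(\ell \ge k) \le v_\ell$ for every $\ell$. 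Once this elementary domination is in place, the convexity argument closes the proof.
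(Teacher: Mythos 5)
Your proof is correct and follows essentially the same route as the paper: both arguments start from the decomposition $\mathcal{S} = \beta\bm{c} + (1-\beta)\mathcal{S}'$ of \cref{proposition:S_to_S'_K} and bound the producer coordinate of each piece by the first-degree surplus $\sum_{k=1}^K x^*(v_k)v_k$. The only (immaterial) difference is that you establish both bounds by direct elementary domination---$\mathcal{U}_p(v_k,x) \le \sum_{\ell} x(v_\ell)v_\ell$ summed against the weights $\gamma_k$ or $\mathbb{P}(\mathcal{X}_k^\beta)$---whereas the paper routes the shift term through the optimal uniform price $\max_k \mathcal{U}_p(v_k,x^*)$ (line $TR$) and the $\mathcal{S}'$ term through the observation that its maximum is attained at $Q$ when $Q \in \mathcal{S}'$.
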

%%%%%%%%%%%%%%%
To understand this, note that the maximum utility of the producer in the non-private case corresponds to point $Q$, which is expressed by
\begin{equation} \label{eqn:max_producer_non_private}
\sum_{k=1}^K x^*(v_k) v_k.
\end{equation}
However, in the private case, as Proposition \ref{proposition:S_to_S'_K} demonstrates, the utility is bounded by
\begin{equation} \label{eqn:max_producer_private}
\beta \left ( \max_{v_k}~ \mathcal{U}_p(v_k, x^*) \right) + (1-\beta) \left ( \text{maximum producer utility in } \mathcal{S}' \right ).
\end{equation}
This term is a convex combination of two components, where the second component is at most equal to \eqref{eqn:max_producer_non_private} (when $Q$ is included in $\mathcal{S}'$), and the first component corresponds to the optimal uniform pricing (line $TR$) and is weakly smaller than \eqref{eqn:max_producer_non_private}. This confirms the reduction in the maximum producer utility under privacy. Furthermore, when $\beta$ is sufficiently large such that $Q$ is not included in $\mathcal{S}'$, the second component in \eqref{eqn:max_producer_private} also becomes strictly smaller than \eqref{eqn:max_producer_non_private}.

Now, let us focus on the impact of the privacy mechanism on the minimum producer utility. In the non-private case, the minimum producer's utility is represented by the line $TR$. To compare this with the private case, we need to examine the lower boundary of $\mathcal{S}'$. Observing Figures \ref{fig:S'_example} and \ref{fig:S'_example2}, it appears that $\mathcal{S}'$ consistently intersects with line $TR$ and sometimes even crosses this line. We formalize this observation in the following result:
\begin{proposition} \label{proposition:minimum_S'}
Suppose $\beta>0$ and $x^*(v_k) > 0$ for all $k\in[K]$. The lowest point of the set $\mathcal{S}'$ lies on or beneath the line that represents optimal uniform pricing in the non-private case, i.e., line $TR$. Moreover, it is below this line if and only if the optimal pricing for the non-private case and the fully private case differ, i.e., there exists $i$ such that
\begin{equation} \label{eqn:crossing_condition}
\frac{(K+1-i)}{K} v_i > \frac{(K+1-j)}{K} v_j 
\quad \text{ for all } \,
j \in \argmax_{k \in [K]}  \mathcal{U}_p(v_k, x^*).
\end{equation}
\end{proposition}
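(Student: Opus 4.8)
The plan is to argue directly from the description of $\mathcal{S}'$ in \cref{proposition:S_to_S'_K}: a point of $\mathcal{S}'$ comes from a segmentation $\{(\gamma_k,x_k)\}_{k=1}^K$ with $x_k\in\mathcal{X}_k^\beta$ and $\sum_k\gamma_k x_k=x^*$, and its \emph{producer} utility, whose minimum is the ``lowest point,'' equals $\sum_k\gamma_k\,\mathcal{U}_p(v_k,x_k)$. I first record two facts. The producer utility on line $TR$ is the optimal uniform revenue $\max_k\mathcal{U}_p(v_k,x^*)$, attained on $J^*:=\argmax_k\mathcal{U}_p(v_k,x^*)$. Since each coordinate of $\mathrm{Unif}(\mathcal{X})$ has mean $1/K$, one computes $g_k:=\mathbb{E}_{x\sim\mathrm{Unif}(\mathcal{X})}[\mathcal{U}_p(v_k,x)]=\tfrac{K+1-k}{K}v_k$, so the fully private ($\beta=1$) optimal price maximizes $g_k$ on $I^*:=\argmax_k g_k$. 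In this language condition \eqref{eqn:crossing_condition} says exactly $I^*\cap J^*=\emptyset$. The ``on or beneath'' half is then immediate: the trivial single segment $x^*$, priced at its private-optimal price $v_{k_0}$, is a point of $\mathcal{S}'$ with producer utility $\mathcal{U}_p(v_{k_0},x^*)\le\max_k\mathcal{U}_p(v_k,x^*)$, i.e. on or below line $TR$.

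The engine for the equivalence is one lower bound coming from the private-optimality inequalities \eqref{eqn:optimal_pricing_x_i} (equivalently \eqref{eqn:polytope_b} under $z(i,j)=\gamma_i y(i,j)$). Rewriting optimality as $\mathcal{U}_p(v_i,x_i)\ge\mathcal{U}_p(v_j,x_i)+\tfrac{\beta}{1-\beta}(g_j-g_i)$, multiplying by $\gamma_i$, summing over $i$, and using linearity of $\mathcal{U}_p(v_j,\cdot)$ together with $\sum_i\gamma_i x_i=x^*$ yields
\[
\sum_k\gamma_k\,\mathcal{U}_p(v_k,x_k)\ \ge\ \mathcal{U}_p(v_j,x^*)+\frac{\beta}{1-\beta}\Big(g_j-\textstyle\sum_k\gamma_k g_k\Big)\qquad\text{for every }j.
\]
If $I^*\cap J^*\neq\emptyset$, choose $j^*$ in the intersection: then $\mathcal{U}_p(v_{j^*},x^*)$ is the line-$TR$ value and $g_{j^*}=\max_k g_k\ge\sum_k\gamma_k g_k$, so the right-hand side is at least line $TR$. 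Since this holds for every segmentation, the minimum cannot fall below line $TR$; with the previous paragraph it equals line $TR$. This is precisely the contrapositive of the ``only if'' direction.

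For the ``if'' direction I must construct a segmentation \emph{strictly} below line $TR$ whenever $I^*\cap J^*=\emptyset$. Fix $i^*\in I^*$ and $j^*\in J^*$, so $g_{i^*}>g_{j^*}$ and $\mathcal{U}_p(v_{i^*},x^*)<$ line $TR$. The idea is to split $x^*=\gamma\,x_A+(1-\gamma)x_B$ for small $\gamma>0$, with $x_A$ an auxiliary market priced (privately) at $v_{i^*}$ and $x_B\to x^*$ priced at $v_{j^*}$. Linearity of $\mathcal{U}_p$ in the market gives total producer utility $=\text{line }TR+\gamma\big(\mathcal{U}_p(v_{i^*},x_A)-\mathcal{U}_p(v_{j^*},x_A)\big)$, so it suffices to find $x_A\in\mathcal{X}_{i^*}^\beta$ with $\mathcal{U}_p(v_{i^*},x_A)<\mathcal{U}_p(v_{j^*},x_A)$. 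Taking $x_A$ supported on the two values $\{v_{i^*},v_{j^*}\}$ reduces this to a one-parameter check: membership in $\mathcal{X}_{i^*}^\beta$ constrains $\mathcal{U}_p(v_{i^*},x_A)$ to an interval whose endpoint is pushed below $\mathcal{U}_p(v_{j^*},x_A)$ by exactly the slack $\tfrac{\beta}{1-\beta}(g_{i^*}-g_{j^*})>0$, which is positive for every $\beta\in(0,1)$; hence the admissible window is nonempty.

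I expect this construction to be the main obstacle, and it carries two technical subtleties. First, one must keep $v_{i^*}$ private-optimal for $x_A$ against \emph{all} prices simultaneously, not just against $v_{j^*}$; placing the entire non-$v_{i^*}$ mass at the single value $v_{j^*}$ makes the remaining inequalities collapse to the ordering $g_{i^*}\ge g_k$, which holds because $i^*\in I^*$. Second, one must keep $x_B\in\mathcal{X}_{j^*}^\beta$ under the perturbation: if the private-optimal price of $x^*$ already lies outside $J^*$, the single segment $x^*$ is itself strictly below line $TR$ and nothing further is needed; otherwise $x^*\in\mathcal{X}_{j^*}^\beta$ for some $j^*\in J^*$, and choosing the split direction inward keeps $x_B$ feasible for all small $\gamma$. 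Ties within $I^*$ or $J^*$ are absorbed by selecting extremal indices and do not affect the argument.
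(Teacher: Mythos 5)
Your first two steps are sound: the ``on or beneath'' claim via the one-segment segmentation is exactly the paper's argument, and your condition-fails direction (summing the private-optimality inequalities against a price $j^*\in I^*\cap J^*$ and using $g_{j^*}\ge\sum_k\gamma_k g_k$) is the paper's own polytope argument in disguise --- summing \eqref{eqn:polytope_b} over $i$ after the substitution $z(i,j)=\gamma_i y(i,j)$ yields literally your inequality. The genuine gap is in the ``if'' direction, in the second ``technical subtlety'' you flag and then dismiss: the claim that the residual segment $x_B=(x^*-\gamma x_A)/(1-\gamma)$ stays in $\mathcal{X}_{j^*}^\beta$ for small $\gamma$. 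This fails when the non-private optimal price is tied and several tied prices remain privately optimal for $x^*$; call this set $\mathcal{I}_\beta\subseteq J^*$ (note that for $\beta>0$ all members of $\mathcal{I}_\beta$ must then share the same $g$-value, so such ties are possible and are not excluded by \eqref{eqn:crossing_condition}). Writing $\mathcal{U}_p(v_m|x):=(1-\beta)\,\mathcal{U}_p(v_m,x)+\beta g_m$ for the private expected revenue, which is affine in the market, the split $x^*=\gamma x_A+(1-\gamma)x_B$ gives, for any tied $m,m'\in\mathcal{I}_\beta$,
\begin{equation*}
\mathcal{U}_p(v_m|x_B)-\mathcal{U}_p(v_{m'}|x_B)=-\frac{\gamma}{1-\gamma}\Bigl(\mathcal{U}_p(v_m|x_A)-\mathcal{U}_p(v_{m'}|x_A)\Bigr),
\end{equation*}
so for small $\gamma$ the only admissible prices for $x_B$ are those in $\mathcal{I}_\beta$ with the \emph{smallest} revenue on $x_A$. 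Your correction term is then $\gamma\bigl(\mathcal{U}_p(v_{i^*},x_A)-\mathcal{U}_p(v_m,x_A)\bigr)$ for that snapped price $v_m$, not for the targeted $v_{j^*}$, and its sign can be positive.

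Concretely, take $K=3$, $v=(1,\,1.5,\,4)$, $x^*=(1/3,\,7/15,\,1/5)$, $\beta=0.1$. Then $g=(1,\,1,\,4/3)$ and the revenues on $x^*$ are $(1,\,1,\,0.8)$, so $J^*=\mathcal{I}_\beta=\{1,2\}$, $I^*=\{3\}$, and condition \eqref{eqn:crossing_condition} holds with $i^*=3$. Targeting $j^*=2$ gives $x_A$ supported on $\{v_2,v_3\}$ with $\mathcal{U}_p(v_3,x_A)=1.5-\epsilon'$ and $\mathcal{U}_p(v_2,x_A)=1.5$; but $\mathcal{U}_p(v_1,x_A)=1<1.5-\epsilon'$, so $x_B$ strictly prefers $v_1$, you must price it there, and the total is line $TR$ plus $\gamma(0.5-\epsilon')$, strictly \emph{above} line $TR$. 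Targeting $j^*=1$ symmetrically forces $x_B$ to $v_2$, whose revenue on $x_A$ is $0.375(1-\epsilon')<\mathcal{U}_p(v_3,x_A)=1-\epsilon'$, again a positive correction. So in this instance your construction never reaches below line $TR$ for any admissible choice, while the proposition demands a point strictly below it. The paper's proof is built around precisely this obstruction: its perturbing market $\tilde{x}$ is an $\epsilon$-perturbed \emph{equal-revenue} market, earning the same revenue $Z$ at every price outside $I^*$ (here $\mathcal{I}_1$ in the paper's notation), so splitting off $\delta\tilde{x}$ preserves the tie among all of $\mathcal{I}_\beta$; the residual segment may then be priced at any $j^*\in\mathcal{I}_\beta\subseteq J^*$, and the total is line $TR$ minus $\delta\epsilon v_{i}$, strictly below. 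To repair your argument you would need $x_A$ to have equal revenue at all prices of $\mathcal{I}_\beta$ simultaneously, not just at the targeted one --- which leads you back to the paper's construction.
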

%%%%%%%
We defer the proof of this proposition to the appendix. Revisiting Example \ref{example:S'_K_5}, we find that while the condition \eqref{eqn:crossing_condition} is not met for $x^*$ corresponding to Figure \ref{fig:S'_example_1}, it is fulfilled for the aggregated markets associated with Figures \ref{fig:S'_example_2} and \ref{fig:S'_example_3}. Consequently, the set $\mathcal{S}'$ intersects with line $TR$ in the former case without crossing it, whereas in the latter two cases, it extends below this line.

For the case $K=2$, the condition \eqref{eqn:crossing_condition} translates into either $\alpha^* \geq \eta \geq 1/2$ or $\alpha^* \leq \eta \leq 1/2$. The former condition, $\alpha^* \geq \eta \geq 1/2$, corresponds to Figures \ref{fig:S_Rmk1_2} and \ref{fig:S_Rmk1_3}. It is important to note that these figures depict the set $\mathcal{S}$, which is derived from the set $\mathcal{S}'$ after applying the scaling and shift described in Proposition \ref{proposition:S_to_S'_K}. Specifically, line $TR$ in $\mathcal{S}'$ corresponds to line $\tilde{T}\tilde{R}$ in these figures. Consequently, the lowest point of $\mathcal{S}'$ maps to point $C$ that is located below the line $\tilde{T}\tilde{R}$ in both cases. In contrast, for Figure \ref{fig:S_Rmk1_1}, where the condition \eqref{eqn:crossing_condition} is not met, the lowest point of $\mathcal{S}'$ maps to point $A$, which lies precisely on the line $\tilde{T}\tilde{R}$. Similarly, for the case $\alpha^* \leq \eta \leq 1/2$, as illustrated in Figures \ref{fig:S_Rmk2_1} and \ref{fig:S_Rmk2_2}, the lowest point of $\mathcal{S}'$ maps to point $A$  which is situated below the line $\tilde{T}\tilde{R}$.

It is worth emphasizing that condition \eqref{eqn:crossing_condition} can indicate the ambiguity that a privacy mechanism introduces into the producer's pricing strategy. Note that when this condition is not met, it implies that the optimal pricing strategy remains constant, regardless of the privacy level. This makes the privacy mechanism less detrimental to the producer's utility. Conversely, when this condition is satisfied, it indicates that the optimal strategy varies with the level of privacy, thereby diminishing the producer's utility due to suboptimal pricing decisions.

\begin{corollary} \label{corollary:minimum_producer}
The minimum utility of the producer across all segmentations is always weakly smaller when a privacy mechanism is applied compared to the non-private case.
\end{corollary}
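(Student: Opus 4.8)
The plan is to read both quantities off the affine decomposition $\mathcal{S} = \beta \bm{c} + (1-\beta)\mathcal{S}'$ from Proposition \ref{proposition:S_to_S'_K} and to compare each piece against the single benchmark
$$
m^* := \max_{k \in [K]} \mathcal{U}_p(v_k, x^*),
$$
which, by the result of \cite{bergemann2015limits} recalled just above, equals the minimum producer utility in the non-private case (attained along line $TR$ by optimal uniform pricing). Thus it suffices to show that the smallest producer-coordinate over $\mathcal{S}$ is at most $m^*$.

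First I would observe that minimizing the producer (second) coordinate over $\mathcal{S}$ reduces to minimizing it over $\mathcal{S}'$. Because the map $s' \mapsto \beta \bm{c} + (1-\beta)s'$ is affine with $1-\beta \geq 0$, the second coordinate of the image is weakly increasing in the second coordinate of $s'$, so
$$
\min_{s \in \mathcal{S}} s_2 = \beta\, \bm{c}_2 + (1-\beta)\, \min_{s' \in \mathcal{S}'} s'_2,
$$
where $\bm{c}_2$ is the producer component of $\bm{c}$. This expresses the private minimum as a convex combination of the direct-effect shift $\bm{c}_2$ and the indirect-effect term $\min_{\mathcal{S}'} s'_2$, and it mirrors the $K=2$ computation around \eqref{eqn:line_TR_shift}.

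Next I would bound each component by $m^*$. For the shift, Proposition \ref{proposition:S_to_S'_K} gives $\bm{c}_2 = \sum_{k} \mathbb{P}(\mathcal{X}_k^\beta)\,\mathcal{U}_p(v_k, x^*)$; since the argmax sets $\{\mathcal{X}_k^\beta\}_k$ cover $\mathcal{X}$ up to ties so that $\sum_k \mathbb{P}(\mathcal{X}_k^\beta)=1$, this is a convex combination of the numbers $\mathcal{U}_p(v_k, x^*)$ and hence $\bm{c}_2 \leq m^*$. For the indirect term, Proposition \ref{proposition:minimum_S'} already states that the lowest point of $\mathcal{S}'$ lies on or below line $TR$, i.e.\ $\min_{\mathcal{S}'} s'_2 \leq m^*$; alternatively, and without any support hypothesis, the single-segment segmentation $\sigma=\delta_{x^*}$ supplies a point of $\mathcal{S}'$ with producer coordinate $\mathcal{U}_p(v_{k^*}, x^*)$, where $v_{k^*}$ is a private-optimal price for $x^*$, and this is trivially at most $m^*$. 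Either way, the convex combination above is a weighted average of two quantities each at most $m^*$, so it is at most $m^*$, which proves the corollary.

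I do not expect a genuine obstacle here: once Propositions \ref{proposition:S_to_S'_K} and \ref{proposition:minimum_S'} are available, the argument is pure bookkeeping. The only points needing care are confirming that the affine transformation preserves which coordinate is minimized (monotonicity from $1-\beta \geq 0$) and that the shift weights truly sum to one; the boundary case $\beta=0$ gives equality since the private and non-private settings coincide, and the single-segment route above removes any reliance on the full-support assumption of Proposition \ref{proposition:minimum_S'}.
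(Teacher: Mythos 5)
Your proof is correct and follows essentially the same route as the paper's: decompose the private minimum via \cref{proposition:S_to_S'_K} as $\beta \bm{c}_2 + (1-\beta)\min_{s' \in \mathcal{S}'} s_2'$, bound the shift $\bm{c}_2$ as a convex combination of the values $\mathcal{U}_p(v_k, x^*)$ (hence at most the uniform-pricing profit on line $TR$), and bound the $\mathcal{S}'$ term by that same benchmark. Your fallback argument for the latter bound---exhibiting the single-segment point $\gamma_{\ell^*}=1$, $x_{\ell^*}=x^*$ with a private-optimal price $v_{\ell^*}$ as an element of $\mathcal{S}'$---is a worthwhile refinement: the paper instead cites \cref{proposition:minimum_S'}, whose hypotheses ($\beta>0$ and $x^*(v_k)>0$ for all $k$) are not assumed in the corollary, so your route closes a small gap that the paper's citation glosses over (and it is, in effect, the easy half of the paper's own proof of that proposition, inlined).
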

To see why this result holds, notice that, according to Proposition \ref{proposition:S_to_S'_K}, the minimum utility of the producer is given by
\begin{equation} \label{eqn:min_producer_private}
\beta \left ( \sum_{k=1}^K \mathbb{P}(\mathcal{X}_k^\beta) ~\mathcal{U}_p(v_k, x^*) \right) + (1-\beta) \left ( \text{minimum producer utility in } \mathcal{S}' \right ),
\end{equation}
which is a convex combination of two terms. The first one is upper bounded by $ \max_{v_k}~ \mathcal{U}_p(v_k, x^*)$, i.e., the minimum utility of the producer in the non-private case (corresponding to line $TR$).  \cref{proposition:minimum_S'} implies that the minimum producer utility in $\mathcal{S}'$ is (weakly) smaller than this term which establishes \cref{corollary:minimum_producer}.

We now shift our attention to the consumer's utility. As previously discussed following Proposition \ref{proposition:S_to_S'_K}, the privacy mechanism guarantees a non-zero minimum utility for the consumer, a stark contrast to the non-private case where the consumer's minimum utility can be zero. In the following result, we explore another phenomenon that contributes to consumer's minimum utility.
\begin{fact} \label{fact:distance_y_axis}
The set $\mathcal{S}'$ is distanced from the y-axis by a minimum of $x^*(v_K) (v_K - v_{K-1})$ when $\beta > \bar{\beta}_K$.
\end{fact}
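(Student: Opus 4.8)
My plan is to argue directly from the definition of $\mathcal{S}'$ in \eqref{eqn:S'_K}, whose first (consumer-utility) coordinate equals $\sum_{k=1}^{K}\gamma_k\,\mathcal{U}_c(v_k,x_k)$ over all feasible $(\gamma_k, x_k)$ with $x_k\in\mathcal{X}_k^\beta$, $\sum_k\gamma_k x_k = x^*$, and $\sum_k \gamma_k = 1$. Since ``distance from the $y$-axis'' is exactly the minimum of this coordinate over $\mathcal{S}'$, it suffices to show that every feasible point has consumer utility at least $x^*(v_K)(v_K-v_{K-1})$. The first step is to observe that the hypothesis $\beta > \bar\beta_K$ triggers \cref{lemma:bar_beta}: the set $\mathcal{X}_K^\beta$ is empty, so $v_K$ is never an optimal price. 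Consequently no admissible $x_K$ exists, which forces $\gamma_K = 0$ in any representation of a point of $\mathcal{S}'$, and every price actually used lies in $\{v_1,\dots,v_{K-1}\}$, hence is strictly below $v_K$.

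The second step is a termwise lower bound on the consumer utility. For each $k \le K-1$, the market $x_k$ is priced at $v_k \le v_{K-1}$, and from the definition of $\mathcal{U}_c$ in \eqref{eqn:utilities} the $l=K$ summand alone gives
\[
\mathcal{U}_c(v_k, x_k) \;\ge\; x_k(v_K)\,(v_K - v_k) \;\ge\; x_k(v_K)\,(v_K - v_{K-1}),
\]
where the first inequality drops the nonnegative contributions of the other values and uses $v_k \le v_K$, and the second uses $v_k \le v_{K-1}$. Intuitively, consumers holding the top value $v_K$ always purchase (since the offered price is below $v_K$) and so earn surplus at least $v_K - v_{K-1}$.

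The final step is to aggregate. Weighting by $\gamma_k$ and summing over $k=1,\dots,K-1$ (recall $\gamma_K=0$) yields
\[
\sum_{k=1}^{K}\gamma_k\,\mathcal{U}_c(v_k,x_k) \;\ge\; (v_K-v_{K-1})\sum_{k=1}^{K-1}\gamma_k\,x_k(v_K) \;=\; (v_K-v_{K-1})\,x^*(v_K),
\]
where the final equality is the feasibility constraint $\sum_k \gamma_k x_k = x^*$ evaluated at the coordinate $v_K$, again using $\gamma_K = 0$. Since this bound holds for every point of $\mathcal{S}'$, the set lies at horizontal distance at least $x^*(v_K)(v_K-v_{K-1})$ from the $y$-axis. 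I do not anticipate a genuine obstacle here; the one point warranting care is the reduction step, namely confirming that $\mathcal{X}_K^\beta = \emptyset$ both eliminates the $v_K$-priced segment ($\gamma_K=0$) and still leaves the segmentation constraint pinning the total $v_K$-mass to $x^*(v_K)$, so that the dropped segment cannot secretly carry that mass elsewhere.
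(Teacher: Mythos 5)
Your proof is correct and follows essentially the same route as the paper's: invoke \cref{lemma:bar_beta} to conclude $\mathcal{X}_K^\beta = \emptyset$ so no segment is priced at $v_K$, then observe that consumers with value $v_K$ (total mass $x^*(v_K)$ by the aggregation constraint) always purchase at a price at most $v_{K-1}$ and thus retain surplus at least $v_K - v_{K-1}$. Your write-up is merely a more formal rendering of the paper's terse argument, with the worthwhile extra care of noting that emptiness of $\mathcal{X}_K^\beta$ forces $\gamma_K = 0$ while the constraint $\sum_k \gamma_k x_k = x^*$ still pins the $v_K$-mass.
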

The main point to note is that for sufficiently large values of $\beta$, the set $\mathcal{S}'$ has a positive distance from the y-axis. This results in a guaranteed additional positive minimum utility term for the consumer within the set $\mathcal{S}$. An illustration of this phenomenon is provided in Figure \ref{fig:S'_example3}, which displays the set $\mathcal{S}'$ for the values specified in Example \ref{example:S'_K_5}, but with $\beta=0.6$. As stated earlier, in Example \ref{example:S'_K_5}, $\bar{\beta}_5$ equals $0.54$.
%%%%%%%%%%%%%%%%%%%%%%%%%%
%%%%%%%%%%%%%%%%%%%%%%%%%%
\begin{figure}
\centering
\begin{subfigure}{.33\textwidth}
  \centering
  \includegraphics[width=1.1\linewidth]{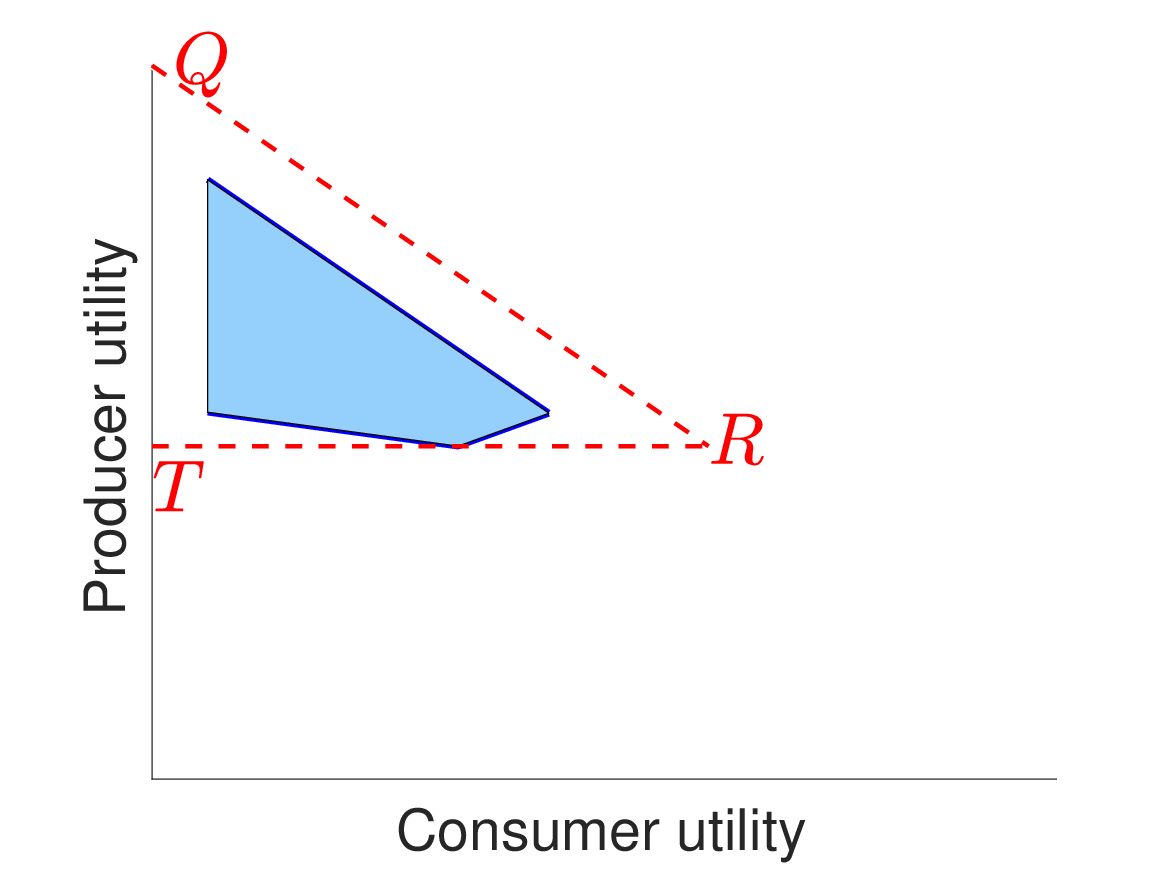}
  \caption{$x^*=(0.2,0.1,0.4,0.2,0.1)$}
  \label{fig:S'_example3_1} 
\end{subfigure}%%%%%%%
\begin{subfigure}{.33\textwidth}
  \centering
  \includegraphics[width=1.1\linewidth]{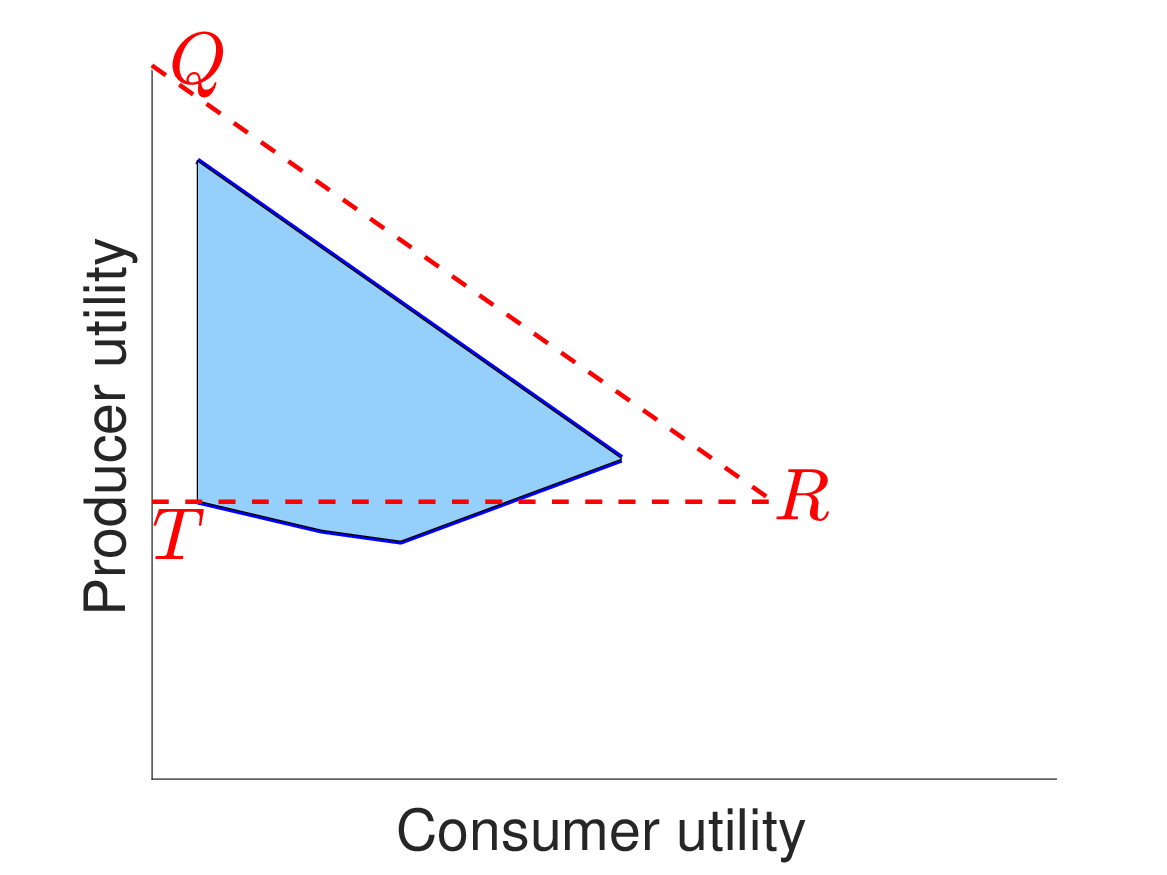}
  \caption{$x^* = (0.2,0.3,0.2,0.2,0.1)$}
  \label{fig:S'_example3_2}
\end{subfigure}%%%%%%%
\begin{subfigure}{.33\textwidth}
  \centering
  \includegraphics[width=1.1\linewidth]{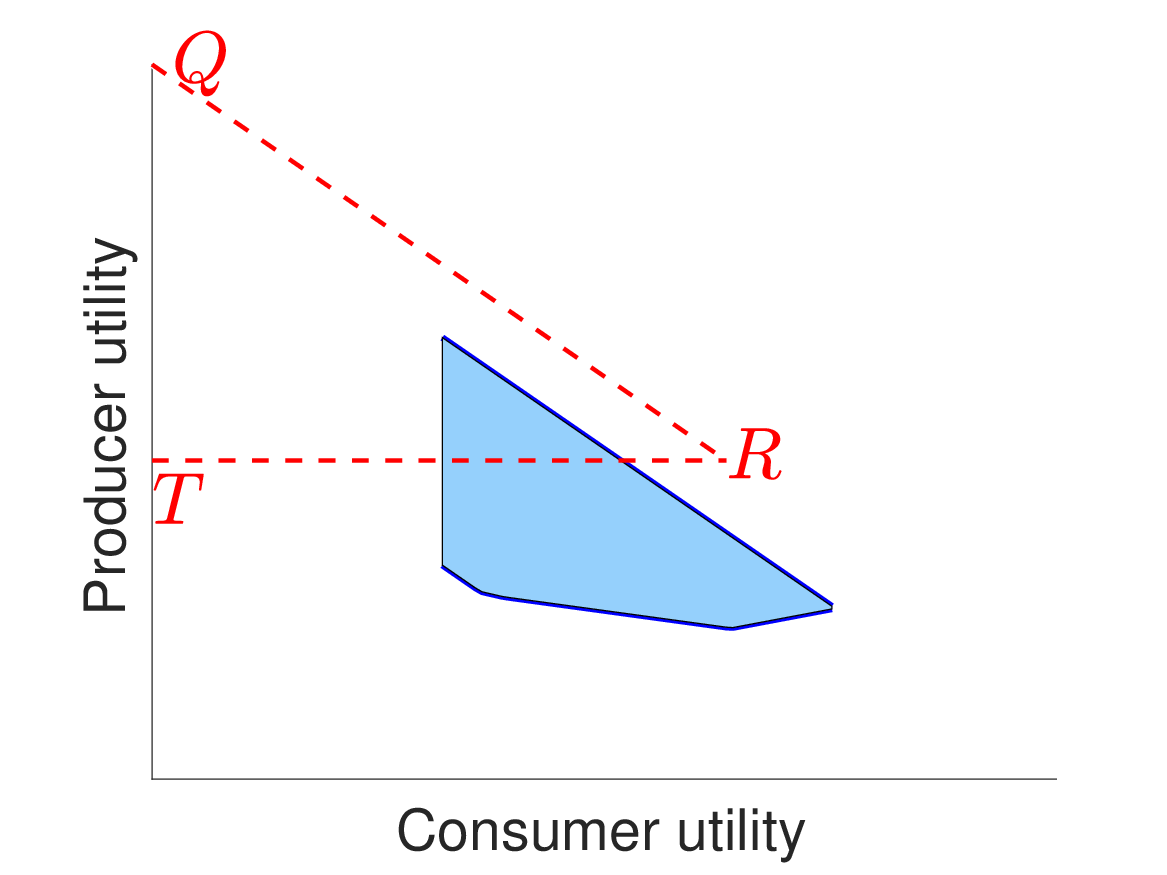}
  \caption{$x^*=(0.2,0.1,0.1,0.05,0.55)$}
  \label{fig:S'_example3_3}
\end{subfigure}%%%%%%%
\caption{Illustration of $\mathcal{S'}$ with values $[v_k]_{k=1}^5 = [0.8, 2, 3, 4.2, 5]$, $\beta=0.6$, and three different aggregated markets.}
\label{fig:S'_example3}
\end{figure}
%%%%%%%%%%%%%%%%%%%%%%%%%%%%%%%%
%%%%%%%%%%%%%%%%%%%%%%%%%%%%%%%%

Finally, as we discussed following Theorem \ref{theorem:S_K=2} through examples with $K=2$, the maximum utility for consumers, in general, could either increase or decrease under the privacy mechanism. 
The following result better highlights why both cases are possible.
\begin{proposition} \label{proposition:consumer_max}
Suppose $\beta < \min_{k} \bar{\beta}_k$. 
\begin{enumerate}[(i)]
\item Assume $v_1$ is the optimal uniform price in the non-private case, and the values satisfy the condition:
\begin{equation} \label{eqn:increasing_uniform}
v_{i+1} \geq \frac{K-i}{K+1-i}v_i \text{ for all } i\in[K-1].
\end{equation}
Then, the maximum utility for the consumer is a decreasing function of $\beta$.
%%%%%%%
\item Assume $v_K$ is the optimal uniform price in the non-private case, and the values satisfy the condition:
\begin{equation} \label{eqn:decreasing_uniform}
v_{i+1} \leq \frac{K-i}{K+1-i}v_i \text{ for all } i\in[K-1].
\end{equation}
Then, there exists $\tilde{\alpha}$ and $\tilde{\beta} > 0$ such that the maximum utility for the consumer is an increasing function over the interval $[0, \tilde{\beta}]$ when $x^*(v_K) \geq \tilde{\alpha}$.
\end{enumerate}    
\end{proposition}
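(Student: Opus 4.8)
The plan is to run everything through the decomposition $\mathcal{S}=\beta\bm{c}+(1-\beta)\mathcal{S}'$ of \cref{proposition:S_to_S'_K}. Taking first coordinates, the maximum consumer utility is $M(\beta)=\beta\,C(\beta)+(1-\beta)\,D(\beta)$, where $C(\beta)=\sum_k\mathbb{P}(\mathcal{X}_k^\beta)\,\mathcal{U}_c(v_k,x^*)$ is the consumer entry of the shift $\bm{c}$ in \eqref{eqn:S_shift_K} and $D(\beta)$ is the largest consumer coordinate on the polygon $\mathcal{S}'$; by \cref{theorem:S'_polytope} the latter is the value of the linear program $\max\sum_{i<j}(v_j-v_{j-1})z(i,j)$ over $\mathcal{P}$, whose only dependence on $\beta$ enters \eqref{eqn:polytope_b} through the increasing scalar $\lambda:=\tfrac{\beta}{K(1-\beta)}$ and the coefficients $(K+1-j)v_j-(K+1-i)v_i$. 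Writing $g(i):=\mathbb{E}_{x\sim\mathrm{Unif}(\mathcal{X})}[\mathcal{U}_p(v_i,x)]=\tfrac{K+1-i}{K}v_i$ (using $\mathbb{E}_{\mathrm{Unif}}[x(v_l)]=1/K$), these coefficients are exactly $K(g(j)-g(i))$. The role of \eqref{eqn:increasing_uniform} is to make $g$ nondecreasing in $i$—so the uniform-prior market is priced at the top value $v_K$—and of \eqref{eqn:decreasing_uniform} to make $g$ nonincreasing—so the uniform prior is priced at $v_1$. Hence (i) is the regime where $x^*$ wants the low price $v_1$ but the masking noise pulls the producer toward $v_K$, and (ii) is its mirror image; I would first verify these equivalences (and that $\bar\beta_K=1$ in (i), $\bar\beta_1=1$ in (ii)).

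For part (i) I would prove that $C$ and $D$ are both nonincreasing and that $C\le D$; then $M'(\beta)=(C(\beta)-D(\beta))+\beta C'(\beta)+(1-\beta)D'(\beta)\le0$. Monotonicity of $C$ is comparative statics: since $g$ is nondecreasing, increasing $\beta$ shifts weight onto $g(\cdot)$ (maximized at $v_K$) in the producer's objective $(1-\beta)\mathcal{U}_p(v_k,\hat{x})+\beta g(k)$, pushing the optimal price $p^*_\beta(\hat{x})$ upward for each observed $\hat{x}$; as $\mathcal{U}_c(\cdot,x^*)$ decreases in the price, $C(\beta)=\mathbb{E}_{\hat{x}}[\mathcal{U}_c(p^*_\beta(\hat{x}),x^*)]$ decreases. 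Monotonicity of $D$ is the dual statement on $\mathcal{P}$: with $g$ nondecreasing every constraint \eqref{eqn:polytope_b} with $i<j$ has nonnegative coefficient and tightens as $\lambda$ grows, so the consumer-favorable configurations—high-value mass assigned to low-priced segments—lose feasibility and the LP value falls. Finally $C\le D$ says the rightmost point of $\mathcal{S}'$ reproduces at least the consumer surplus of the uninformed direct-effect pricing; for small $\beta$ this is immediate since the trivial segmentation prices $x^*$ at $v_1$, giving $D\ge\mathcal{U}_c(v_1,x^*)=\max_k\mathcal{U}_c(v_k,x^*)\ge C$, and the general case is handled by the same exchange idea.

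For part (ii) a local argument at $\beta=0$ suffices. Fix $\tilde\beta$ below the threshold $\beta_0$ at which the point mass $\delta_{v_K}$ switches from being priced at $v_K$ to $v_1$. I would show $D(\beta)=O(1-x^*(v_K))$ uniformly on $[0,\tilde\beta]$: since $x^*$ concentrates on $v_K$ and any segment priced at $v_k$ with $k<K$ lies in $\mathcal{X}_k^\beta$, which stays a fixed distance from $\delta_{v_K}$, such a segment must carry a constant amount of mass below $v_K$, so its total weight is $O(1-x^*(v_K))$; the rest of the mass is priced at $v_K$ and yields zero consumer surplus. Meanwhile $C(0)=\sum_k\mathbb{P}(\mathcal{X}_k^0)\mathcal{U}_c(v_k,x^*)\to\sum_{k<K}\mathbb{P}(\mathcal{X}_k^0)(v_K-v_k)>0$ as $x^*(v_K)\to1$. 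Thus on $[0,\tilde\beta]$, $M(\beta)$ equals $\beta C(\beta)$ up to an error that vanishes with $1-x^*(v_K)$, and $\tfrac{d}{d\beta}[\beta C(\beta)]|_{0}=C(0)>0$; picking $\tilde\alpha$ close enough to $1$ and $\tilde\beta$ small then makes $M'>0$ on all of $[0,\tilde\beta]$ by uniformity.

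The crux is the rigorous control of $D(\beta)$ through the deforming polytope $\mathcal{P}(\lambda)$. In part (i) $\mathcal{P}$ does not simply shrink—constraints with $i>j$ loosen—so one must argue that only the objective-relevant directions matter when bounding the LP value; in part (ii) one needs both the $O(1-x^*(v_K))$ bound and control of $D'$. The comparative-statics claim behind $C$ is also delicate, since $g(k)-\mathcal{U}_p(v_k,\hat{x})$ need not be monotone in $k$ for a fixed $\hat{x}$; I expect to argue it region-by-region on the partition $\{\mathcal{X}_k^\beta\}$, using that $g$ is monotone and that as $\beta$ increases prices can only cross the partition boundaries upward. Establishing these monotonicity and boundary-crossing facts is the main work; the rest is bookkeeping with \cref{proposition:S_to_S'_K,theorem:S'_polytope}.
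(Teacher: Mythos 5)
Your frame coincides with the paper's: the decomposition $M(\beta)=\beta C(\beta)+(1-\beta)D(\beta)$ from \cref{proposition:S_to_S'_K}, with $D(\beta)$ the value of the linear program over the polytope of \cref{theorem:S'_polytope}, and your reading of conditions \eqref{eqn:increasing_uniform}--\eqref{eqn:decreasing_uniform} as monotonicity of $i\mapsto (K+1-i)v_i/K$ is exactly how the proof uses them. The genuine gap is that the step you defer as ``the main work''---controlling $D(\beta)$ through the deforming polytope---is the crux, and the paper's device for it is absent from your plan. The paper does not track which configurations ``lose feasibility''; it applies an envelope theorem (Milgrom--Segal) to the LP: the hypothesis $\beta<\min_k\bar\beta_k$ keeps the constraints \eqref{eqn:polytope_b} strictly slack at the perfect segmentation, so $\rho(\beta):=D(\beta)$ is absolutely continuous, and almost everywhere
\begin{equation*}
\rho'(\beta) \;=\; -\,\frac{d}{d\beta}\Bigl(\frac{\beta}{K(1-\beta)}\Bigr)\sum_{i,j}\lambda_{i,j}\,z^*_\beta(i,1)\,\bigl((K+1-j)v_j-(K+1-i)v_i\bigr),
\end{equation*}
with KKT multipliers $\lambda_{i,j}\ge 0$. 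The sign then follows from a complementary-slackness observation your proposal lacks: $\lambda_{i,j}>0$ forces constraint $(i,j)$ to be active, meaning the producer is indifferent between $v_i$ and $v_j$ while the segment is priced at $v_i$; since the LP maximizes consumer surplus, such ties are broken toward the lower price, so $i<j$, and under \eqref{eqn:increasing_uniform} every such coefficient $(K+1-j)v_j-(K+1-i)v_i$ is nonnegative, giving $\rho'\le 0$ (and $\ge 0$ under \eqref{eqn:decreasing_uniform}). This is precisely what neutralizes the ``constraints with $i>j$ loosen'' obstruction that you name but leave open.

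Beyond that, your route demands auxiliary facts the paper never needs and which your sketch does not secure. You need $C(\beta)\le D(\beta)$ for \emph{all} $\beta$, but your exchange argument only covers the range where $x^*\in\mathcal{X}_1^\beta$; under (i) the masking pushes prices upward, so $x^*$ exits $\mathcal{X}_1^\beta$ as $\beta$ grows, and whether $C\le D$ persists there is unclear. You also need monotonicity of $C$ and, in part (ii), uniform control of $C'$ and $D'$, none of which is carried out. The paper sidesteps all derivative bookkeeping for $C$ by proving only the comparison with $\beta=0$ (which is the sense in which it establishes the proposition): under (i) it pairs $\rho(\beta)\le\rho(0)$ with the one-line bound $C(\beta)=\sum_k\mathbb{P}(\mathcal{X}_k^\beta)\,\mathcal{U}_c(v_k,x^*)\le M-v_1=\rho(0)$ (each $\mathcal{U}_c(v_k,x^*)\le M-v_1$ and the weights sum to one), hence $M(\beta)\le\rho(0)=M(0)$; under (ii) it pairs $\rho(\beta)\ge\rho(0)$ with $C(\beta)\ge\epsilon\,x^*(v_K)\ge\rho(0)$, valid for $\beta\le\tilde\beta$ by continuity of $\beta \mapsto \sum_{k<K}\mathbb{P}(\mathcal{X}_k^\beta)(v_K-v_k)$ and for $x^*(v_K)\ge\tilde\alpha$ since $\rho(0)\le(1-x^*(v_K))v_{K-1}$. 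If you adopt the envelope/KKT sign argument for $\rho'$ and replace your derivative computation with these two static comparisons, the proof closes; as written, it has a genuine gap at its core step.
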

The proof is provided in the appendix. In particular, for the case when $K=2$, the conditions outlined in part (i), which lead to a decrease in maximum consumer utility under privacy, translate to $\alpha^* \leq \eta \leq 1/2$. On the other hand, the conditions of part (ii), which implies an increase in maximum consumer utility in the presence of the privacy mechanism, correspond to $\alpha^* \geq \eta \geq 1/2$ (also, as established in the proof, we can set $\tilde{\alpha} = \frac{1}{2-\eta}$ in this case).

But what differentiates these two cases? Why does privacy help consumer utility in one scenario while hurting it in another? The rationale here is similar to the explanations provided in Section \ref{section:monotone_beta}. The condition \eqref{eqn:increasing_uniform} indicates that under a uniform market prior, the highest price $v_K$ leads to the maximum expected utility for the producer, while the lowest price $v_1$ yields the minimum. Consequently, when privacy introduces ambiguity, the producer has more incentive to prefer higher prices. However, the assumption that $v_1$ is the optimal uniform price suggests that the market is mainly composed of low-value customers. Therefore, a tendency to set higher prices due to privacy considerations adversely affects consumer utility. In other words, privacy prompts the producer to risk setting higher prices, resulting in diminished consumer utility.

Conversely, the condition \eqref{eqn:decreasing_uniform} means that under a uniform market prior, the lowest price $v_1$ leads to the highest expected utility for the producer, while the highest price $v_K$ offers the least. Thus, with privacy in play, the producer is likely to adopt a more conservative stance and favor lower prices. That said, given that $v_K$ is the optimal uniform price and $x^*(v_K) \geq \tilde{\alpha}$, we have a market with a majority of high-value customers who benefit more when the market price is set lower. In summary, applying the privacy mechanism makes the producer to take a more conservative pricing approach, resulting in more consumers purchasing the product at a lower price than their value.
%%%%%%%%%%%%%%%%%%%%%%%%%%%%%%%%
%%%%%%%%%%%%%%%%%%%%%%%%%%%%%%%%
%%%%%%%%%%%%%%%%%%%%%%%%%%%%%%%%
%%%%%%%%%%%%%%%%%%%%%%%%%%%%%%%%
\section{Conclusion}\label{Sec:Conclusion}
This paper explores the intersection of price discrimination and consumer privacy. Building upon the framework of \cite{bergemann2015limits}, we first introduce a privacy mechanism that adds a layer of uncertainty regarding consumer values, thereby directly impacting both consumer and producer utilities in a market setting. We use a novel analysis based on a \emph{merging technique} to characterize the set of all pairs of consumer and producer utilities and establish that, unlike the existing results, it is not necessarily a triangle. Instead, it can be viewed as the linear mapping of a high-dimensional polytope into $\mathbb{R}^2$. We used our characterization to derive several insights. In particular, we prove that imposing a privacy constraint does not necessarily help the consumers and, in fact, can reduce their utility. 

We view our paper as a first attempt to understand how user privacy impacts the limits of price discrimination. There are several exciting directions to explore in this vein. For instance, an interesting future direction is to extend our results to a setting with a multiproduct seller (similar to that of \cite{haghpanah2022limits}). 
%%%%%%%%%%%%%%%%%%%%%%%%%%%%%%%%
%%%%%%%%%%%%%%%%%%%%%%%%%%%%%%%%
\section{Acknowledgments}
Alireza Fallah acknowledges support from the European Research Council Synergy Program, the National Science Foundation under grant number DMS-1928930, and the Alfred P. Sloan Foundation under grant G-2021-16778. The latter two grants correspond to his residency at the Simons Laufer Mathematical Sciences Institute (formerly known as MSRI) in Berkeley, California, during the Fall 2023 semester. Michael Jordan acknowledges support from the Mathematical Data Science program of the Office of Naval Research under grant number N00014-21-1-2840 and the European Research Council Synergy Program.
%%%%%%%%%%%%%%%%%%%%%%%%
%%%%%%%%%%%%%%%%%%%%%%%%%
\bibliography{References}
%%%%%%%%%%%%%%%%%%%%%%%%%%
%%%%%%%%%%%%%%%%%%%%%%%%%%
\newpage
\appendix
\section{Deferred Proofs}
%%%%%%%
\subsection{Proof of \cref{Pro:privacy:leakage}}
In our case, the prior distribution is uniform, and the posterior distribution after observing the output of the privacy mechanism is uniform only with probability $\beta$. Therefore, for $Y=\mathcal{M}_{\beta}(X)$, we have 
\begin{align*}
    \mathrm{TV}(p_{X|Y}(\cdot|Y),p_X(\cdot))= 1- \beta
\end{align*}
and therefore the privacy-leakage defined in Definition \ref{def:privacy:leakage} becomes $1- \beta$. 
%%%%%%%%%%%%%%%%%
%%%%%%%%%%%%%%%%%%%
{
\subsection{Connections with differential privacy} \label{sec:dp_histogram}

In the context of pricing based on histograms or distribution of values, the dataset is the set of consumers' values, and the algorithm's output is the selected market price. Thus, the goal of this privacy framework is to ensure that one cannot infer too much about the consumers' values by observing the optimal price. Note that this addresses a different aspect of privacy compared to that which we discussed previously. In our setting, the main concern is not the producer itself but rather what an \textit{adversary} can infer about the consumers' values by observing the price set for them. For instance, if one observes that the insurance premium is set high for a zip code, one may infer that residents of that zip code often require medical assistance. The application of differential privacy to pricing has also been studied in the context of revenue management \citep[see, e.g.,][]{lei2020privacy}. A caveat is that differential privacy is typically applicable to settings with a finite number of users, as opposed to our setting with a continuum of consumers. We  discuss this matter in \cref{sec:dp_histogram},  provide an adjusted definition of differential privacy, and demonstrate how our privacy mechanism \eqref{eqn:privacy_mechanism} is, in fact, differentially private.

Let us first recall the definition of differential privacy for an algorithm $\mathcal{A}$ over a set of $n$ users' data sets, denoted by $\bm{z} = \{z_1, \cdots, z_n\}$~\cite[see][for a detailed discussion of the definition and its applications]{dwork2014algorithmic}.
\begin{definition}
A randomized algorithm $\mathcal{A}:\mathcal{Z}^n \to \mathcal{Y}$ is called $\varepsilon$-differentially private if, for any measurable set $\mathcal{O}$, we have
\begin{equation}
{\mathbb{P}(\mathcal{A}(\bm{z}) \in \mathcal{{O}})} \leq e^\varepsilon ~{\mathbb{P}(\mathcal{A}(\bm{z'}) \in \mathcal{{O}})},
\end{equation}
for any two neighboring datasets $\bm{z}$ and $\bm{z'}$ that only differ in one coordinate. 
\end{definition}
%%%%
Here, $\varepsilon$ quantifies the $\textit{privacy loss}$. The higher $\varepsilon$ is, it means the output is more sensitive to the user's data, meaning that it provides a weaker privacy guarantee. 

One challenge here is that, in our case, we work with a continuum of consumers, and hence, changing one user’s data is not meaningful. A natural proposal is to consider two markets that are \textit{close enough} as a surrogate for neighboring distributions. In particular, we can use a distribution distance function $d:\mathcal{X} \times \mathcal{X} \to \mathbb{R}^{\geq0}$ as a measure of the closeness of markets and say two markets $x$ and $x'$ are neighboring markets if $d(x,x') \leq \Delta$ for some choice of $\Delta >0$. We, therefore, use the following adjusted definition. Recall that $\phi(\cdot)$ denotes the optimal pricing rule and $\mathcal{X}$ represents the set of markets. 
%%%%%
\begin{definition}
A randomized mechanism $\mathcal{M}:\mathcal{X} \to \mathcal{X}$ is called $\varepsilon$-differentially private if, for any value $v_i \in \mathcal{V}$, we have
\begin{equation}
\mathbb{P}(\phi(\mathcal{M}(x)) =v_i)  \leq e^\varepsilon ~\mathbb{P}(\phi(\mathcal{M}(x')) =v_i),
\end{equation}
for two markets $d(x, x') \leq \Delta$.
\end{definition}
Now, let us see how our mechanism fits into this definition. 
\begin{proposition}
The mechanism $\mathcal{M}_\beta(\cdot)$ is $\varepsilon$-differentially private with 
\begin{equation}
\varepsilon = \max_{i:\text{Vol}(X_i^\beta) \neq 0} \frac{1-\beta + \beta \text{Vol}(X_i^\beta)}{\beta \text{Vol}(X_i^\beta)},    
\end{equation}
where $X_i^\beta$ is defined in \eqref{eqn:X_i_beta}.
\end{proposition}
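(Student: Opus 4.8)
The plan is to compute, for each value $v_i$, the exact probability that the optimal pricing rule applied to the mechanism's output equals $v_i$, and then to bound the ratio of these probabilities across two markets. Fix a market $x$ and condition on the two branches of $\mathcal{M}_\beta$ in \eqref{eqn:privacy_mechanism}. With probability $1-\beta$ the producer observes $\hat{x}=x$, and $\phi(\hat{x})$ returns $v_i$ exactly when $v_i$ is optimal for $x$, i.e., when $x \in \mathcal{X}_i^\beta$ (adopting a tie-breaking convention on the measure-zero boundaries between the regions $\{\mathcal{X}_k^\beta\}_k$). With probability $\beta$ the observation $\hat{x}$ is drawn from $\text{Unif}(\mathcal{X})$ \emph{independently of $x$}, so the output equals $v_i$ with probability equal to the normalized volume $\text{Vol}(X_i^\beta) = \mathbb{P}_{\hat{x}\sim\text{Unif}(\mathcal{X})}(\hat{x}\in\mathcal{X}_i^\beta)$, with $\mathcal{X}_i^\beta$ as in \eqref{eqn:X_i_beta}. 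Combining the two branches gives
\[
\mathbb{P}\big(\phi(\mathcal{M}_\beta(x))=v_i\big) = (1-\beta)\,\mathbbm{1}(x\in\mathcal{X}_i^\beta) + \beta\,\text{Vol}(X_i^\beta).
\]

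The crucial structural observation is that the noise term $\beta\,\text{Vol}(X_i^\beta)$ is a constant independent of the input market, since in that branch the observation does not depend on $x$. Hence, for any two markets $x$ and $x'$, the only difference between the numerator and denominator of the ratio lies in the indicator terms. I would therefore maximize the numerator by taking the worst case $x\in\mathcal{X}_i^\beta$, giving $(1-\beta)+\beta\,\text{Vol}(X_i^\beta)$, and minimize the denominator by taking the worst case $x'\notin\mathcal{X}_i^\beta$, giving $\beta\,\text{Vol}(X_i^\beta)$. This yields
\[
\frac{\mathbb{P}(\phi(\mathcal{M}_\beta(x))=v_i)}{\mathbb{P}(\phi(\mathcal{M}_\beta(x'))=v_i)} \le \frac{1-\beta+\beta\,\text{Vol}(X_i^\beta)}{\beta\,\text{Vol}(X_i^\beta)},
\]
valid for every pair of markets whenever $\text{Vol}(X_i^\beta)\neq 0$. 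Taking the maximum over such indices $i$ produces exactly the quantity in the statement; since $e^t \ge t$ for all $t\ge 0$ and the ratio above is at least one, setting $\varepsilon$ equal to this maximum certifies the required inequality $\mathbb{P}(\phi(\mathcal{M}(x))=v_i)\le e^\varepsilon\,\mathbb{P}(\phi(\mathcal{M}(x'))=v_i)$.

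Two remarks would round out the argument. First, the bound never invokes the neighbor constraint $d(x,x')\le\Delta$: it holds for \emph{all} pairs of markets, so the guarantee is in fact distance-independent, which is a direct consequence of the uniform-noise branch that "forgets" the input entirely. Second, the restriction to indices with $\text{Vol}(X_i^\beta)\neq 0$ is genuinely needed, and this is the only delicate point in the proof. The \textbf{main obstacle} is the careful handling of the optimal pricing rule on the measure-zero boundaries where several prices are simultaneously optimal, together with the edge case $\text{Vol}(X_i^\beta)=0$: if a value $v_i$ has an optimality region of measure zero, the noise branch contributes nothing and the denominator can vanish while the numerator stays positive, so such $v_i$ must be excluded (equivalently, the tie-breaking rule is chosen so as never to output a price whose region has zero volume). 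Away from these boundaries, the computation of $\mathbb{P}(\phi(\mathcal{M}_\beta(x))=v_i)$ is immediate from the two-branch structure of $\mathcal{M}_\beta$, and the rest of the argument is a one-line optimization of the ratio.
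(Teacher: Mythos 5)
Your proof is correct and follows essentially the same route as the paper's: you compute $\mathbb{P}\big(\phi(\mathcal{M}_\beta(x))=v_i\big) = (1-\beta)\,\mathbbm{1}(x\in\mathcal{X}_i^\beta)+\beta\,\text{Vol}(\mathcal{X}_i^\beta)$ from the two-branch structure of the mechanism and observe that only the indicator term can change with the input market, which is exactly the paper's argument. You are in fact somewhat more careful than the paper, which compresses the worst-case ratio computation, the justification that setting $\varepsilon$ equal to the ratio (rather than its logarithm) still certifies the inequality via $e^t \geq t$, and the tie-breaking/zero-volume caveats into the single phrase ``this, consequently, implies the result.''
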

\begin{proof}
To simplify the notation, let us denote $\mathcal{M}_\beta(x)$ by $\hat{x}$. Then, we have
\begin{equation*}
\mathbb{P}(\phi(\hat{x}) =v_i) = \mathbb{P}(\hat{x} \in X_i^\beta) = (1-\beta) \mathbbm{1}(x \in X_i^\beta) + \beta \text{Vol}(X_i^\beta).  
\end{equation*}
By changing $x$, the term $\mathbbm{1}(x \in X_i^\beta)$ could change from zero to one, but the second term $\beta \text{Vol}(X_i^\beta)$ remains unchanged. This, consequently, implies the result. 
\end{proof}
}
%%%%%%%%%%%%%%%%%%%%
%%%%%%%%%%%%%%%%%%%
{
\subsection{Proof of \cref{propsotion:reference_prior}}
Let $\nu_\beta(x,\hat{x})$ denote the joint distribution of $x$ and $\hat{x}$. With a slight abuse of notation, we denote the distribution of $\hat{x}$ condition on $x$ by $\nu_\beta(\hat{x}|x)$. Note that the term in \eqref{eqn:argmax_reference_prior} that we wish to maximize is equal to
\begin{align}
\int_{\mathcal{X}} \mu_\beta(\hat{x}) & \left( \int_{\mathcal{X}} \pi_\beta(x|\hat{x}) \log(\frac{\pi_\beta(x|\hat{x})}{\pi(x)}) ~dx \right) d\hat{x} =  \int_{\mathcal{X}} \int_{\mathcal{X}} \nu_\beta(x,\hat{x}) \log(\frac{\nu_\beta(x,\hat{x})}{\pi(x)\mu_\beta(\hat{x})})  ~dx d\hat{x} \\
& = \int_{\mathcal{X}} \pi(x) \int_{\mathcal{X}} \nu_\beta(\hat{x}|x) \log(\frac{\nu_\beta(\hat{x}|x)}{\mu_\beta(\hat{x})}) ~d\hat{x} dx \\
& = \int_{\mathcal{X}} \pi(x) \left ( \int_{\mathcal{X}} \nu_\beta(\hat{x}|x) \log(\nu_\beta(\hat{x}|x)) \right) dx
- \int_{\mathcal{X}} \int_{\mathcal{X}} \pi(x) \nu_\beta(\hat{x}|x) \log(\mu_\beta(\hat{x})) \\
& = \int_{\mathcal{X}} \pi(x) \left ( \int_{\mathcal{X}} \nu_\beta(\hat{x}|x) \log(\nu_\beta(\hat{x}|x)) \right) dx
- \int_{\mathcal{X}} \mu_\beta(\hat{x}) \log(\mu_\beta(\hat{x})).
\label{eqn:proof_reference_prior_1}
\end{align}
Note that, given the privacy mechanism $\mathcal{M}_\beta(\cdot)$, $\int_{\mathcal{X}} \nu_\beta(\hat{x}|x) \log(\nu_\beta(\hat{x}|x))$ is independent of $x$, and therefore, the first term in \eqref{eqn:proof_reference_prior_1} does not depend on the choice of $\pi(\cdot)$. The second term is the entropy of $\mu_\beta(\hat{x})$ and we know that it is maximized when $\mu_\beta(\cdot)$ is the uniform distribution, which is the case when $\pi(\cdot)$ is the uniform distribution. This completes the proof. 
}
%%%%%%%%%%%%%%%%%%%
%%%%%%%%%%%%%%%%%%%
\subsection{Proof of Lemma \ref{lemma:pricing}}
Notice that the producer's utility by setting the price equal to $v_1$ is given by $\mathcal{U}_p (v_1 | \hat{x}) = v_1$ as all the consumers will buy the product at this price. For price $v_2$, the expected utility is given by
\begin{align}
\mathcal{U}_p (v_2 | \hat{x}) 
&= v_2 \left ( (1-\beta) \hat{\alpha} + \beta \int_{0}^1 \alpha ~d \alpha \right ) 
= v_2 \left ( (1-\beta) \hat{\alpha} + \frac{\beta}{2} \right ).
\end{align}
Comparing this with $\mathcal{U}_p (v_1 | \hat{x}) = v_1$ completes the proof. 
%%%%%%%%%%%%%%%%%%%%%%%%%%
%%%%%%%%%%%%%%%%%%%%%%%%%%
\subsection{Proof of Proposition  \ref{proposition:S_to_S'}}
First, notice that different optimal pricing rules only differ in how they price a market upon observing $\hat{\alpha} = t^*$, since in this situation both prices $v_2$ and $v_1$ are optimal. Therefore, we can parameterize the optimal pricing rules by the probability of setting the price to $v_2$ given $\hat{\alpha} = t^*$. This probability is denoted by $\delta \in [0,1]$ and its corresponding optimal pricing rule is represented by $\phi_\delta(\cdot)$.

Next, we characterize the expected utilities of consumer and the producer from a market $\alpha$. Given the above discussion, and with a slight abuse of notation, we denote these expected utilities by $U_p^\delta(\cdot)$ and $U_c^\delta(\cdot)$, respectively.
%%%%%%%%%%%%%%%%
\begin{lemma} \label{lemma:expected_utility}
Consider a market $x = (1-\alpha, \alpha)$ and a pricing rule $\phi(\cdot)$ that sets the price of a market $v_2$ with probability $\delta$ in the case of observing $\hat{\alpha} = t^*$. 
\begin{enumerate}[(i)]
\item If $\alpha < t^*$, then $U_p^\delta(x) = f_1(\alpha)$ and $U_c^\delta(x) = g_1(\alpha)$, where $f_1(\cdot)$ and $g_1(\cdot)$ are given by
\begin{align*}
f_1(\alpha) = (1-\beta + \beta t^*) v_1 + \beta(1-t^*) \alpha v_2,
\quad 
g_1(\alpha) = (1-\beta + \beta t^*) \alpha (v_2-v_1).
\end{align*}
\item If $\alpha > t^*$, then $U_p^\delta(x) = f_2(\alpha)$ and $U_c^\delta(x) = g_2(\alpha)$, where $f_2(\cdot)$ and $g_2(\cdot)$ are given by
\begin{align*}
f_2(\alpha) = \beta t^* v_1 + (1-\beta + \beta (1-t^*)) \alpha v_2, \quad
g_2(\alpha) = \beta t^* \alpha(v_2-v_1).
\end{align*}
%%%
\item If $\alpha = t^*$, then $U_p^\delta(x)$ and $U_c^\delta(x)$ are given by
\begin{align*}
U_p^\delta(x) = (1-\delta) f_1(t^*) + \delta f_2(t^*), \quad 
U_c^\delta(x) = (1-\delta) g_1(t^*) + \delta g_2(t^*).
\end{align*}
\end{enumerate}
\end{lemma}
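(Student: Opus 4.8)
The plan is to compute the two expected utilities directly from their definitions, decomposing the expectation over the privacy mechanism into the masked and unmasked events. Recall that $U_p^\delta(x) = \mathbb{E}_{\hat{x} \sim \mathcal{M}_\beta(x)}[\mathbb{E}_{p \sim \phi(\hat{x})}[\mathcal{U}_p(p,x)]]$ and similarly for the consumer, where the true market is $x = (1-\alpha,\alpha)$. With probability $1-\beta$ the producer observes the true market $\hat{\alpha}=\alpha$, and with probability $\beta$ they observe a uniformly random market $\hat{\alpha} \sim \text{Unif}([0,1])$. The key structural input is \cref{lemma:pricing}: the producer prices at $v_1$ precisely when the observed $\hat{\alpha} \le t^*$ and at $v_2$ when $\hat{\alpha} > t^*$ (with the tie at $\hat{\alpha}=t^*$ broken by setting $v_2$ with probability $\delta$). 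Since the masking draw is uniform on $[0,1]$, the probability that a masked observation falls in the region $\hat{\alpha} \le t^*$ is exactly $t^*$, and the complementary region has probability $1-t^*$. This is what turns the integral over the uniform noise into the clean coefficients $t^*$ and $1-t^*$ appearing in $f_1, g_1, f_2, g_2$.

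The natural order is to handle case (i) first. Suppose $\alpha < t^*$. On the unmasked event (probability $1-\beta$) the observation is the true $\alpha < t^*$, so the price is $v_1$, contributing $v_1$ to the producer (all consumers buy) and $\alpha(v_2-v_1)$ to the consumer (only the high-value consumers retain surplus). On the masked event (probability $\beta$), I condition on where the uniform draw lands: with the conditional probability $t^*$ the observed market is priced at $v_1$, and with conditional probability $1-t^*$ it is priced at $v_2$; crucially, the realized utility is still evaluated against the \emph{true} market $x$, so pricing at $v_1$ yields producer utility $v_1$ and pricing at $v_2$ yields $\alpha v_2$ (only the $\alpha$-fraction of high-value consumers buy). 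Collecting terms gives the producer utility $(1-\beta)v_1 + \beta[t^* v_1 + (1-t^*)\alpha v_2] = (1-\beta+\beta t^*)v_1 + \beta(1-t^*)\alpha v_2 = f_1(\alpha)$, and the analogous bookkeeping for the consumer (who gains $\alpha(v_2-v_1)$ only when priced at $v_1$) yields $g_1(\alpha)$. Case (ii) with $\alpha > t^*$ is identical in structure except that on the unmasked event the true observation $\alpha > t^*$ forces price $v_2$, so the unmasked contribution flips; reassembling gives $f_2,g_2$.

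Case (iii) is the boundary $\alpha = t^*$, and here I would simply observe that the unmasked observation lands exactly at the tie point $\hat{\alpha}=t^*$, where by definition of $\phi_\delta$ the price is $v_2$ with probability $\delta$ and $v_1$ with probability $1-\delta$. Since the masked contribution is continuous and takes the common value at $\alpha = t^*$ (the uniform-noise term does not depend on how the single point $\hat{\alpha}=t^*$ is priced, that event having measure zero under the uniform draw), the total utility is the convex combination $(1-\delta)$ times the ``priced at $v_1$'' value and $\delta$ times the ``priced at $v_2$'' value, which are exactly $f_1(t^*),f_2(t^*)$ and $g_1(t^*),g_2(t^*)$ evaluated at $\alpha=t^*$. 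I do not expect a genuine obstacle here; the only point requiring mild care is the tie-breaking at $\hat{\alpha}=t^*$ and verifying that the masked-event integral contributes the same expression regardless of which case we are in, since the noise distribution is the same fixed uniform and is independent of $\alpha$. Keeping the evaluation against the true market $x$ (rather than the prior) straight throughout is the one place where a sign or coefficient slip could creep in, so I would write out the masked and unmasked contributions as separate labeled lines before summing.
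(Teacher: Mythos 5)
Your proposal is correct and follows essentially the same argument as the paper: decompose the expectation over the privacy mechanism (unmasked with probability $1-\beta$, masked with probability $\beta$), use Lemma \ref{lemma:pricing} to determine the price from whether the observation falls below or above $t^*$, evaluate utilities against the true market, and handle the boundary case $\alpha = t^*$ via the tie-breaking probability $\delta$, noting that the event $\hat{\alpha}=t^*$ has measure zero under the uniform draw. The paper merely groups the probabilities slightly differently (e.g., combining $(1-\beta)+\beta t^*$ into a single event of observing $\hat{\alpha}<t^*$), which is an identical computation.
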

%%%%%%%%%%%%%%%%%%%%%%%%%%%%%%%
\begin{proof}
Suppose $\alpha < t^*$. Then, with probability $1-\beta + \beta t^*$, the realized $\hat{\alpha}$ falls within the interval $[0,t^*)$, and hence the optimal price is $v_1$ as per Lemma \ref{lemma:pricing}. In this scenario, the producer's utility would be $v_1$ and the consumer' utility would be $\alpha (v_2-v_1)$. Also, with probability $\beta(1-t^*)$, $\hat{\alpha}$ falls into the interval $(t^*,1]$, and the market would be priced at $v_2$. Here, the producer's utility is $\alpha v_2$ and the consumer' utility is zero. Summing these scenarios gives the expected utilities. It is also worth noting that the probability of $\hat{\alpha} = t^*$ is zero in this case.  

The case $\alpha > t^*$ can be argued similarly. Finally, for the case $\alpha=t^*$, with probability $1-\beta$ we have $\hat{\alpha} = t^*$ which leads to choosing price $v_2$ with probability $\delta$ and price $v_1$ with probability $1-\delta$. This observation establishes the final result.    
\end{proof}
%%%%%%%%%%%%%%%%%%%%%%%%%%%%%%
Now, we turn back to our goal of characterizing the set of utility pairs  \eqref{eqn:all_pairs}. In general, searching over the entire set of segmentations and pricing rules can be challenging. However, the following lemma enables us to narrow our search without loss of generality. 
%%%%%%%%%%%%%%%%%%%%%%%%%%%%%%
\begin{lemma} \label{lemma:merge}
Suppose Assumption \ref{assump:beta:intermediate} holds. Then, 
\begin{align}
\mathcal{S} & = 
\biggl\{ \Big[ \gamma_1 g_1(\alpha_1) + \gamma_2 g_2(\alpha_2), \gamma_1 f_1(\alpha_1) + \gamma_2 f_2(\alpha_2) \Big]^\top ~ \Big \vert \label{eqn:prop_merge} \\
&
\qquad \alpha_1 \in [0,\min\{t^*, \alpha^*\}], \alpha_2 \in [\max\{t^*, \alpha^*\},1], \gamma_1 \text{ and } \gamma_2 \in [0,1], \gamma_1 + \gamma_2 = 1, \gamma_1 \alpha_1 + \gamma_2 \alpha_2 = \alpha^*
\biggr\} \nonumber
\end{align}
\end{lemma}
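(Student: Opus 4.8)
The plan is to prove the two set inclusions separately, with the forward inclusion ($\subseteq$) carrying the real content through a \emph{merging} argument that exploits the affinity of the per-market utility functions, and the reverse inclusion ($\supseteq$) following from an explicit two-market construction. I would first record that, by Lemma \ref{lemma:expected_utility}, a single market with parameter $\alpha$ contributes the utility pair $(g_1(\alpha), f_1(\alpha))$ when $\alpha < t^*$ (priced at $v_1$) and $(g_2(\alpha), f_2(\alpha))$ when $\alpha > t^*$ (priced at $v_2$), while a market at $\alpha = t^*$ contributes the $\delta$-convex combination of these two pairs, where $\delta \in [0,1]$ is the only degree of freedom in the choice of optimal pricing rule. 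The crucial structural fact is that each of $f_1, g_1, f_2, g_2$ is an \emph{affine} function of $\alpha$. Since a point of $\mathcal{S}$ is obtained by summing these per-market contributions against the weights $\sigma(x)$, I would partition the support of $\sigma$ into the markets priced at $v_1$ (those with $\alpha < t^*$, together with the $(1-\delta)$-fraction of any mass at $t^*$) and those priced at $v_2$ (those with $\alpha > t^*$, together with the $\delta$-fraction of the mass at $t^*$), calling the resulting total masses $\gamma_1$ and $\gamma_2$.

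The heart of the argument is then the merging step. Because $f_1$ and $g_1$ are affine, the group-$1$ contribution equals $\gamma_1\,(g_1(\bar\alpha_1), f_1(\bar\alpha_1))$, where $\bar\alpha_1$ is the $\sigma$-weighted average of the $\alpha$ values in group $1$; symmetrically, the group-$2$ contribution equals $\gamma_2\,(g_2(\bar\alpha_2), f_2(\bar\alpha_2))$. This collapses an arbitrary segmentation to the two-market form in the statement. I would then verify the constraints: $\gamma_1 + \gamma_2 = 1$ is immediate, $\gamma_1 \bar\alpha_1 + \gamma_2 \bar\alpha_2 = \alpha^*$ follows from the feasibility condition $\sum_x \sigma(x)\alpha_x = \alpha^*$, and $\bar\alpha_1 \le t^* \le \bar\alpha_2$ holds because each is an average of values lying on the correct side of $t^*$. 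The tighter bounds $\bar\alpha_1 \le \min\{t^*,\alpha^*\}$ and $\bar\alpha_2 \ge \max\{t^*,\alpha^*\}$ are not extra assumptions but consequences of the averaging identity: combining $\gamma_1\bar\alpha_1 + \gamma_2\bar\alpha_2 = \alpha^*$ with $\bar\alpha_1 \le t^* \le \bar\alpha_2$, one side of $\alpha^*$ forces the other (e.g. $\bar\alpha_1 \le t^* \le \alpha^*$ forces $\gamma_2\bar\alpha_2 = \alpha^* - \gamma_1\bar\alpha_1 \ge \gamma_2\alpha^*$, hence $\bar\alpha_2 \ge \alpha^*$, and symmetrically in the other regime).

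For the reverse inclusion, given any admissible tuple $(\alpha_1,\alpha_2,\gamma_1,\gamma_2)$ from the right-hand set, I would build the two-segment segmentation placing mass $\gamma_1$ on the market $\alpha_1$ and mass $\gamma_2$ on $\alpha_2$; it belongs to $\Sigma$ precisely because $\gamma_1\alpha_1 + \gamma_2\alpha_2 = \alpha^*$. Since $\alpha_1 \le t^*$ and $\alpha_2 \ge t^*$, Lemma \ref{lemma:pricing} confirms these markets are priced at $v_1$ and $v_2$ respectively, and Lemma \ref{lemma:expected_utility} then returns exactly the claimed utility pair. I expect the main obstacle, and the only delicate point, to be the bookkeeping at the threshold $\alpha = t^*$: the single pricing-rule parameter $\delta$ must simultaneously govern how the mass at $t^*$ is split in the forward merge and how a boundary market $\alpha_1 = t^*$ or $\alpha_2 = t^*$ is realized in the reverse direction. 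I would handle this by treating the mass at $t^*$ as formally divisible through $\delta$, and by checking the degenerate configurations ($\gamma_1 = 0$, $\gamma_2 = 0$, or $\alpha_1 = \alpha_2 = t^* = \alpha^*$) separately, where a single market together with the appropriate choice of $\delta$ reproduces the required convex combination. Throughout, Assumption \ref{assump:beta:intermediate} is what guarantees $t^* \in (0,1)$, so that both pricing regions are genuinely available and the construction is non-degenerate.
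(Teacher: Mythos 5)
Your proposal is correct and follows essentially the same route as the paper's proof: the forward inclusion via partitioning the segmentation by pricing decision (with the mass at $t^*$ split according to $\delta$) and merging each group through the affinity of $f_1,g_1,f_2,g_2$ into a single averaged market, with the bounds $\alpha_1 \le \min\{t^*,\alpha^*\}$, $\alpha_2 \ge \max\{t^*,\alpha^*\}$ deduced from the averaging identity exactly as you argue; and the reverse inclusion via the explicit two-market segmentation, resolving the boundary cases $\alpha_1 = t^*$, $\alpha_2 = t^*$, or $\alpha_1=\alpha_2=t^*$ by the appropriate choice of $\delta$ ($\delta=0$, $\delta=1$, and $\delta=\gamma_2$, respectively), which is precisely the paper's case analysis.
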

%%%%%%%%%%%%%%%%%%%%%%%%%%%%%%
\begin{proof}
We first prove that for any segmentation $\sigma \in \Sigma$ and optimal pricing rule $\phi_\delta(\cdot)$, there is a representation in the form of right hand side of \eqref{eqn:prop_merge} such that
\begin{equation} \label{eqn:equivalence}
\sum_{x \in \text{supp}(\sigma)} \sigma(x) U_p^\delta(x) = \gamma_1 f_1(\alpha_1) + \gamma_2 f_2(\alpha_2)
\quad \text{ and }
\sum_{x \in \text{supp}(\sigma)} \sigma(x) U_c^\delta(x) =  \gamma_1 g_1(\alpha_1) + \gamma_2 g_2(\alpha_2).  
\end{equation}
To simplify the notation, we use $\sigma(\alpha)$ and $U_p^\delta(\alpha)$ to denote $\sigma(x)$ and $U_p^\delta(x)$ when $x= (1-\alpha, \alpha)$. Notice that 
\begin{align}
\sum_{\alpha: \sigma(\alpha) >0} & \sigma(\alpha) U_p^\delta(\alpha) =
\sum_{\alpha < t^* \& \sigma(\alpha) >0} \sigma(\alpha) U_p^\delta(\alpha)
+
\sum_{\alpha >t^* \& \sigma(\alpha) >0} \sigma(\alpha) U_p^\delta(\alpha)
+
\sigma(t^*) U_p^\delta(t^*) \nonumber \\
&= 
\sum_{\alpha < t^* \& \sigma(\alpha) >0} \sigma(\alpha) f_1(\alpha)
+
\sum_{\alpha >t^* \& \sigma(\alpha) >0} \sigma(\alpha) f_2(\alpha)
+
\sigma(t^*) \left ( (1-\delta) f_1(t^*) + \delta f_2(t^*) \right ), \label{eqn:prop_proof_1}
\end{align}
where the last equation uses Lemma \ref{lemma:expected_utility}. Now, notice that $f_1(\cdot)$ and $f_2(\cdot)$ are linear functions. Therefore, we can cast \eqref{eqn:prop_proof_1} as
\begin{equation*}
\gamma_1 f_1(\alpha_1) + \gamma_2 f_2(\alpha_2) 
\end{equation*}
with
\begin{align*}
\gamma_1 &= \sum_{\alpha < t^* \& \sigma(\alpha) >0} \sigma(\alpha) + (1-\delta) \sigma(t^*), \\
\alpha_1 & = \frac{1}{\gamma_1} \left ( \sum_{\alpha < t^* \& \sigma(\alpha) >0} \sigma(\alpha) \alpha + (1-\delta) \sigma(t^*) t^* \right ), \\
\gamma_2 &= \sum_{\alpha > t^* \& \sigma(\alpha) >0} \sigma(\alpha) + \delta \sigma(t^*), \\
\alpha_2 & = \frac{1}{\gamma_2} \left ( \sum_{\alpha > t^* \& \sigma(\alpha) >0} \sigma(\alpha) \alpha + \delta \sigma(t^*) t^* \right ).
\end{align*}
Notice that a similar representation can also be shown for the consumer' utility. We next verify that the conditions in \eqref{eqn:prop_merge} are met. Firstly, as $\sigma(\cdot)$ is a segmentation within $\Sigma$, it can be verified that $\gamma_1$ and $\gamma_2$ are nonnegative and together sum up to one. Furthermore, the condition $\sum_{\alpha: \sigma(\alpha) >0} \sigma(\alpha) \alpha = \alpha^*$ ensures that $\gamma_1 \alpha_1 + \gamma_2 \alpha_2 = \alpha^*$.

Given that $\alpha_1$ is effectively a weighted average of $t^*$ and several values of $\alpha < t^*$, it follows that $\alpha_1 \leq t^*$. Similarly, it can be established that $\alpha_2 \geq t^*$. Moreover, a weighted average of $\alpha_1$ and $\alpha_2$ equates to $\alpha^*$. Consequently, we should have $\alpha_1 \leq \alpha^* \leq \alpha_2$. All these considerations imply that $\alpha_1 \in [0,\min\{t^*, \alpha^*\}]$ and $\alpha_2 \in [\max\{t^*, \alpha^*\},1]$.

Now we establish the converse. Suppose $\alpha_1, \alpha_2, \gamma_1$, and $\gamma_2$ are given, satisfying the conditions on the right-hand side of \eqref{eqn:prop_merge}. We will show that there exists a segmentation $\sigma \in \Sigma$ and an optimal pricing rule $\phi_\delta(\cdot)$ such that \eqref{eqn:equivalence} is satisfied. If $\alpha_1 < t^*$ and $\alpha_2 > t^*$, we can consider a segmentation with two markets $(1-\alpha_1, \alpha_1)$ and $(1-\alpha_2, \alpha_2)$, assigned probabilities $\gamma_1$ and $\gamma_2$ respectively, and apply any optimal pricing rule to satisfy \eqref{eqn:equivalence}. If $\alpha_1 = t^*$ and $\alpha_2 > t^*$, then $\phi_0(\cdot)$ should be taken as the optimal pricing rule. Conversely, if $\alpha_1 < t^*$ and $\alpha_2 = t^*$, $\phi_1(\cdot)$ becomes the optimal pricing rule. The only remaining scenario is $\alpha_1 = \alpha_2 = t^**$, where we take $\phi_\delta(\cdot)$ with $\delta = \gamma_2$ as the optimal pricing rule. This completes the proof.    
\end{proof}
%%%%%%%%%%%%%%%%%%%%%%%%%%
%%%%%%%%%%%%%%%%%%%%%%%%%%
Next, note that we can cast $\gamma_1 f_1(\alpha_1) + \gamma_2 f_2(\alpha_2)$ in the right hand side of \eqref{eqn:prop_merge} in the following way:
\begin{align} \label{eqn:U_p_1}
\gamma_1 f_1(\alpha_1) + \gamma_2 f_2(\alpha_2) &= \beta \left ( t^* v_1 + (1-t^*) \alpha^* v_2 \right ) 
+ (1-\beta) \left ( \gamma_1 v_1 + \gamma_2 \alpha_2 v_2 \right), 
\end{align}
where we used the fact that $\gamma_1 \alpha_1 + \gamma_2 \alpha_2 = \alpha^*$ and $\gamma_1 + \gamma_2 = 1$. Similarly, we have
\begin{align} \label{eqn:U_c_1}
\gamma_1 g_1(\alpha_1) + \gamma_2 g_2(\alpha_2) 
= \beta t^* \alpha^* (v_2-v_1)
+ (1-\beta) \gamma_1 \alpha_1 (v_2-v_1).
\end{align}
Notice that the terms $\beta \left ( t^* v_1 + (1-t^*) \alpha^* v_2 \right )$ and $\beta t^* \alpha^* (v_2-v_1)$ in \eqref{eqn:U_p_1} and \eqref{eqn:U_c_1}, respectively, are constants. Thus, it suffices to characterize the set $\mathcal{S}'$.
%%%%%%%%%%%%%%%%%%%%%%%%%%
%%%%%%%%%%%%%%%%%%%%%%%%%%
\subsection{Proof of Theorem \ref{theorem:S_K=2}}
We first identify the set $\mathcal{S}'$, doing so by separately considering the cases $\alpha^* \geq t^*$ and $\alpha^* \leq t^*$. 
%%%%%%%%%%%%%%%%%%%%%%%%%%%%%%%%
%%%%%%%%%%%%%%%%%%%%%%%%%%%%%%%%
\paragraph{Case (I): $\alpha^* \geq t^*$}
\begin{lemma} \label{lemma:S'_alpha_geq_t}
Suppose $\beta \leq \min\{2\eta, 2(1-\eta)\}$ and $\alpha^* \geq t^*$. Then $\mathcal{S}'$ is in the form of a triangle with three vertices $E=[0, \alpha^* v_2 + (1-\alpha^*)v_1]^\top$, $F=[0, \alpha^* v_2]^\top$, and $G= E + [\xi, -\xi]^\top$ with
\begin{equation*}
\xi := 
(1-\alpha^*) \frac{t^*}{1-t^*} (v_2-v_1).
\end{equation*}
\end{lemma}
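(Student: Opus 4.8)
The plan is to linearize the defining constraints of $\mathcal{S}'$ via a change of variables, reducing the statement to computing the image of a two-dimensional polytope under an invertible affine map. Throughout I assume $\alpha^* \geq t^*$, so that in the description of $\mathcal{S}'$ from Proposition~\ref{proposition:S_to_S'} we have $\min\{t^*,\alpha^*\}=t^*$ and $\max\{t^*,\alpha^*\}=\alpha^*$, giving the ranges $\alpha_1\in[0,t^*]$ and $\alpha_2\in[\alpha^*,1]$. First I would introduce the variables $\gamma_1$ and $z_1:=\gamma_1\alpha_1$, and use the balance condition $\gamma_1\alpha_1+\gamma_2\alpha_2=\alpha^*$ together with $\gamma_2=1-\gamma_1$ to eliminate $\alpha_2$ by writing $\gamma_2\alpha_2=\alpha^*-z_1$. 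In these coordinates the two utilities become \emph{affine}:
\begin{equation*}
u_c = (v_2-v_1)\,z_1, \qquad u_p = v_1\gamma_1 - v_2 z_1 + \alpha^* v_2.
\end{equation*}
The linear part of this map is $\begin{pmatrix} 0 & v_2-v_1 \\ v_1 & -v_2 \end{pmatrix}$, whose determinant equals $-v_1(v_2-v_1)\neq 0$; hence the map is invertible and carries any triangle to a non-degenerate triangle with vertices the images of the original vertices.

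The key step is to show that the feasible region of $(\gamma_1,z_1)$ is itself a triangle. I would translate each original constraint into a linear inequality: $\gamma_1\in[0,1]$; the bound $\alpha_1\in[0,t^*]$ becomes $0\le z_1\le \gamma_1 t^*$; and $\alpha_2\in[\alpha^*,1]$ becomes $z_1\le \gamma_1\alpha^*$ and $z_1\ge \gamma_1-(1-\alpha^*)$ after multiplying through by $\gamma_2=1-\gamma_1\ge 0$. Here the hypothesis $\alpha^*\ge t^*$ is used precisely to discard $z_1\le \gamma_1\alpha^*$ as redundant, since $\gamma_1 t^*\le \gamma_1\alpha^*$. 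The resulting polygon is the triangle with vertices $(0,0)$, $(1-\alpha^*,0)$, and the intersection $\left(\tfrac{1-\alpha^*}{1-t^*},\ \tfrac{(1-\alpha^*)t^*}{1-t^*}\right)$ of the lines $z_1=\gamma_1 t^*$ and $z_1=\gamma_1-(1-\alpha^*)$; note that $\alpha^*\ge t^*$ guarantees this last $\gamma_1$-coordinate is at most $1$. Feeding these three vertices through the affine map and simplifying then yields $F=(0,\alpha^* v_2)$, $E=(0,\alpha^* v_2+(1-\alpha^*)v_1)$, and the third vertex with $u_c=\xi$ and $u_p=\alpha^* v_2+\tfrac{(1-\alpha^*)(v_1-t^* v_2)}{1-t^*}$, which I would check equals $E+[\xi,-\xi]^\top=G$.

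The main obstacle I anticipate is the bookkeeping around the nonlinear terms $\gamma_1\alpha_1$ and $\gamma_2\alpha_2$: I must verify that the substitution $(\gamma_1,\gamma_2,\alpha_1,\alpha_2)\mapsto(\gamma_1,z_1)$ is \emph{onto} the polytope, i.e.\ that every point of the polytope is realized by an admissible tuple (recovering $\alpha_1=z_1/\gamma_1$ and $\alpha_2=(\alpha^*-z_1)/(1-\gamma_1)$ and checking the recovered values lie in the required intervals). This forces a separate treatment of the degenerate boundary $\gamma_1\in\{0,1\}$ where $\alpha_1$ or $\alpha_2$ is no longer pinned down by $(\gamma_1,z_1)$; in particular I would note that $\gamma_1=1$ is excluded unless $\alpha^*=t^*$, so no spurious fourth vertex appears, while at $\gamma_1=0$ one simply takes $\alpha_2=\alpha^*$.

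Once surjectivity and the vertex computation are in place, the conclusion follows immediately: $\mathcal{S}'$ is the image under an invertible affine map of a triangle, hence the triangle with vertices $E$, $F$, and $G$, exactly as claimed.
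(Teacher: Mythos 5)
Your proof is correct, and it takes a route that differs from the paper's in a meaningful way. The paper also linearizes via a change of variables, but in the opposite direction: it sets $z_1 := \gamma_1(1-\alpha_1)$ and $z_2 := \alpha_1/(1-\alpha_1)$, observes that these range \emph{independently} over the rectangle $[0,1-\alpha^*]\times[0,t^*/(1-t^*)]$, and rewrites the utilities as the bilinear map $(z_1,z_2)\mapsto\bigl(z_1z_2(v_2-v_1),\ \alpha^*v_2+z_1v_1-z_1z_2(v_2-v_1)\bigr)$, leaving as a short verification that the image of this rectangle is the triangle $EFG$. You instead keep $\gamma_1$ and set $z_1:=\gamma_1\alpha_1$, so the parameter domain is no longer a product set but the utility map becomes affine and invertible; the work is then shifted into identifying the feasible region of $(\gamma_1,z_1)$ as a triangle, which you do by multiplying the interval constraints on $\alpha_1,\alpha_2$ through by $\gamma_1,\gamma_2\ge 0$ and discarding $z_1\le\gamma_1\alpha^*$ as redundant under $\alpha^*\ge t^*$. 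The trade-off is clear: the paper pays with a bilinear map whose image must still be computed, while you pay with a polyhedral computation, after which the conclusion (invertible affine image of a triangle is a triangle) is immediate. Your explicit surjectivity check at the degenerate boundary $\gamma_1\in\{0,1\}$ --- recovering $\alpha_1=z_1/\gamma_1$ and $\alpha_2=(\alpha^*-z_1)/(1-\gamma_1)$, and noting that $\gamma_1=1$ cannot occur when $\alpha^*>t^*$ --- addresses a point the paper's substitution handles only implicitly, so your write-up is, if anything, more complete on that score. All of your vertex computations check out: the images of $(0,0)$, $(1-\alpha^*,0)$, and $\bigl(\tfrac{1-\alpha^*}{1-t^*},\tfrac{(1-\alpha^*)t^*}{1-t^*}\bigr)$ are indeed $F$, $E$, and $G=E+[\xi,-\xi]^\top$.
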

%%%%%%%%%%%%%%%%%%%%%%%%%%%%%%%%
\begin{proof}
First, note that, the two conditions $\gamma_1 + \gamma_2 =1$ and $\gamma_1 \alpha_1 + \gamma_2 \alpha_2 = \alpha^*$ together imply that
\begin{equation} \label{eqn:lemma_4_2}
\gamma_1 = \frac{\alpha_2 - \alpha^*}{\alpha_2 - \alpha_1}, \quad	
\gamma_2 = \frac{\alpha^*- \alpha_1}{\alpha_2 - \alpha_1}.
\end{equation}
Next, let us define $z_1 := \gamma_1(1-\alpha_1)$. Using $\alpha_2 \in [\alpha^*, 1]$ along with \eqref{eqn:lemma_4_2} implies that, for any $\alpha_1$, $z_1$ spans the interval $[0, 1-\alpha^*]$ as $\alpha_2$ varies over $[\alpha^*, 1]$.
Additionally, let us define $z_2 := \frac{\alpha_1}{1-\alpha_1}$. Note that as $\alpha_1$ spans the interval $[0, t^*]$, $z_2$ spans the interval $[0, \frac{t^*}{1-t^*}]$.

Now, notice that we can rewrite $\gamma_1 v_1 + \gamma_2 \alpha_2 v_2$ as 
\begin{align}
\gamma_1 v_1 + \gamma_2 \alpha_2 v_2 &= 
\gamma_1 \alpha_1 v_2 - \gamma_1 \alpha_1 (v_2 - v_1) + \gamma_1 (1-\alpha_1) v_1 + \gamma_2 \alpha_2 v_2 \nonumber \\
&= \alpha^* v_2 + \gamma_1 (1-\alpha_1) v_1 - \gamma_1 \alpha_1 (v_2 - v_1) \nonumber \\
&= \alpha^* v_2 + z_1 v_1 - z_1 z_2 (v_2-v_1), \label{eqn:lemma4_1}
\end{align}   
where the second equality used the fact that $\gamma_1 \alpha_1 + \gamma_2 \alpha_2 = \alpha^*$. Therefore, we can cast $\mathcal{S}'$ as
\begin{align}
\mathcal{S'} = 
\biggl\{ \Big[ z_1 z_2 (v_2-v_1), \alpha^* v_2 + z_1 v_1 - z_1 z_2 (v_2-v_1) \Big]^\top ~ \Big \vert  z_1 \in [0,1-\alpha^*], z_2 \in [0, \frac{t^*}{1-t^*}] \biggr\}.
\end{align}
It is straightforward to verify that this set is indeed the triangle $EFG$.   
\end{proof}
%%%%%%%%%%%%%%%%%%%%%%%%%%%%%%%%
Figure \ref{fig:S'_alpha_geq_t} illustrates this triangle for the two cases of $t^* \leq \eta$ and $t^* \geq \eta$. Notice that the triangle formed by points $A$, $B$, and $D$ corresponds with the non-private case where $\beta = 0$, implying $t^* = \eta$ and $\mathcal{S}' = \mathcal{S}$. 
%%%%%%%%%%%%%%%%%%%%%%%%%%%%%%%%
%%%%%%%%%%%%%%%%%%%%
\begin{figure}
\centering
\begin{subfigure}{.5\textwidth}
  \centering
  \begin{tikzpicture}
    % Draw axes
    \draw[->] (0,0) -- (4,0) node[right] {}; % x-axis
    \draw[->] (0,0) -- (0,4) node[above] {}; % y-axis
    
    % Define points
    \coordinate (A) at (0,3); % Replace 4 with the actual value for \alpha^* v_2
    \coordinate (B) at (0,1); % Replace 2 with the actual value for \alpha^* v_2 + (1-\alpha^*) v_1
    \coordinate (C) at (1,2); % Replace (3,3) with the actual coordinates for the third vertex
    \coordinate (D) at (2,1);
    \coordinate (E') at (2,0);

    % Draw triangle
    \draw[fill=gray!30] (A) -- (B) -- (C) -- cycle; % Shaded triangle

    % Draw dotted lines with slope -1
    \draw[dotted] (A) -- ++(2,-2);
    %\draw[dotted] (B) -- ++(2,-2);
    \draw[dotted] (B) -- ++(2,0);
    \draw[dotted] (D) -- ++(0,-1);

    % Label points
    \node[right] at (A) {$\alpha^* v_2 + (1-\alpha^*) v_1$};
    \node[left] at (A) {E};
    \node[right] at (B) {$\alpha^* v_2$};
    \node[left] at (B) {F};
    \node[right] at (C) {G};
    \node[right] at (D) {H};
    \node[below] at (E') {$(1-\alpha^*) v_1$};
\end{tikzpicture}
  \caption{$t^* \leq \eta$}
  \label{fig:S'_alpha_eta_geq_t}
\end{subfigure}%
\begin{subfigure}{.5\textwidth}
  \centering
  \begin{tikzpicture}
    % Draw axes
    \draw[->] (0,0) -- (4,0) node[right] {}; % x-axis
    \draw[->] (0,0) -- (0,4) node[above] {}; % y-axis
    
    % Define points
    \coordinate (A) at (0,3); % Replace 4 with the actual value for \alpha^* v_2
    \coordinate (B) at (0,1); % Replace 2 with the actual value for \alpha^* v_2 + (1-\alpha^*) v_1
    \coordinate (C) at (2.5,0.5); % Replace (3,3) with the actual coordinates for the third vertex
    \coordinate (D) at (2,1);
    \coordinate (E') at (2,0);

    % Draw triangle
    \draw[fill=gray!30] (A) -- (B) -- (C) -- cycle; % Shaded triangle

    % Draw dotted lines with slope -1
    \draw[dotted] (A) -- ++(2,-2);
    %\draw[dotted] (B) -- ++(2,-2);
    \draw[dotted] (B) -- ++(2,0);
    \draw[dotted] (D) -- ++(0,-1);

    % Label points
    \node[right] at (A) {$\alpha^* v_2 + (1-\alpha^*) v_1$};
    \node[left] at (A) {E};
    \node[right] at (B) {$\alpha^* v_2$};
    \node[left] at (B) {F};
    \node[right] at (C) {G};
    \node[right] at (D) {H};
    \node[below] at (E') {$(1-\alpha^*) v_1$};
\end{tikzpicture}
  \caption{$t^* \geq \eta$}
  \label{fig:S'_alpha_geq_t_geq_eta}
\end{subfigure}
\caption{Illustration of $\mathcal{S}'$ for the case $\alpha^* \geq t^*$}
\label{fig:S'_alpha_geq_t}
\end{figure}
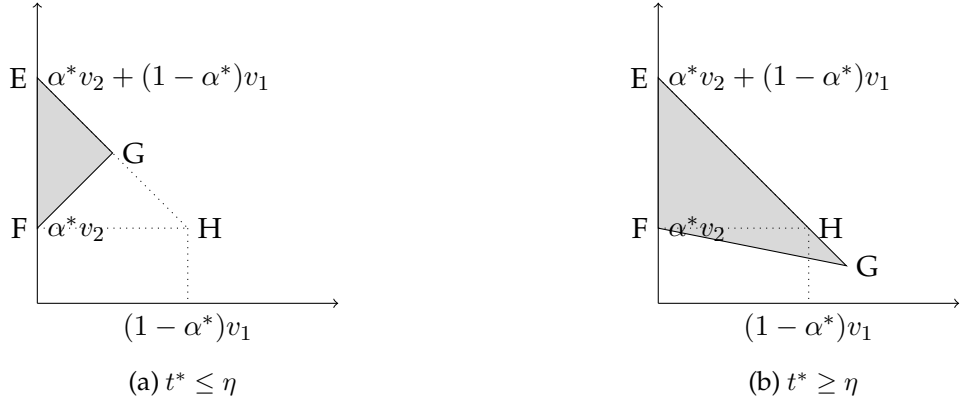
%%%%%%%%%%%%%%%%%%%%%%%%%%%%%%%%
%%%%%%%%%%%%%%%%%%%%%%%%%%%%%%%%
\paragraph{Case (II): $\alpha^* \leq t^*$}
\begin{lemma} \label{lemma:S'_alpha_leq_t}
Suppose $\beta \leq \min\{2\eta, 2(1-\eta)\}$ and $\alpha^* \leq t^*$. Then $\mathcal{S}'$ is in the form of a triangle with three vertices $E=[0, \alpha^* v_2 + (1-\alpha^*)v_1]^\top$, $F=[\alpha^*(v_2-v_1), v_1]^\top$, and $G=[0,\kappa]^\top$ with
\begin{equation*}
\kappa := v_1 + \alpha^* v_2 (1-\frac{\eta}{t^*}).     
\end{equation*}
\end{lemma}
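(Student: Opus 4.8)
The plan is to follow the same strategy as in the proof of \cref{lemma:S'_alpha_geq_t}, but with a parametrization adapted to the regime $\alpha^*\le t^*$, in which the constraints defining the set $\mathcal{S}'$ in \eqref{eqn:prop_merge} read $\alpha_1\in[0,\alpha^*]$ and $\alpha_2\in[t^*,1]$, together with $\gamma_1+\gamma_2=1$ and $\gamma_1\alpha_1+\gamma_2\alpha_2=\alpha^*$. The crucial observation is that both utility coordinates become affine in the nonnegative pair
\[
a:=\gamma_1\alpha_1,\qquad b:=\gamma_1(1-\alpha_1).
\]
Rewriting the producer utility exactly as in the algebra leading to \eqref{eqn:lemma4_1}, namely $\gamma_1 v_1+\gamma_2\alpha_2 v_2=\alpha^* v_2+\gamma_1(1-\alpha_1)v_1-\gamma_1\alpha_1(v_2-v_1)$, the consumer and producer utilities become
\[
U_c'=(v_2-v_1)\,a,\qquad U_p'=\alpha^* v_2+v_1\,b-(v_2-v_1)\,a.
\]
The map $(a,b)\mapsto(U_c',U_p')$ is an invertible affine transformation, since its linear part has determinant $v_1(v_2-v_1)>0$; it therefore carries triangles to triangles with matching vertices, so it suffices to show that the feasible set of pairs $(a,b)$ is a triangle and to map its vertices.

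Next I would translate every constraint into a linear inequality in $(a,b)$. Recovering $\gamma_1=a+b$, $\alpha_1=a/(a+b)$, $\gamma_2=1-a-b$, and $\alpha_2=(\alpha^*-a)/(1-a-b)$, the requirements $a,b\ge0$, $\alpha_1\le\alpha^*$, $t^*\le\alpha_2\le1$, and $\gamma_1\le1$ become, respectively,
\[
a\ge0,\quad b\ge0,\quad a(1-\alpha^*)\le\alpha^* b,\quad t^* b\ge(t^*-\alpha^*)+(1-t^*)a,\quad b\le1-\alpha^*,\quad a+b\le1,
\]
with the degenerate weights handled separately: $\gamma_2=0$ forces $a=\alpha^*,\,b=1-\alpha^*$, while $\gamma_1=0$ is feasible only in the boundary case $\alpha^*=t^*$. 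I would then check that the three inequalities $a\ge0$, $b\le1-\alpha^*$, and $t^* b\ge(t^*-\alpha^*)+(1-t^*)a$ already cut out a bounded triangle: combining the last with $b\le1-\alpha^*$ gives $a\le\alpha^*$, and combining it with $a\ge0$ gives $b\ge1-\alpha^*/t^*\ge0$ (using $\alpha^*\le t^*$). The remaining three inequalities are linear and are satisfied at all three vertices of this triangle (both $a(1-\alpha^*)\le\alpha^* b$ and $a+b\le1$ being tight only at the vertex $(\alpha^*,1-\alpha^*)$), hence, by convexity, throughout. This pins down the feasible region in $(a,b)$ as the triangle with vertices $(0,1-\alpha^*)$, $(\alpha^*,1-\alpha^*)$, and $(0,1-\alpha^*/t^*)$.

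Finally I would push these vertices through the affine map. A direct computation gives $(0,1-\alpha^*)\mapsto E=[0,\alpha^* v_2+(1-\alpha^*)v_1]^\top$ and $(\alpha^*,1-\alpha^*)\mapsto F=[\alpha^*(v_2-v_1),v_1]^\top$, while $(0,1-\alpha^*/t^*)\mapsto[0,\kappa]^\top$, where substituting $v_1=\eta v_2$ turns $v_1(1-\alpha^*/t^*)+\alpha^* v_2$ into $v_1+\alpha^* v_2(1-\eta/t^*)=\kappa$. Since $\alpha^*\le t^*$ ensures $0\le1-\alpha^*/t^*\le1-\alpha^*$, the image is exactly the claimed triangle $EFG$ (degenerating to the segment $\overline{EF}$ only in the boundary case $t^*=1$, i.e.\ $\beta=2(1-\eta)$, where $G=E$). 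I expect the main obstacle to be the second step: correctly orienting each of the six inequalities, verifying that precisely three are active while the other three are implied, and cleanly disposing of the degenerate weights $\gamma_1=0$ and $\gamma_2=0$ where the expressions $\alpha_1=a/(a+b)$ and $\alpha_2=(\alpha^*-a)/(1-a-b)$ are undefined.
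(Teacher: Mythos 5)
Your proof is correct, and it reaches the triangle $EFG$ by a route whose key step differs from the paper's. The paper's proof (like its proof of \cref{lemma:S'_alpha_geq_t}) uses the product parametrization $w_1:=\gamma_2\alpha_2\in[0,\alpha^*]$, $w_2:=1/\alpha_2\in[1,1/t^*]$, under which the feasible region becomes a rectangle and the utility pair becomes the \emph{bilinear} map $\bigl[(\alpha^*-w_1)(v_2-v_1),\, v_1+w_1v_2-w_1w_2v_1\bigr]^\top$; one must then recognize (the paper asserts this without detail) that the image of a rectangle under this bilinear map is the stated triangle --- for each fixed $w_1$ the image is a vertical segment whose length scales linearly in $w_1$. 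You instead pass to $a=\gamma_1\alpha_1$, $b=\gamma_1(1-\alpha_1)$, which makes both utilities \emph{affine} in $(a,b)$, and you show that the feasible region in $(a,b)$ is itself a triangle cut out by three of the six translated linear constraints, the other three being satisfied at the vertices and hence everywhere by convexity. Your route buys automatic rigor in the final step --- an invertible affine map (determinant $v_1(v_2-v_1)>0$) carries a triangle to a triangle with matching vertices, so no separate argument about the shape of the image is needed --- at the cost of the bookkeeping you flag: orienting the six inequalities, verifying redundancy, and handling the degenerate weights $\gamma_1=0$ and $\gamma_2=0$ where $\alpha_1$ and $\alpha_2$ are not determined by $(a,b)$. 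You handle these correctly; in particular, the only triangle point with $a+b=1$ is $(\alpha^*,1-\alpha^*)$, which is exactly the unique point recoverable with $\gamma_2=0$, and $(0,0)$ lies in the triangle only in the boundary case $\alpha^*=t^*$, matching when $\gamma_1=0$ is feasible. Your vertex computations, including the identity $v_1(1-\alpha^*/t^*)+\alpha^*v_2=\kappa$ via $v_1=\eta v_2$, check out, and both arguments rest on the same merged representation of $\mathcal{S}'$ from \cref{proposition:S_to_S'}.
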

%%%%%%%%%%%%%%%%%%%%%%%%%%%%%%%%
\begin{proof}
The proof technique is similar to the ones used in the proof of Lemma \ref{lemma:S'_alpha_geq_t}. Here, we define $w_1 := \gamma_2 \alpha_2$ and $w_2 := \frac{1}{\alpha_2}$. Using \eqref{eqn:lemma_4_2}, we can see that $\alpha_1$ and $\alpha_2$ sweeping over $[0,\alpha^*]$ and $[t^*,1]$, respectively, imply $w_1$ and $w_2$ spanning over $[0,\alpha^*]$ and $[1,\frac{1}{t^*}]$, respectively.  

We next represent $\mathcal{S}'$ using $w_1$ and $w_2$. Particularly, we can rewrite $\mathcal{S}'$ as
\begin{align}
\mathcal{S'} = 
\biggl\{ \Big[ (\alpha^*-w_1) (v_2-v_1), v_1 + w_1 v_2 - w_1 w_2 v_1 \Big]^\top ~ \Big \vert  w_1 \in [0,\alpha^*], w_2 \in [1, \frac{1}{t^*}] \biggr\}.
\end{align}
We can see that this is the triangle given in the lemma's statement. 
\end{proof}
%%%%%%%%%%%%%%%%%%%%%%%%%%%%%%%%
Figure \ref{fig:S'_alpha_leq_t} illustrates this triangle for the two cases of $t^* \leq \eta$ and $t^* \geq \eta$. Here, the triangle formed by points $A$, $E$, and $G$ corresponds with the non-private case where $\beta = 0$, implying $t^* = \eta$ and $\mathcal{S}' = \mathcal{S}$. 
%%%%%%%%%%%%%%%%%%%%%%%%%%%%%%%%
%%%%%%%%%%%%%%%%%%%%
\begin{figure}
\centering
\begin{subfigure}{.5\textwidth}
  \centering
  \begin{tikzpicture}
    % Draw axes
    \draw[->] (0,0) -- (4,0) node[right] {}; % x-axis
    \draw[->] (0,0) -- (0,4) node[above] {}; % y-axis
    
    % Define points
    \coordinate (E) at (0,3); % Replace 4 with the actual value for \alpha^* v_2
    \coordinate (H) at (0,1); % Replace 2 with the actual value for \alpha^* v_2 + (1-\alpha^*) v_1
    \coordinate (G) at (0,0.5); % Replace (3,3) with the actual coordinates for the third vertex
    \coordinate (F) at (2,1);
    \coordinate (X) at (2,0);

    % Draw triangle
    \draw[fill=gray!30] (E) -- (F) -- (G) -- cycle; % Shaded triangle

    % Draw dotted lines with slope -1
    \draw[dotted] (E) -- ++(2,-2);
    %\draw[dotted] (B) -- ++(2,-2);
    \draw[dotted] (H) -- ++(2,0);
    \draw[dotted] (F) -- ++(0,-1);

    % Label points
    \node[right] at (E) {$\alpha^* v_2 + (1-\alpha^*) v_1$};
    \node[left] at (E) {E};
    \node[right] at (H) {$v_1$};
    \node[left] at (H) {H};
    \node[left] at (G) {G};
    \node[right] at (F) {F};
    \node[below] at (X) {$\alpha^*(v_2-v_1)$};
\end{tikzpicture}
  \caption{$t^* \leq \eta$}
  \label{fig:S'_alpha_leq_t_leq_eta}
\end{subfigure}%
\begin{subfigure}{.5\textwidth}
  \centering
  \begin{tikzpicture}
    % Draw axes
    \draw[->] (0,0) -- (4,0) node[right] {}; % x-axis
    \draw[->] (0,0) -- (0,4) node[above] {}; % y-axis
    
    % Define points
    \coordinate (E) at (0,3); % Replace 4 with the actual value for \alpha^* v_2
    \coordinate (H) at (0,1); % Replace 2 with the actual value for \alpha^* v_2 + (1-\alpha^*) v_1
    \coordinate (G) at (0,1.5); % Replace (3,3) with the actual coordinates for the third vertex
    \coordinate (F) at (2,1);
    \coordinate (X) at (2,0);

    % Draw triangle
    \draw[fill=gray!30] (E) -- (F) -- (G) -- cycle; % Shaded triangle

    % Draw dotted lines with slope -1
    \draw[dotted] (E) -- ++(2,-2);
    %\draw[dotted] (B) -- ++(2,-2);
    \draw[dotted] (H) -- ++(2,0);
    \draw[dotted] (F) -- ++(0,-1);

    % Label points
    \node[right] at (E) {$\alpha^* v_2 + (1-\alpha^*) v_1$};
    \node[left] at (E) {E};
    \node[right] at (H) {$v_1$};
    \node[left] at (H) {H};
    \node[left] at (G) {G};
    \node[right] at (F) {F};
    \node[below] at (X) {$\alpha^*(v_2-v_1)$};
\end{tikzpicture}
  \caption{$t^* \geq \eta$}
  \label{fig:S'_alpha_eta_leq_t}
\end{subfigure}
\caption{Illustration of $\mathcal{S}'$ for the case $\alpha^* \leq t^*$}
\label{fig:S'_alpha_leq_t}
\end{figure}
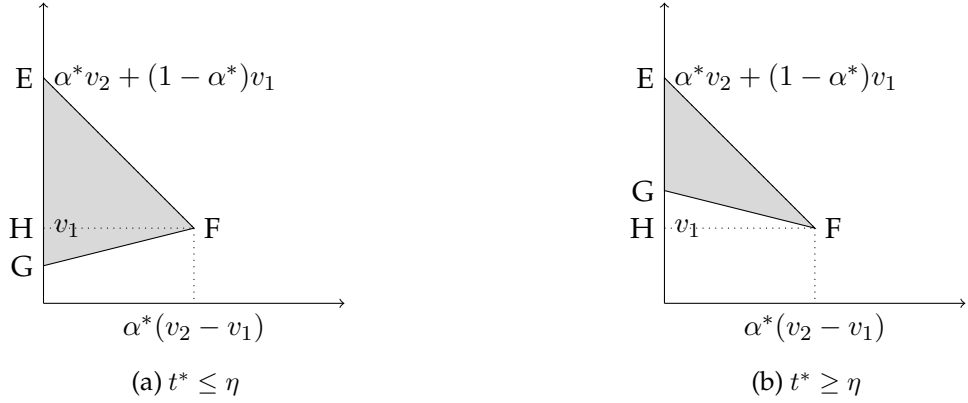
%%%%%%%%%%%%%%%%%%%%%%%%%%%%%%%%
%%%%%%%%%%%%%%%%%%%%%%%%%%%%%%%%
Now, using Proposition \ref{proposition:S_to_S'} along with Lemmas \ref{lemma:S'_alpha_geq_t} and \ref{lemma:S'_alpha_leq_t} completes the proof.
%%%%%%%%%%%%%%%%%%%%%%%%%%
%%%%%%%%%%%%%%%%%%%%%%%%%%
%%%%%%%%%%%%%%%%%%%%%%%%%%
\subsection{Proof of Lemma \ref{lemma:max_producer_nonmonotone}}
Recall by \eqref{eqn:producer_max_utility_privacy} that the maximum producer utility is given by
\begin{equation} \label{eqn:producer_max_utility_1}
u_{\text{max}}(\beta):= \alpha^* v_2 + (1-\alpha^*) v_1 - \beta  \left ( t^* (\alpha^* v_2 - v_1) + (1-\alpha^*) v_1 \right ),
\end{equation}
where $t^*$ is given by Lemma \ref{lemma:pricing}. Taking the derivative with respect to $\beta$, we have
\begin{align}
\frac{d}{d \beta} u_{\text{max}}(\beta) &= 
-\frac{(\eta+\alpha^*-2\alpha^* \eta)(\beta^2 - 2 \beta) + 2\eta - 2 \eta^2}{2\eta(1-\beta)^2} \nonumber \\
&= -\frac{\eta+\alpha^*-2\alpha^* \eta}{2\eta(1-\beta)^2} \left ( (1-\beta)^2 + \frac{(\alpha^*-\eta)(2\eta-1)}{\eta + \alpha^* - 2\alpha^* \eta} \right ). \label{eqn:producer_max_utility_2}
\end{align}
Notice that $\alpha^* + \eta - 2\alpha^*\eta \geq 0$ as $\alpha^*, \eta \leq 1$. Now, to see whether the maximum producer utility is monotone or not, we need to check whether the sign of 
\begin{equation}\label{eqn:producer_max_utility_3}
h(\beta):= (1-\beta)^2 + \frac{(\alpha^*-\eta)(2\eta-1)}{\eta + \alpha^* - 2\alpha^* \eta}    
\end{equation}
changes as $\beta$ varies over $[0, \min\{2\eta, 2(1-\eta)\}]$. Note that $h(\beta)$ is decreasing over this interval. We also know that the maximum utility under privacy is (weakly) lower than the non-private case, meaning that \eqref{eqn:producer_max_utility_2} is nonpositive at $\beta=0$ which implies $h(0) \geq 0$ (we could also simplify verify this by setting $\beta=0$ at \eqref{eqn:producer_max_utility_3}).

Consequently, the maximum producer utility is monotone in $\beta$ if and only if
\begin{equation*}
h\left (\min\{2\eta, 2(1-\eta)\} \right ) \geq 0.    
\end{equation*}
Notice that
\begin{align}
h\left (\min\{2\eta, 2(1-\eta)\} \right ) &=
(1-2\eta)^2 + \frac{(\alpha^*-\eta)(2\eta-1)}{\eta + \alpha^* - 2\alpha^* \eta} = 
\frac{(1-2\eta)2\eta(1-\eta)(1-2\alpha^*)}{\eta + \alpha^* - 2\alpha^* \eta},
\end{align}
which is nonnegative if and only if $(1-2\eta)(1-2\alpha^*) \geq 0$. This completes the proof. 
%%%%%%%%%%%%%%%%%%%%%%%%%%
%%%%%%%%%%%%%%%%%%%%%%%%%%
%%%%%%%%%%%%%%%%%%%%%%%%%%
\subsection{Proof of Lemma \ref{lemma:min_consumer_nonmonotone}}
Recall that, by Theorem \ref{theorem:S_K=2}, the minimum consumer utility is given by
\begin{equation}
\beta t^* \alpha^* (v_2-v_1).    
\end{equation}
To understand when this is a monotone function of $\beta$, we need to look into the derivative of $\beta t^*$ as a function of $\beta$. This derivative is given by
\begin{equation} \label{eqn:consumer_min_utility_1}
\frac{2\eta - \beta(2-\beta)}{2(1-\beta)^2} = \frac{(\beta-1)^2 -(1-2\eta)}{2(1-\beta)^2}.   
\end{equation}
We know that this derivative is nonnegative at $\beta=0$. Hence, the minimum consumer utility is monotone if and only if the above derivative is nonnegative at $\beta = \min\{2\eta, 2(1-\eta)\}$. Plugging this value for $\beta$ into \eqref{eqn:consumer_min_utility_1}, we have
\begin{equation*}
\frac{(1-2\eta)^2 -(1-2\eta)}{2(1-\beta)^2} = 
\frac{-2\eta(1-2\eta)}{2(1-\beta)^2},
\end{equation*}
which is nonnegative if and only if $\eta \geq 1/2$. This completes the proof. 
%%%%%%%%%%%%%%%%%%%%%%%%%%
%%%%%%%%%%%%%%%%%%%%%%%%%%
%%%%%%%%%%%%%%%%%%%%%%%%%%
\subsection{Proof of Lemma \ref{lemma:bar_beta}}
Recall that $\mathcal{U}_p(v_i|\hat{x})$, is given by
\begin{equation} \label{eqn:lemma_X_i_1}
\mathcal{U}_p(v_i|\hat{x}) = (1-\beta) ~ \mathcal{U}_p(v_i, \hat{x}) + \beta ~ \mathbb{E}_{x \sim \text{Unif}(\mathcal{X})} \left[\mathcal{U}_p(v_i, x) \right].
\end{equation}
Also, recall that 
\begin{equation*}
\mathcal{U}_p(v_i, x) = v_i \sum_{k=i}^K x(v_k),    
\end{equation*}
and therefore, we have
\begin{equation}
\mathbb{E}_{x \sim \text{Unif}(\mathcal{X})} \left[\mathcal{U}_p(v_i, x) \right]
= v_i \sum_{k=i}^K \mathbb{E}_{x \sim \text{Unif}(\mathcal{X})} \left[x(v_k) \right]
= v_i (K+1-i) \frac{1}{K},
\end{equation}
where the last argument follows from the fact that the expectation of $x(v_1), \cdots, x(v_K)$ are all equal and equal to $1/K$ when $x$ is drawn uniformly at random. Thus, we can rewrite  $\mathcal{U}_p(v_i|\hat{x})$ as
\begin{equation} \label{eqn:lemma_X_i_2}
\mathcal{U}_p(v_i|\hat{x}) = 
v_i \left ( (1-\beta) ~ \sum_{k=i}^K \hat{x}(v_k) + \beta ~ \frac{K+1-i}{K} \right ).
\end{equation}
Now, if $\hat{x} \in \mathcal{X}_i^\beta$, then we have 
\begin{equation} \label{eqn:lemma_X_i_3}
\mathcal{U}_p(v_i|\hat{x}) \geq \mathcal{U}_p(v_j|\hat{x}) \text{ for all } j \in [K]. 
\end{equation}
Plugging \eqref{eqn:lemma_X_i_2} into this equation implies
\begin{equation} \label{eqn:lemma_X_i_4}
v_i \sum_{k=i}^K \hat{x}(v_k) + \frac{\beta}{1-\beta} \cdot \frac{v_i(K+1-i)}{K} 
\geq 
v_j \sum_{k=j}^K \hat{x}(v_k) + \frac{\beta}{1-\beta} \cdot \frac{v_j(K+1-j)}{K}. 
\end{equation}
First, suppose $j < i$. In this case, $\sum_{k=j}^K \hat{x}(v_k) \geq \sum_{k=i}^K \hat{x}(v_k)$, and hence, we have
\begin{equation} \label{eqn:lemma_X_i_5}
(v_i - v_j) \sum_{k=i}^K \hat{x}(v_k) + \frac{\beta}{1-\beta} \cdot \frac{v_i(K+1-i)}{K} 
\geq 
\frac{\beta}{1-\beta} \cdot \frac{v_j(K+1-j)}{K}.
\end{equation}
Furthermore, using $1 \geq \sum_{k=i}^K \hat{x}(v_k)$, we obtain 
\begin{equation} \label{eqn:lemma_X_i_6}
\frac{\beta}{1-\beta} \geq 
\frac{K (v_i- v_j)}{\Big[(K+1-j)v_j - (K+1-i)v_i \Big]_{+}}.
\end{equation}
Next, consider the case $j > i$. In this case, we use the bounds  $1 \geq \sum_{k=i}^K \hat{x}(v_k)$ and $\sum_{k=j}^K \hat{x}(v_k) \geq 0$ along with \eqref{eqn:lemma_X_i_4} to obtain
\begin{equation} \label{eqn:lemma_X_i_7}
\frac{\beta}{1-\beta} \geq 
\frac{K v_i}{\Big[(K+1-j)v_j - (K+1-i)v_i \Big]_{+}}.
\end{equation}
Notice that \eqref{eqn:lemma_X_i_6} and \eqref{eqn:lemma_X_i_7} together completes the proof of the fact that if $\mathcal{X}_i^\beta$ is non-empty, then $\beta \geq \bar{\beta}_i$. To see why the reverse is also true, notice the all the inequalities that we used would change to equality for the case that the market only consists of consumers with value $v_i$, i.e., $\hat{x}(v_i) = 1$. So this market is indeed in $\mathcal{X}_i^\beta$ as long as $\beta \geq \bar{\beta}_i$.
%%%%%%%%%%%%%%%%%%%%%%%%%%
%%%%%%%%%%%%%%%%%%%%%%%%%%
%%%%%%%%%%%%%%%%%%%%%%%%%%
\subsection{Proof of Proposition \ref{proposition:S_to_S'_K}}
First, note that $\mathcal{M}_\beta(x)$ is equal to $x$ itself with probability $1-\beta$ and is equal to a uniformly random market with probability $\beta$. In the latter case, the market falls into $\mathcal{X}_k^\beta$ with probability $\mathbb{P}(\mathcal{X}_k^\beta)$ and hence is priced at $v_k$ (the boundary of $\mathcal{X}_k^\beta$'s is measure-zero). Therefore, we have
\begin{align*}
U_p^\phi(x) &= \beta~\sum_{k=1}^K \mathbb{P}(\mathcal{X}_k^\beta) ~\mathcal{U}_p(v_k, x) + (1-\beta)~\mathbb{E}_{p \sim \phi(x)} [~\mathcal{U}_p(p,x)] \\ 
U_c^\phi(x) &= \beta~\sum_{k=1}^K \mathbb{P}(\mathcal{X}_k^\beta) ~\mathcal{U}_c(v_k, x) + (1-\beta)~\mathbb{E}_{p \sim \phi(x)} [~\mathcal{U}_c(p,x)].
\end{align*}
Hence, $\mathcal{S}$ in \eqref{eqn:all_pairs} can be written as
\begin{align} \label{eqn:proof_merge_K_1}
& \beta \left \{ 
\Big[ \sum_{x \in \text{supp}(\sigma)} \sigma(x) \sum_{k=1}^K \mathbb{P}(\mathcal{X}_k^\beta) ~\mathcal{U}_c(v_k, x), \sum_{x \in \text{supp}(\sigma)} \sigma(x) \sum_{k=1}^K \mathbb{P}(\mathcal{X}_k^\beta) ~\mathcal{U}_p(v_k, x)
\Big]^\top ~ \Big \vert ~ \sigma \in \Sigma
\right \} + (1-\beta) \mathcal{S}_1,    
\end{align}
where $\mathcal{S}_1$ is given by
\begin{equation} \label{eqn:proof_merge_K_2}
\left \{ 
\Big[ \sum_{x \in \text{supp}(\sigma)} \sigma(x) \mathbb{E}_{p \sim \phi(x)} [~\mathcal{U}_c(p,x)], \sum_{x \in \text{supp}(\sigma)} \sigma(x) \mathbb{E}_{p \sim \phi(x)} [~\mathcal{U}_c(p,x)]
\Big]^\top ~ \Big \vert ~ \sigma \in \Sigma \text{ and } \phi \text{ optimal pricing}
\right \}    
\end{equation}
Now, note that $\mathcal{U}_c(v_k, x)$ is linear in $x$, and thus, we have
\begin{equation*}
\sum_{x \in \text{supp}(\sigma)} \sigma(x) \sum_{k=1}^K \mathbb{P}(\mathcal{X}_k^\beta) ~\mathcal{U}_c(v_k, x)
=  \sum_{k=1}^K \mathbb{P}(\mathcal{X}_k^\beta) \sum_{x \in \text{supp}(\sigma)} ~\sigma(x) \mathcal{U}_c(v_k, x)
=  \sum_{k=1}^K \mathbb{P}(\mathcal{X}_k^\beta) ~\mathcal{U}_c(v_k, x^*).
\end{equation*}
We can similarly simplify the other term corresponding the produce utility and hence the first term in \eqref{eqn:proof_merge_K_1} is $\beta \mathbbm{c}$. Now, it remains to show that $\mathcal{S}_1$ is same as the set $\mathcal{S}'$, defined in the statement of the proposition.

Next, we argue that each market segment in $\mathcal{S}_1$ can be subdivided into smaller segments, each priced at a single value, by allowing multiple versions of the same market priced at different, yet optimal, values. Formally, considering a $\sigma \in \Sigma$ and an optimal pricing rule $\phi(\cdot)$, for any $x \in \text{supp}(\sigma)$, let $\mathcal{V}(x) \subseteq \mathcal{V}$ be the support of $\phi(x)$. All values in $\mathcal{V}(x)$ are optimal prices for $x$.

For each $x \in \text{supp}(\sigma)$, we create $\vert \mathcal{V}(x) \vert$ replicas of $x$. Specifically, for each $v \in \mathcal{V}(x)$, we replicate market $x$, corresponding to a segment of the consumer continuum $\sigma(x) \mathbb{P}(\phi(x) = v)$, and price it at $v$. Notice that this segmentation breakdown maintains identical consumer and producer utilities. We can formally express this as:
\begin{align*}
& \mathcal{S}_1 = \left \{
\Big[ \sum_{k=1}^K \sum_{i=1}^{\ell_k} \gamma_k^i ~\mathcal{U}_c(v_k,x_k^i),
\sum_{k=1}^K \sum_{i=1}^{\ell_k} \gamma_k^i ~\mathcal{U}_p(v_k,x_k^i) 
\Big]^\top \Big \vert
\gamma_k^i \geq 0, \sum_{i,k} \gamma_k^i = 1, x_k^i \in \mathcal{X}_k^\beta, 
\sum_{i,k} \gamma_k^i x_k^i=x^*
\right \}.
\end{align*}
Finally, once again using the fact that $\mathcal{U}_c(v_k, x)$ and $\mathcal{U}_p(v_k, x)$ are linear in $x$, we can simplify the above term to
\begin{align} \label{eqn:proof_merge_K_3}
\left \{
\Big[ \sum_{k=1}^K \gamma_{k} ~\mathcal{U}_c(v_k,x_{k}),
\sum_{k=1}^K \gamma_{k} ~\mathcal{U}_p(v_k,x_{k}) 
\Big]^\top ~ \Big \vert ~
\gamma_{k} \geq 0, \sum_{k} \gamma_{k} = 1, x_{k} \in \mathcal{X}_k^\beta, 
\sum_{k} \gamma_{k} x_{k}=x^*
\right \},
\end{align}
with 
\begin{equation}
\gamma_k = \sum_{i=1}^{\ell_k} \gamma_k^i, \text{ and }
x_k = \sum_{i=1}^{\ell_k} \gamma_k^i x_k^i.
\end{equation}
Notice that \eqref{eqn:proof_merge_K_3} is the set $\mathcal{S}'$ and hence the proof is complete. 
%%%%%%%%%%%%%%%%%%%%%%%%%%
%%%%%%%%%%%%%%%%%%%%%%%%%%
%%%%%%%%%%%%%%%%%%%%%%%%%%
\subsection{Proof of Theorem \ref{theorem:S'_polytope}}
Here, we provide the details of the proof sketch provided in the main text. Before computing the utilities, it is worth noting that the proof of inequality \eqref{eqn:optimal_pricing_x_i} is similar to the derivation of \eqref{eqn:lemma_X_i_4} in the proof of Lemma \ref{lemma:bar_beta}. 

We start with the producer utility corresponding to the set $\mathcal{S}'$. We have
\begin{align}
\sum_{i=1}^K \gamma_i ~ \mathcal{U}_p(v_i, x_i) 
= \sum_{i=1}^K \gamma_i v_i \sum_{j=i}^K x_i(v_j) 
= \sum_{i=1}^K \gamma_i v_i y(i,i) = \sum_{i=1}^K v_i z(i,i).
\end{align}
Next, regarding the consumer's side, notice that we have
\begin{align}
\sum_{i=1}^K \gamma_i ~ \mathcal{U}_c(v_i, x_i) &=
\sum_{i=1}^K \gamma_i ~ \sum_{j=i}^K (v_j - v_i) (y(i,j) - y(i,j+1)),  
\end{align}
with the convenience that $y(i,K+1) := 0$. Notice that we could start the index $j$ from $i+1$ and adjust the range of index $i$ to end at $K-1$. By making these changes, we obtain 
\begin{equation}
\sum_{i=1}^{K-1} \gamma_i ~ \sum_{j=i+1}^K (v_j - v_i) (y(i,j) - y(i,j+1)).    
\end{equation}
Notice that we can rewrite this as 
\begin{equation}
\sum_{i=1}^{K-1} \sum_{j=i+1}^K ~\gamma_i ~y(i,j) (v_j - v_i - (v_{j-1} - v_i))
\end{equation}
which is equal to 
\begin{equation}
\sum_{i=1}^{K-1} \sum_{j=i+1}^K ~z(i,j) (v_j - v_{j-1}).
\end{equation}
This completes the proof.
%%%%%%%%%%%%%%%%%%%%%%%
%%%%%%%%%%%%%%%%%%%%%%%%%%
%%%%%%%%%%%%%%%%%%%%%%%%%%
\subsection{Proof of \cref{proposition:minimum_S'}}
Let $y^*$ denote the complementary cumulative distribution function of market $x^*$, i.e.,
\begin{equation}
y^*(i) := \sum_{k=i}^K x^*(v_k).
\end{equation}
First, we establish that the lowest point of the set $\mathcal{S}'$ always lies on or beneath the line $TR$. To see this, note that there exists an $\ell^*$ for which $x^* \in \mathcal{X}_{\ell^*}^\beta$. Consequently, there exists one point in $\mathcal{S}'$ corresponding to uniformly pricing the market at $v_{\ell^*}$. Since the line $TR$ represents the optimal uniform pricing, this point either resides on $TR$ (if $\ell^*$ is the optimal uniform price for the non-private case) or falls beneath it in other cases. This substantiates the first claim.

Next, we show that if the condition \eqref{eqn:crossing_condition} does not hold, the set $\mathcal{S}'$ will not cross the line $TR$. To prove this, we use the polytope representation in Theorem \ref{theorem:S'_polytope}. Summing up both sides of \eqref{eqn:polytope_b} over $i$ (for a fixed $j$), we obtain
\begin{equation} \label{eqn:crossing_proof1}
\sum_{i=1}^K z(i,i) v_i \geq 
\sum_{i=1}^K  z(i,j) v_j + 
\frac{\beta}{K(1-\beta)} \sum_{i=1}^K z(i,1)\Big ( (K+1-j) v_j - (K+1-i) v_i \Big ).   
\end{equation}
Notice that, by \eqref{eqn:polytope_c}, $\sum_{i=1}^K z(i,j) = y^*(j)$. In particular, $\sum_{i=1}^K z(i,1) = y^*(1) = 1$. Hence, we can rewrite \eqref{eqn:crossing_proof1} as
\begin{equation} \label{eqn:crossing_proof2}
\sum_{i=1}^K z(i,i) v_i \geq 
y^*(j) v_j + 
\frac{\beta}{K(1-\beta)} \Big ( (K+1-j) v_j - \sum_{i=1}^K z(i,1) (K+1-i) v_i \Big ).   
\end{equation}
Now, notice that the left hand side is the producer utility in $\mathcal{S'}$. Choose $j$ such that $v_j$ be the optimal uniform pricing (if there are multiple of them, choose the one with the highest $(K+1-j) v_j$ value). Hence, $y^*(j) v_j$ represents the producer utility corresponding to the line $TR$. Now, recall that $\sum_{i=1}^K z(i,1) = y^*(1) = 1$, and therefore, the term 
\begin{equation} \label{eqn:crossing_proof3}
\sum_{i=1}^K z(i,1) (K+1-i) v_i    
\end{equation}
is a weighted average of $((K+1-i) v_i)_{i=1}^K$. Since the condition \eqref{eqn:crossing_condition} does not hold, each term $(K+1-i) v_i$, and hence the weighted average \eqref{eqn:crossing_proof3}, are weakly smaller than $(K+1-j) v_j$. Plugging this into \eqref{eqn:crossing_proof2} shows that when the condition \eqref{eqn:crossing_condition} does not hold, the term $\sum_{i=1}^K z(i,i) v_i$ is always weakly larger than the producer utility at the optimal uniform price, and hence the set $\mathcal{S}'$ does not go below the line $TR$.

Now, suppose the condition \eqref{eqn:crossing_condition} holds. Let $\mathcal{I}_\beta$ be the set of optimal uniform prices when we observe the market $x^*$ but use the pricing strategy corresponding to the privacy parameter $\beta$, i.e., 
\begin{equation*}
\mathcal{I}_\beta := \argmax_{k} \Big[ y^*(k)v_k + \frac{\beta}{K(1-\beta)} (K+1-k)k \Big].  
\end{equation*}
In particular, $\mathcal{I}_0$ represents the set of optimal uniform prices in the non-private case and the set $\mathcal{I}_0$ includes those $i$'s that maximize $(K+1-i)i$. 

Now, we only need to consider the case $\mathcal{I}_\beta \subseteq \mathcal{I}_0$, as otherwise the segmentation corresponding to that value that is in $\mathcal{I}_\beta$ but not in $\mathcal{I}_0$ would be below the line $TR$. Also, the condition \eqref{eqn:crossing_condition} implies that the sets $\mathcal{I}_0$ and $\mathcal{I}_1$ are disjoint, and hence, the sets $\mathcal{I}_\beta$ and $\mathcal{I}_1$ are also disjoint.

We define the market $\tilde{x}$ in the following way:
\begin{equation}
\sum_{k=i}^K \tilde{x}(v_i) = 
\begin{cases}
\frac{Z}{v_i} & \text{ if } i \notin \mathcal{I}_1, \\   
\frac{Z}{v_i} - \epsilon & \text{ if } i \in \mathcal{I}_1,
\end{cases}
\end{equation}
where $Z$ is chosen such that $\tilde{x}$ becomes a distribution. It is clear that such $\tilde{x}$ exists for sufficiently small $\epsilon$ given that $(v_i)_{i=1}^K$ is a strictly ascending sequence. 

Notice that, given the way that $\tilde{x}$ is defined, $\mathcal{U}_p(v_i, \tilde{x})$ is equal to its maximum if and only if $i \notin \mathcal{I}_1$. In other words, a price $v_i$ maximizes the producer utility for this market if and only if $i \notin \mathcal{I}_1$. 
However, in the definition of $\mathcal{S}'$, we use the pricing strategy corresponding to the private case, i.e., we look at the sets $\mathcal{X}_k^\beta$'s. Now, we claim that, for sufficiently small $\epsilon$, $\tilde{x} \in \mathcal{X}_i^\beta$ for some $i \in \mathcal{I}_1$. In other words, for sufficiently small $\epsilon$, we end up choosing a price $v_i$ with $i \in \mathcal{I}_1$ for the market $\tilde{x}$, although it would lead to a lower utility compared to any price $v_j$ with $j \notin \mathcal{I}_1$. 

To show this claim holds, it suffices to establish the following claim and then use the continuity of utilities in $\epsilon$. 
\begin{claim}
For $\epsilon = 0$, any price $v_i$ with  $i \in \mathcal{I}_1$ is strictly preferred over any price $v_j$ with $j \notin \mathcal{I}_1$ if we use the pricing strategy corresponding to parameter $\beta > 0$.  
\end{claim}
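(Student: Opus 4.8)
The plan is to reduce the claim to a single algebraic comparison, exploiting the fact that $\tilde{x}$ was engineered precisely so that the non-private revenue is flat across all prices. First I would recall the closed form for the producer's expected revenue under the privacy mechanism, derived as equation \eqref{eqn:lemma_X_i_2} in the proof of \cref{lemma:bar_beta}:
\begin{equation*}
\mathcal{U}_p(v_i | \hat{x}) = v_i \left( (1-\beta) \sum_{k=i}^K \hat{x}(v_k) + \beta \, \frac{K+1-i}{K} \right),
\end{equation*}
and then specialize it to the observed market $\hat{x} = \tilde{x}$ at $\epsilon = 0$.

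The crucial observation is that, by construction, $\sum_{k=i}^K \tilde{x}(v_k) = Z/v_i$ for every $i \in [K]$ once $\epsilon = 0$ (the perturbation on indices in $\mathcal{I}_1$ disappears). Substituting this, the term $v_i (1-\beta) \sum_{k=i}^K \tilde{x}(v_k)$ collapses to the constant $(1-\beta)Z$, independent of $i$, leaving
\begin{equation*}
\mathcal{U}_p(v_i | \tilde{x}) = (1-\beta) Z + \frac{\beta}{K} (K+1-i) v_i \qquad \text{at } \epsilon = 0.
\end{equation*}
Thus, for $\beta > 0$, the comparison between any two prices is governed entirely by the quantity $(K+1-i) v_i$.

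To finish, I would invoke the definition of $\mathcal{I}_1$ as the set of indices maximizing $(K+1-i) v_i$, i.e., the prices that are optimal in the fully private ($\beta \to 1$) regime. For any $i \in \mathcal{I}_1$ and $j \notin \mathcal{I}_1$,
\begin{equation*}
\mathcal{U}_p(v_i | \tilde{x}) - \mathcal{U}_p(v_j | \tilde{x}) = \frac{\beta}{K} \Big[ (K+1-i) v_i - (K+1-j) v_j \Big] > 0,
\end{equation*}
since $\beta > 0$ and a non-maximizer yields a strictly smaller value of $(K+1-i) v_i$ than a maximizer. This is exactly the strict preference asserted, and it is what forces $\tilde{x} \in \mathcal{X}_i^\beta$ for some $i \in \mathcal{I}_1$.

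I do not expect a real obstacle: the work is a direct substitution, and the whole point of the construction of $\tilde{x}$ is to equalize all non-private revenues at $Z$ so that the privacy correction $\frac{\beta}{K}(K+1-i) v_i$ becomes the sole tie-breaker. The only mild care needed is to confirm that $Z$ can be chosen to make $\tilde{x}$ a genuine distribution for small $\epsilon$, which holds because $(v_i)_{i=1}^K$ is strictly increasing; this is part of the surrounding construction rather than the claim. The extension from $\epsilon = 0$ to small $\epsilon > 0$ then follows from continuity of $\mathcal{U}_p(v_i | \tilde{x})$ in $\epsilon$, as invoked just before the claim.
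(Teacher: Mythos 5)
Your proposal is correct and follows essentially the same route as the paper's own proof: specialize the expected-revenue formula from the proof of \cref{lemma:bar_beta} to $\tilde{x}$ at $\epsilon=0$, observe that the construction makes all non-private revenues equal (to $Z$), and conclude that the comparison is decided solely by the $\frac{\beta}{K}(K+1-i)v_i$ term, which is strictly maximized exactly on $\mathcal{I}_1$. Your write-up is merely more explicit than the paper's (which states the equality of first terms and the ordering of second terms without displaying the collapse to $(1-\beta)Z$), and your reading of $\mathcal{I}_1$ as $\argmax_k (K+1-k)v_k$ is the intended one.
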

\begin{proof}
To prove this, we need to show that
\begin{equation*}
(1-\beta) \sum_{k=i}^K \tilde{x}(v_k) v_i + \beta \frac{K+1-i}{K} v_i 
> 
(1-\beta) \sum_{k=j}^K \tilde{x}(v_k) v_j + \beta \frac{K+1-j}{K} v_j 
\end{equation*}
Note that the first term on the left and right hand side are equal when $\epsilon=0$. Now the second term on the left hand side is bigger than the second term on the right hand side given that $i \in \mathcal{I}_1$ and $j \notin \mathcal{I}_1$.
\end{proof}
 
Now, consider a segmentation with two segments: the first segment's market is $\tilde{x}$ and it consists $\delta$ fraction of consumers and the second segment's market is $\frac{x^* - \delta \tilde{x}}{1-\delta}$ which consists the remaining $1-\delta$ fraction of consumers. For sufficiently small $\delta$, the optimal price for the second segment would be sill in $\mathcal{I}_\beta$ given that the market is perturbed minimally. Let us denote that optimal price as $v_{j^*}$ with $j^* \in \mathcal{I}_\beta$. 

This implies $j^* \notin \mathcal{I}_1$. Therefore, as we elaborated above, this would mean that the price picked for the first segment would lead to a lower utility compared to the case that we would have picked $v_{j^*}$. Given that we pick $v_{j^*}$ for the second segment, the overall producer utility corresponding to this segmentation would be strictly lower than $\mathcal{U}_p(v_{j^*}, x^*)$.

Recall that, as we discussed above, we can assume $\mathcal{I}_\beta \subseteq \mathcal{I}_0$ and hence $j^* \in \mathcal{I}_0$ as well. Thus, $\mathcal{U}_p(v_{j^*}, x^*)$ is in fact the utility corresponding to the line $TR$. Hence, the segmentation above corresponds to a point in the set $\mathcal{S}'$ which falls below the line $TR$. This completes the proof. 
%%%%%%%%%%%%%%%%%%%%%%%%%%%%
%%%%%%%%%%%%%%%%%%%%%%%%%%%%
\subsection{Proof of Fact \ref{fact:distance_y_axis}}
Notice that $\beta > \bar{\beta}_K$ implies that no segment's price is going to be set equal to $v_K$. Hence, the consumers with value $v_k$ will face a a price upper bounded by $v_{K-1}$. As a result, the consumer utility is lower bounded by 
\begin{equation*}
x^*(v_K) (v_K - v_{K-1}).    
\end{equation*}
%%%%%%%%%%%%%%%%%%%%%%%%%%
%%%%%%%%%%%%%%%%%%%%%%%%%%
\subsection{Proof of \cref{proposition:consumer_max}}
For any $\beta$, let $\rho(\beta)$ denote the maximum consumer utility in the set $\mathcal{S}'$. Using Theorem \ref{theorem:S'_polytope}, we have
\begin{subequations} \label{eqn:max_consumer_LP}
\begin{align} 
\rho(\beta) &= \max_{\bm{z}} \sum_{i=1}^{K-1} \sum_{j=i+1}^{K} (v_{j}-v_{j-1})z(i,j) \label{eqn:max_consumer_LP_a} \\   
\text{s.t.} \quad  & z(i,1) \geq \cdots \geq z(i,K) \geq 0 \text{ for all } i \in [K], \label{eqn:max_consumer_LP_b} \\
& z(i,i) v_i - z(i,j) v_j \geq \frac{\beta}{K(1-\beta)} z(i,1)\Big ( (K+1-j) v_j - (K+1-i) v_i \Big ) \text{ for all } i,j \in [K], \label{eqn:max_consumer_LP_c} \\
& \sum_{i=1}^K z(i,j) = \sum_{k=j}^K x^*(v_k) \text{ for all } j \in [K]. \label{eqn:max_consumer_LP_d}
\end{align}
\end{subequations} 
Notice that, as long as $\beta < \min_{k} \bar{\beta}_k$, all the constraints \eqref{eqn:max_consumer_LP_c} are strict when the vector $\bm{z}$ is set equal to the segmentation in which all consumers with the same value go into one segment. Therefore, we can use Corollary 5 from \cite{milgrom2002envelope} to deduce that  the function $\rho(\cdot)$ is absolutely continuous (and hence differentiable almost everywhere). Therefore, we have
\begin{equation} \label{eqn:integral_rho}
\rho(\beta) = \rho(0) + \int_{0}^\beta \rho'(\zeta) d \zeta,    
\end{equation}
where $\rho'(\cdot)$ denotes the derivative of $\rho(\cdot)$ (when it exists). 
Next, we make the following claim:
\begin{claim} \label{claim:derivative_rho}
Suppose $\rho(\cdot)$ is differentiable at some $\beta$. Then, condition \eqref{eqn:increasing_uniform} implies $\rho'(\beta) \leq 0$ and condition \eqref{eqn:decreasing_uniform} implies $\rho'(\beta) \geq 0$.     
\end{claim}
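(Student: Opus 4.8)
The plan is to treat $\rho(\beta)$ as the value of the parametric linear program \eqref{eqn:max_consumer_LP}, in which the privacy level enters only through the scalar $\lambda(\beta) := \frac{\beta}{K(1-\beta)}$ multiplying the coefficients in the family of constraints \eqref{eqn:max_consumer_LP_c}. Since $\lambda'(\beta) = \frac{1}{K(1-\beta)^2} > 0$, it suffices to determine the sign of $\tfrac{d\rho}{d\lambda}$. Writing each constraint \eqref{eqn:max_consumer_LP_c} as $g_{ij}(\bm z,\lambda) := z(i,i)v_i - z(i,j)v_j - \lambda\, z(i,1)\big((K+1-j)v_j - (K+1-i)v_i\big) \ge 0$ and letting $\mu_{ij}\ge 0$ denote the associated dual multipliers, the envelope theorem (Corollary 5 of \cite{milgrom2002envelope}, already invoked above) gives, at a point of differentiability,
\[
\rho'(\beta) = \lambda'(\beta)\sum_{i,j}\mu_{ij}\,\frac{\partial g_{ij}}{\partial\lambda} = -\,\lambda'(\beta)\sum_{i,j}\mu_{ij}\,z^\star(i,1)\big((K+1-j)v_j - (K+1-i)v_i\big),
\]
where $\bm z^\star$ is an optimal solution. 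Because $\lambda'(\beta)>0$ and $z^\star(i,1)\ge 0$, the sign of $\rho'(\beta)$ is opposite to the sign of the $\mu$-weighted sum of the differences $(K+1-j)v_j-(K+1-i)v_i$ taken over the \emph{active} constraints of \eqref{eqn:max_consumer_LP_c}.

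The heart of the argument is then a structural claim about which constraints in \eqref{eqn:max_consumer_LP_c} can be active at a consumer-utility-maximizing solution: I would show that one may restrict attention to optimal $\bm z^\star$ for which only the \emph{upward} comparisons bind, i.e.\ $\mu_{ij}>0$ only when $j>i$. The intuition driving the proof is that maximizing consumer surplus loads each low-priced segment with as many high-value consumers as feasible; for such a segment, assigned price $v_i$, the only price the producer is tempted to deviate to is a \emph{higher} one $v_j$ with $j>i$, since lowering the price to $v_j$ with $j<i$ only forfeits revenue from the high-value mass and is therefore strictly suboptimal. Concretely, I would establish this by an exchange argument on the segmentation realizing $\mathcal{S}'$ (\cref{theorem:S'_polytope}): starting from any optimum, I would argue that a tight downward comparison can be relaxed by a volume-preserving perturbation of $\bm z^\star$ that weakly increases the objective, so that some optimal solution has all downward constraints slack and hence $\mu_{ij}=0$ for $j<i$.

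Granting this structural fact, the conclusion follows immediately from the monotonicity of the uniform-prior utility $(K+1-k)v_k$ encoded in the hypotheses. Under condition \eqref{eqn:increasing_uniform} this utility is nondecreasing in $k$, so for every active pair $j>i$ we have $(K+1-j)v_j-(K+1-i)v_i\ge 0$; the weighted sum is then nonnegative and $\rho'(\beta)\le 0$. Under condition \eqref{eqn:decreasing_uniform} the same utility is nonincreasing, so $(K+1-j)v_j-(K+1-i)v_i\le 0$ for $j>i$, the weighted sum is nonpositive, and $\rho'(\beta)\ge 0$, as claimed.

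I expect the main obstacle to be the structural claim of the second paragraph---proving rigorously that downward price comparisons are never binding at a consumer-maximizing solution. This requires carefully exploiting the ordering constraints \eqref{eqn:max_consumer_LP_b} together with the aggregate-market constraints \eqref{eqn:max_consumer_LP_d} to construct the relaxing perturbation while preserving feasibility of all remaining constraints; the subtlety is that a single perturbation of $z(i,\cdot)$ couples the objective, the complementary-slackness pattern, and the equality constraints simultaneously. The envelope reduction and the final sign bookkeeping are, by contrast, routine.
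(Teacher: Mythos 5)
Your proposal follows essentially the same route as the paper's proof: apply the envelope theorem (Corollary 5 of \cite{milgrom2002envelope}) to the parametric LP \eqref{eqn:max_consumer_LP} to get $\rho'(\beta)$ as a multiplier-weighted sum of the terms $(K+1-j)v_j-(K+1-i)v_i$ over constraints \eqref{eqn:max_consumer_LP_c}, argue that positive multipliers can occur only on upward comparisons $j>i$, and conclude by the monotonicity of $(K+1-k)v_k$ under \eqref{eqn:increasing_uniform} or \eqref{eqn:decreasing_uniform}. The step you flag as the main obstacle is precisely the one the paper dispatches in a single line: by complementary slackness $\lambda_{i,j}>0$ forces constraint $(i,j)$ to be active, i.e.\ the producer is indifferent between $v_i$ and $v_j$ for that segment, and a consumer-utility-maximizing solution must break such ties toward the lower price, giving $i<j$ --- the same tie-breaking/exchange intuition underlying your proposed volume-preserving perturbation.
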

\begin{proof}
We use the Envelope theorem to characterize the derivative of $\rho(\cdot)$. Let $\lambda_{i,j}$ denote the Lagrangian multiplier corresponding to the condition \eqref{eqn:max_consumer_LP_c}. Then, we have
\begin{equation} \label{eqn:envelope_1}
\rho'(\beta) = -\frac{d}{d \beta} \frac{\beta}{1-\beta}
\Big[ \sum_{i,j} \lambda_{i,j} \Big( (K+1-j)v_j - (K+1-i)v_i \Big) z^*_\beta(i,1) 
\Big],
\end{equation}
where $\bm{z}^*_\beta$ is the solution of \eqref{eqn:max_consumer_LP}. 
Next, using the Karush–Kuhn–Tucker conditions \cite{bertsekas1997nonlinear}, we know that (i) $\lambda_{i,j} \geq 0$, and (ii) $\lambda_{i,j} >0$ if and only if the condition \eqref{eqn:max_consumer_LP_c} is active, i.e.,
\begin{equation} \label{eqn:active_constraint}
z^*_\beta(i,i) v_i - z^*_\beta(i,j) v_j = \frac{\beta}{K(1-\beta)} z^*_\beta(i,1)\Big ( (K+1-j) v_j - (K+1-i) v_i \Big ).   
\end{equation}
Next, recall from the proof of Theorem \ref{theorem:S'_polytope} that $z^*_\beta(i, \cdot)$ corresponds to a market that is priced $v_i$. Now, if \eqref{eqn:active_constraint} holds for some $i$ and $j$, it means that producer has been indifferent between $v_i$ or $v_j$ as the market price, but $v_i$ has been chosen. Since $\bm{z}^*_\beta$ is the solution to the consumer maximization problem, this would mean that $v_i < v_j$, or equivalently, $i <j$. Therefore, $\lambda_{i,j}$ is either positive or zero, and we have $\lambda_{i,j} >0$ only if $j>i$.

Notice that condition \eqref{eqn:increasing_uniform} implies $(K+1-j)v_j \geq (K+1-i)v_i$ when $j \geq i$. Using this result, along with \eqref{eqn:envelope_1} and the fact that $\beta/(1-\beta)$ is an increasing function of $\beta$,  we obtain $\rho'(\beta) \leq 0$. Similarly, we can see that, under condition \eqref{eqn:decreasing_uniform}, we have $\rho'(\beta) \geq 0$. This completes the proof of the claim.
\end{proof}

Let us turn our focus to the maximum consumer utility. As Proposition \ref{proposition:S_to_S'_K} establishes, for a given $\beta$, the maximum consumer utility is given by
\begin{equation}
\beta \Big( \sum_{k=1}^K \mathbb{P}(\mathcal{X}_k^\beta) ~\mathcal{U}_c(v_k, x^*) \Big) + (1-\beta) \rho(\beta). 
\end{equation}
In particular, for the non-private case, this is equal to $\rho(0)$. Claim \ref{claim:derivative_rho} along with \eqref{eqn:integral_rho} implies that, under condition \eqref{eqn:increasing_uniform} we have $\rho(\beta) \leq \rho(0)$ and under condition \eqref{eqn:decreasing_uniform} we have $\rho(\beta) \geq \rho(0)$. Therefore, to establish \cref{proposition:consumer_max}, it suffices to compare $\sum_{k=1}^K \mathbb{P}(\mathcal{X}_k^\beta) ~\mathcal{U}_c(v_k, x^*)$ with $\rho(0)$. The following claim does this, and therefore, completes the proof. 
\begin{claim} \label{claim:comparison}
\begin{enumerate}[(i)]
\item Assume $v_1$ is the optimal uniform price in the non-private case. Then, we have
\begin{equation*}
\rho(0) \geq \sum_{k=1}^K \mathbb{P}(\mathcal{X}_k^\beta) ~\mathcal{U}_c(v_k, x^*).
\end{equation*}
%%%%%
\item Assume $v_K$ is the optimal uniform price in the non-private case. Then, there exists $\tilde{\alpha}$ and $\tilde{\beta} > 0$ such that, if $x^*(v_K) \geq \tilde{\alpha}$ and $\beta \leq \tilde{\beta}$, then we have 
\begin{equation*}
\rho(0) \leq \sum_{k=1}^K \mathbb{P}(\mathcal{X}_k^\beta) ~\mathcal{U}_c(v_k, x^*).
\end{equation*}
\end{enumerate} 
\end{claim}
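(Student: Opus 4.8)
The plan is to reduce both parts to two elementary facts about the map $k \mapsto \mathcal{U}_c(v_k, x^*)$ together with the Bergemann identity $\rho(0) = W^* - \pi^*$, where $W^* := \sum_k x^*(v_k) v_k$ is the efficient surplus and $\pi^* := \max_k \mathcal{U}_p(v_k, x^*)$ is the uniform monopoly profit. I would first record that in the non-private case realized surplus decomposes as $U_c + U_p$, is bounded above by $W^*$, while $U_p \geq \pi^*$; hence the consumer-optimal vertex $R$ attains $\rho(0) = W^* - \pi^*$. I would also record the monotonicity $\mathcal{U}_c(v_{k}, x^*) - \mathcal{U}_c(v_{k+1}, x^*) = (v_{k+1} - v_k)\sum_{j > k} x^*(v_j) \geq 0$, so $k \mapsto \mathcal{U}_c(v_k, x^*)$ is nonincreasing, and the boundary values $\mathcal{U}_c(v_1, x^*) = W^* - v_1$ and $\mathcal{U}_c(v_K, x^*) = 0$.

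For part (i), when $v_1$ is the optimal uniform price we have $\pi^* = \mathcal{U}_p(v_1, x^*) = v_1$, so $\rho(0) = W^* - v_1 = \mathcal{U}_c(v_1, x^*)$, which by the monotonicity above is the largest of all the $\mathcal{U}_c(v_k, x^*)$. Since $\sum_k \mathbb{P}(\mathcal{X}_k^\beta) = 1$, the quantity $\sum_k \mathbb{P}(\mathcal{X}_k^\beta)\mathcal{U}_c(v_k, x^*)$ is a convex combination of the $\mathcal{U}_c(v_k, x^*)$ and hence at most their maximum $\rho(0)$. This gives the claimed inequality with no further work.

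For part (ii), when $v_K$ is the optimal uniform price we have $\pi^* = v_K x^*(v_K)$, and the $v_K$-terms cancel to give the clean identity $\rho(0) = W^* - v_K x^*(v_K) = \sum_{j < K} x^*(v_j) v_j$, which I would bound above by $v_{K-1}\bigl(1 - x^*(v_K)\bigr)$. For the shift term I would discard the (zero) $k=K$ summand and bound each remaining term below using $\mathcal{U}_c(v_k, x^*) \geq x^*(v_K)(v_K - v_k) \geq x^*(v_K)(v_K - v_{K-1})$ for $k \le K-1$, yielding
\begin{equation*}
\sum_{k=1}^K \mathbb{P}(\mathcal{X}_k^\beta)\mathcal{U}_c(v_k, x^*) \geq x^*(v_K)(v_K - v_{K-1})\bigl(1 - \mathbb{P}(\mathcal{X}_K^\beta)\bigr).
\end{equation*}
It then remains to keep $1 - \mathbb{P}(\mathcal{X}_K^\beta)$ bounded away from zero: since $K \geq 2$, near the vertex where all mass sits on $v_1$ the price $v_K$ is strictly dominated, so $\mathbb{P}(\mathcal{X}_K^0) < 1$; by continuity of this volume in $\beta$ there are $\tilde\beta > 0$ and $c > 0$ with $1 - \mathbb{P}(\mathcal{X}_K^\beta) \geq c$ for all $\beta \le \tilde\beta$. (Under condition \eqref{eqn:decreasing_uniform}, $(K+1-j)v_j$ is nonincreasing in $j$, so larger $\beta$ only shrinks $\mathcal{X}_K^\beta$; one may then take $c = 1 - \mathbb{P}(\mathcal{X}_K^0)$ for any $\tilde\beta$.) Comparing the two bounds, the desired inequality holds once $c\, x^*(v_K)(v_K - v_{K-1}) \geq v_{K-1}(1 - x^*(v_K))$, i.e. once $x^*(v_K) \geq \tilde\alpha := v_{K-1}/\bigl(v_{K-1} + c(v_K - v_{K-1})\bigr) < 1$; for $K=2$ with $c = \eta$ this recovers $\tilde\alpha = 1/(2-\eta)$.

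I expect the main obstacle to be the uniformity in $\beta$ in part (ii): the single pair $(\tilde\alpha, \tilde\beta)$ must be fixed before $\beta$ is allowed to range, so I need a lower bound on $1 - \mathbb{P}(\mathcal{X}_K^\beta)$ that is uniform over $[0,\tilde\beta]$ rather than merely pointwise at $\beta=0$. The continuity (or, under \eqref{eqn:decreasing_uniform}, monotonicity) of the region volume $\mathbb{P}(\mathcal{X}_K^\beta)$ in $\beta$ is exactly what secures this; everything else is the elementary algebra of $W^*$, $\pi^*$, and the convex-combination bound.
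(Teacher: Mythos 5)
Your proof is correct and follows essentially the same route as the paper's: part (i) bounds the shift term as a convex combination of $\mathcal{U}_c(v_k,x^*)$ values each at most $\rho(0)=M-v_1$, and part (ii) pairs the upper bound $\rho(0)\le v_{K-1}\bigl(1-x^*(v_K)\bigr)$ with a lower bound on the shift term that is kept uniformly positive over $\beta\le\tilde\beta$ by continuity of the volumes $\mathbb{P}(\mathcal{X}_k^\beta)$, then lets $x^*(v_K)$ be large. Your refinements---the monotonicity framing of $k\mapsto\mathcal{U}_c(v_k,x^*)$, the explicit constant $\tilde\alpha = v_{K-1}/\bigl(v_{K-1}+c(v_K-v_{K-1})\bigr)$ recovering $1/(2-\eta)$ at $K=2$, and the monotonicity of $\mathcal{X}_K^\beta$ under condition \eqref{eqn:decreasing_uniform}---are sound but do not change the underlying argument.
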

\begin{proof}
Let us denote the maximum producer utility (corresponding to the first-degree price discrimination) by $M$. We know that the maximum consumer utility in the non-private case lies on the line that sets the sum of consumer and producer utilities equal to $M$. Therefore, when $v_1$ is the optimal uniform price, we have $\rho(0) = M-v_1$. Also, notice that for any value $v_k$, we have
\begin{equation}
\mathcal{U}_c(v_k, x^*) = \sum_{i=k}^K x^*(v_i) (v_i - v_k) 
\leq \sum_{i=k}^K x^*(v_i) (v_i - v_1)
\leq \sum_{i=1}^K x^*(v_i) (v_i - v_1) = M - v_1.
\end{equation}
Thus, and given that $\sum_{k=1}^K \mathbb{P}(\mathcal{X}_k^\beta) = 1$, we deduce that
\begin{equation*}
\sum_{k=1}^K \mathbb{P}(\mathcal{X}_k^\beta) ~\mathcal{U}_c(v_k, x^*) \leq M - v_1 
\end{equation*}
which completes the proof of part (i). Next we prove part (ii). Given that, in this part, $v_K$ is the optimal uniform price in the non-private case, we have
\begin{equation} \label{eqn:consumer_max_ii_1}
\rho(0) = M - x^*(v_K) v_K \leq (1-x^*(v_K)) v_{K-1}.    
\end{equation}
Also, note that, for any $k \leq K-1$ we have
\begin{equation}
\mathcal{U}_c(v_k, x^*) \geq x^*(v_K) (v_K - v_k),    
\end{equation}
which implies
\begin{align} \label{eqn:consumer_max_ii_2}
\sum_{k=1}^K \mathbb{P}(\mathcal{X}_k^\beta) ~\mathcal{U}_c(v_k, x^*) 
\geq x^*(v_K) \Big[ \sum_{k=1}^{K-1} \mathbb{P}(\mathcal{X}_k^\beta)  (v_K - v_k) \Big].
\end{align}
The term $ \sum_{k=1}^{K-1} \mathbb{P}(\mathcal{X}_k^\beta)  (v_K - v_k)$ is positive for $\beta=0$, and since it is continuous in $\beta$, we can choose $\tilde{\beta} > 0$ such that 
\begin{equation}
\inf_{\beta \leq \tilde{\beta}} \Big[ \sum_{k=1}^{K-1} \mathbb{P}(\mathcal{X}_k^\beta)  (v_K - v_k) \Big]  \geq \epsilon,     
\end{equation}
for some $\epsilon > 0$. Therefore, for any $\beta \leq \tilde{\beta}$, we have
\begin{align} \label{eqn:consumer_max_ii_3}
\sum_{k=1}^K \mathbb{P}(\mathcal{X}_k^\beta) ~\mathcal{U}_c(v_k, x^*) 
\geq \epsilon~ x^*(v_K).
\end{align}

Now, notice that the right-hand side of \eqref{eqn:consumer_max_ii_1} goes to zero as $x^*(v_K)$ increases to one. On the other hand, the right-hand side of \eqref{eqn:consumer_max_ii_3} is an increasing and positive function of $x^*(v_K)$. Therefore, there exists $\tilde{\alpha}$ such that, if $x^*(v_K) \geq \tilde{\alpha}$, we have
\begin{equation}
\sum_{k=1}^K \mathbb{P}(\mathcal{X}_k^\beta) ~\mathcal{U}_c(v_k, x^*) \geq \rho(0),    
\end{equation}
for any $\beta \leq \tilde{\beta}$. This completes the proof. 
\end{proof}
%%%%%%%%%%%%%%%%%%%%%%%%%%
%%%%%%%%%%%%%%%%%%%%%%%%%%
\end{document}